\newtheorem{lem}{Lemma}[section]
\newtheorem{claim}{Claim}[section]
\newtheorem{observation}{Observation}[section]
\newtheorem{thm}{Theorem}
\newtheorem*{thmapp}{Theorem}
\newtheorem{thmappsec}{Theorem}[section]
\newtheorem{defn}{Definition}
\theoremstyle{remark}
\newtheorem{rem}{Remark}
\newcommand{\comment}[1]{}
\newcommand{\AutoAdjust}[3]{\mathchoice{ \left #1 #2  \right #3}{#1 #2 #3}{#1 #2 #3}{#1 #2 #3} }
\newcommand{\Xcomment}[1]{{}}
\newcommand{\InBrackets}[1]{\AutoAdjust{[}{#1}{]}}
\newcommand{\Ex}[2][]{\operatorname{\mathbb E}_{#1}\InBrackets{#2}}
\newcommand{\Prx}[2][]{\operatorname{\mathbf{Pr}}_{#1}\InBrackets{#2}}
\def\prob{\Prx}
\newcommand{\infloor}[1]{\left\lfloor#1\right\rfloor}
\newcommand{\eqdef}{\stackrel{\textrm{def}}{=}}
\newcommand{\ind}[2][]{\mathds{1}_{\left[#1\right]}\left(#2\right)}
\newcommand{\vect}[1]{\ensuremath{\mathbf{#1}}}
\newcommand{\R}{\mathbb{R}}
\newcommand{\eps}{\varepsilon}
\newcommand{\be}{\begin{equation}}
\newcommand{\ee}{\end{equation}}
\newcommand{\bee}{\begin{equation*}}
\newcommand{\eee}{\end{equation*}}
\newcommand{\revenue}[0]{\mathrm{REV}}
\newcommand{\LW}{\mathrm{LW}}
\newcommand{\ncopy}{h}
\newcommand{\utili}[1][i]{u_{#1}}
\newcommand{\opt}{\mathrm{OPT}}
\newcommand{\val}{v}
\newcommand{\vali}[1][i]{\val_{#1}}
\newcommand{\vals}{\vect{\val}}
\newcommand{\valsmi}[1][i]{\vals_{\text{-}#1}}
\newcommand{\vij}[2][ij]{\val_{#1#2}}
\newcommand{\prices}{\vect{p}}
\newcommand{\pricejl}[2][]{{p_{#1}^{#2}}}
\newcommand{\rprices}{\overline{\vect{p}}}
\newcommand{\rpricei}[1][i]{{\overline{p}_{#1}}}
\newcommand{\rpricej}[1][j]{{\overline{p}_{#1}}}
\newcommand{\budget}{B}
\newcommand{\budgeti}[1][i]{B_{#1}}
\newcommand{\budgetsmi}[1][i]{\budget_{\text{-}#1}}
\newcommand{\bid}[2][]{b_{#1#2}}
\newcommand{\bidi}[1][i]{{b_{#1}}}
\newcommand{\bids}{\vect{b}}
\newcommand{\bidsmi}[1][i]{{\bids_{\text{-}#1}}}
\newcommand{\pbidi}[1][i]{b_{#1}'}
\newcommand{\abid}[2][]{b^{*}_{#1#2}}
\newcommand{\abidi}[1][i]{b_{#1}^*}
\newcommand{\alloc}{x}
\newcommand{\alloci}[1][i]{{\alloc_{#1}}}
\newcommand{\allocs}{\vect{x}}
\newcommand{\oalloc}{y}
\newcommand{\oalloci}[1][i]{{\oalloc_{#1}}}
\newcommand{\oallocs}{\vect{y}}
\newcommand{\varalloc}{z}
\newcommand{\strat}{s}
\newcommand{\strati}[1][i]{{\strat_{#1}}}
\newcommand{\strats}{\vect{\strat}}
\newcommand{\stratsmi}[1][i]{{\strats_{\text{-}#1}}}
\newcommand{\dist}{\mathcal{D}}
\newcommand{\disti}[1][i]{\dist_{#1}}
\newcommand{\distsmi}[1][i]{{\dists_{\text{-}#1}}}
\newcommand{\dists}{\vect{\dist}}
\newcommand{\pricedist}{\mathcal{F}}
\newcommand{\pricedists}{\vect{\pricedist}}
\newcommand{\good}{\mathcal{I}}
\newcommand{\notgood}{\overline{\mathcal{I}}}
\newcommand{\intapprox}[1]{\lfloor #1 \rfloor_{_\ncopy}}
\newcommand{\fracapprox}[1]{\left\{ #1 \right\}_{\ncopy}}
\newcommand{\sharej}[1][j]{#1^{\{\ell\}}}
\newcommand{\lpoa}{\textrm{LPoA}}
\begin{document}

\title{Liquid Price of Anarchy}
\author{
Yossi Azar\thanks{Tel Aviv University; \texttt{azar@tau.ac.il}, \texttt{mfeldman@tau.ac.il}, \texttt{alan.roytman@cs.tau.ac.il}}
\and
Michal Feldman\footnotemark[1]
\and
Nick Gravin\thanks{Microsoft Research; \texttt{ngravin@microsoft.com}}
\and
Alan Roytman\footnotemark[1]
}
\date{}

\maketitle

\thispagestyle{empty}

\begin{abstract}

Incorporating budget constraints into the analysis of auctions has become increasingly important, as they model practical settings more accurately.
The social welfare function, which is the standard measure of efficiency in auctions, is inadequate for settings with budgets, since there may be
a large disconnect between the value a bidder derives from obtaining an item and what can be liquidated from her.
The {\em Liquid Welfare} objective function has been suggested as a natural alternative for settings with budgets.
Simple auctions, like simultaneous item auctions, are evaluated by their performance at equilibrium using the Price of
Anarchy (PoA) measure -- the ratio of the objective function value of the optimal outcome to the worst equilibrium.
Accordingly, we evaluate the performance of simultaneous item auctions in budgeted settings by the
{\em Liquid Price of Anarchy} (LPoA) measure -- the ratio of the optimal Liquid Welfare to the Liquid Welfare obtained in the worst equilibrium.

Our main result is that the LPoA for mixed Nash equilibria is bounded by a constant when bidders are additive and items can be divided into sufficiently many discrete parts.
Our proofs are robust, and can be extended to achieve similar bounds for simultaneous second
price auctions as well as Bayesian Nash equilibria.
For pure Nash equilibria, we establish tight bounds on the LPoA for the larger class of fractionally-subadditive valuations.
To derive our results, we develop a new technique in which some bidders deviate (surprisingly) toward a non-optimal solution.  
In particular, this technique does not fit into the smoothness framework.

\end{abstract}


\section{Introduction}
Budget constraints have become an important practical consideration in most existing
auctions, as reflected in recent literature (see, e.g., \cite{BCMX10,DLN08,BCGL12,S10}), because they model reality more accurately.
The issue of limited liquidity of buyers arises when transaction amounts
are large and may exhaust bidders' liquid assets, as is the case for privatization auctions in
Eastern Europe and FCC spectrum auctions in the U.S. (see, e.g., \cite{BK01}).
As another example, advertisers in Google Adword auctions are instructed to specify their budget even before specifying their bids and keywords.
Many other massive electronic marketplaces have a large number of participants with limited liquidity, which impose budget constraints.
Buyers would not borrow money from a bank to partake in multiple auctions on eBay, and even with available credit, they only have
a limited amount of attention, so that in aggregate they cannot spend too much money by participating in every auction online.
Finally, budget constraints also arise in small scale systems, such as the reality TV show Storage Wars, where people participate in cash-only auctions to win the content of an expired storage locker with an unknown asset.
%


Maximizing social welfare is a classic objective function that has been extensively studied within the context of resource allocation problems, and auctions in particular.
The \emph{social welfare} of an allocation is the sum of agents' valuations for their allocated bundles.
Unfortunately, in settings where agents have limited budgets (hereafter, {\em budgeted settings}), the social welfare objective fails to accurately capture what happens in practice.
Consider, for example, an auction in which there are two bidders and one item to be allocated
among the bidders.  One bidder has a high value but a very small budget, while the second bidder has a medium value along with a medium budget.
In this case, a high social welfare is achieved by allocating the item to the
bidder who values the item highly.
In contrast, most Internet advertising and electronic marketplaces (such as Google and eBay)
would allocate the item in the opposite way, namely to the bidder with a medium value and budget due to monetary constraints.
Hence, the social welfare objective is a poor model for how auctions are executed in reality.
Indeed, it seems reasonable to favor participants with substantial investments and engagement in the economical system to maintain a healthy economy.


Following Dobzinski and Paes Leme~\cite{DL14}, we measure the efficiency of outcomes in budgeted settings according to their {\em Liquid Welfare}
objective, motivated as follows.
In the absence of budgets, the value a buyer obtains from a given bundle captures their {\em willingness-to-pay} for the bundle.
According to this interpretation, the social welfare objective captures the maximum
revenue a seller can extract from buyers in a non-strategic setting.
In budgeted settings, however, the value a buyer receives from a bundle no longer captures how much revenue
can be extracted from them, since the revenue that can be extracted from a buyer is bounded by both the buyer's value and the buyer's budget.
To reconcile this discrepancy, Dobzinski and Paes Leme~\cite{DL14} proposed to evaluate the welfare of buyers according to their
{\em admissibility-to-pay}; that is, the minimum between the buyer's value for the allocated bundle and the buyer's budget. The
aggregate welfare according to this definition is termed the {\em Liquid Welfare} (LW). Indeed, the
Liquid Welfare is exactly the revenue a seller can extract from buyers in budgeted, non-strategic settings.

%


In this work we study the efficiency of simple (non incentive compatible) auctions in budgeted settings.
The standard measure for quantifying the efficiency of simple auctions is the {\em Price of Anarchy} (PoA)~\cite{KP99,R09,ST13},
defined as the ratio of the optimal social welfare to the social welfare of the worst equilibrium.
In budgeted settings, it is thus natural to quantify the efficiency of simple auctions by the {\em Liquid Price of Anarchy} (LPoA),
defined as the ratio of the optimal Liquid Welfare to the Liquid Welfare of the worst equilibrium.

A prominent auction format, which has been extensively studied recently,
is the simultaneous item auction setting.
In such auctions, buyers submit bids simultaneously on all items, and the allocation and prices are determined separately for each
individual item, based only on the bids submitted for that item.
This format is similar to auctions used in practice (e.g., eBay auctions).
In the long line of works of Christodoulou et al.~\cite{CKS08},
Bhawalkar and Roughgarden~\cite{BR11}, Hassidim et al.~\cite{HKMN11}, Syrgkanis and Tardos~\cite{ST13}, and Feldman et al.~\cite{FFGL13},
it was shown that these simple auctions have nearly optimal social efficiency guarantees for a broad range of equilibrium concepts when
buyers have complement-free valuations but are not limited by budgets.


The most common framework for analyzing the Price of Anarchy of games and auctions is the {\em smoothness} framework (see, e.g., \cite{R09,ST13}).
Such techniques usually involve a thought experiment in which each player deviates toward some strategy related to the optimal solution, and hence
the total utility of all players can be bounded appropriately.
One important and necessary condition for applying the smoothness
framework is that the objective function dominate the sum of utilities (which holds for social welfare).
However, this technique falls short in the case of Liquid Welfare, since a bidder's utility can be
arbitrarily higher than their value, and in aggregate, bidders may achieve a total utility that is much larger than the Liquid Welfare at equilibrium.  
To overcome this issue, we develop new techniques to bound the LPoA in budgeted settings.
Our techniques include a novel type of hypothetical deviation that is used to {\em upper bound} the aggregate utility of bidders (in addition to the traditional deviation that is used to lower bound it), and the consideration of a special set of carefully chosen bidders to engage in these hypothetical deviations (see more details in the ``Our Techniques" section below).
To the best of our knowledge, most prior techniques, including those that depart from the smoothness
framework (e.g., \cite{FFGL13}), examine the utility derived when \emph{every} player deviates toward the optimal solution.

With our new techniques at hand, we address the following question:
\emph{What is the Liquid Price of Anarchy of simultaneous item auctions in settings with budgets?}

\subsection*{Our Contributions}

We show that simultaneous item auctions achieve nearly optimal performance (i.e., a constant Liquid Price of Anarchy)
in many cases of interest.

Our main result concerns the case in which agent valuations are additive (i.e., agent $i$'s value for item $j$ is $\vij{}$ and the
value for a set of items is the sum of the individual valuations).



\vspace{0.1in}

\noindent \textbf{Main theorem:} For simultaneous item auctions (first and second price) with additive bidders, the LPoA
with respect to mixed Nash equilibria and Bayesian Nash equilibria is constant. This assumes a divisible model where items can be divided into
sufficiently many discrete parts.

\vspace{0.1in}

\noindent We also show that for pure Nash equilibria, our results hold for more general settings.

\vspace{0.1in}

\noindent {\bf Theorem:}  For simultaneous item auctions (first and second price) with fractionally-subadditive bidders, the LPoA of pure Nash equilibria is 2, even in the indivisible model. This is tight.

\vspace{0.1in}


\noindent The following remarks are in order:
\vspace*{-3pt}
\begin{enumerate}
\setlength{\itemsep}{1pt}
\setlength{\parskip}{0pt}
\setlength{\parsep}{0pt}
\item 
In settings without budgets, simultaneous item auctions reduce to $m$ independent auctions (where $m$ is the number of items). In contrast, when agents have budget constraints, the separate auctions exhibit non-trivial dependencies even under additive valuations.
\item Our main result requires that every item can be divided into at least $\Omega(n)$ parts (where $n$ is the number of agents). If items can only be divided into a sublinear number of parts, then the LPoA is super constant.
\item Our LPoA result for pure Nash equilibria (in the indivisible model) holds for deterministic tie-breaking rules. Surprisingly, if the tie-breaking rule
is randomized, then the LPoA becomes linear in $n$ (even if agents play pure strategies and have additive valuations).
\end{enumerate}
\vspace*{-2pt}

\subsection*{Our Techniques}



The most common framework for analyzing the Price of Anarchy of games and auctions is the {\em smoothness} framework (see, e.g., \cite{R09,ST13}).
One important and necessary condition for applying the smoothness framework is that the objective
function dominate the sum of utilities, which clearly holds in the typical case of the social welfare objective,
but not the Liquid Welfare objective.

To overcome this obstacle, we introduce two new ideas. (1) In addition to deriving a lower bound
on the sum of bidders' utilities (following the traditional {\em deviations-towards-the-optimum} technique),
we also derive an {\em upper bound} on their utility as a function of the Liquid Welfare, using a novel {\em boosting deviation},
in which bidders bid more aggressively on items they receive in equilibrium. (2) Instead of summing the utility across all bidders,
we consider the utility derived from a carefully selected set of bidders.

Our analysis can be summarized by the following inequality:
let $\LW(\bids)$ denote the expected Liquid Welfare of a bid profile $\bids$, $\opt$ denote the optimal Liquid Welfare, and $\utili(\bids)$ denote the
expected utility of bidder $i$ under a bid profile $\bids$. For any equilibrium $\bids$, there exists a set of bidders $S$ and constants $c_1<1$ and $c_2>1$, such that
\begin{equation*}\label{eq:technique}
c_1 \cdot \opt \leq \sum_{i \in S}\utili(\bids) \leq c_2 \cdot \LW(\bids),
\end{equation*}
where the left inequality follows from the traditional {\em deviations-towards-the-optimum} technique,
and the right inequality follows from the new {\em boosting deviation} technique.


Syrgkanis and Tardos~\cite{ST13} also addressed the PoA of simple auctions in settings with budgets.
They showed that the {\em social welfare} (SW) at equilibrium is at least a constant fraction of the optimal {\em Liquid Welfare}.
One might be tempted to leverage their results for bounding the LPoA. This approach, however, is inadequate, since the
LW at equilibrium can be arbitrarily smaller than the SW at equilibrium, even if items can be divided into arbitrarily
many parts (e.g., if all budgets are small and values are large)\footnote{Note that the SW at equilibrium can be arbitrarily larger
than the optimal LW, so the ratio studied by Syrgkanis and Tardos~\cite{ST13} can be either smaller or greater than 1. Our results imply that
whenever the LW at equilibrium is more than a constant factor smaller than the SW at equilibrium, it must be the case that the
optimal LW is also more than a constant factor smaller than the SW at equilibrium.}.
For this reason, the LPoA bounds we establish carry over to their setting, but not vice versa\footnote{Note, however, that in some sense our results and those of~\cite{ST13} are incomparable.
Although the bounds we establish on the LPoA imply bounds according to their PoA measure, they achieved results for more general equilibrium concepts and
valuation functions in some settings.}.
Since the focus of \cite{ST13} was to bound the SW at equilibrium (as opposed to LW), they established their bounds using the smoothness framework. In particular, they developed a powerful composition framework, in which they first obtained results for single-item auctions, and then showed that such auctions compose well to obtain more general results with any number of players and items. Note that it is not clear whether the composition framework is applicable in our setting.

\subsection*{Related Work}

There is a vast literature in algorithmic game theory that incorporates budgets into the design
of incentive compatible mechanisms. The paper of~\cite{BCMX10} showed that, in the case of one infinitely divisible good,
the adaptive clinching auction is incentive compatible under some assumptions.
Moreover, the work of~\cite{S10} initiated the design of incentive compatible mechanisms
in the context of reverse auctions, where the payments of the auctioneer cannot exceed a hard budget constraint
(follow-up works include~\cite{DPS11, BCGL12, CGL11, AGN14}).
A great deal of work focused on designing incentive compatible mechanisms that approximately maximize the auctioneer's revenue
in various settings with budget-constrained bidders~\cite{LR96,MV08,PV08,BCIMS05,CMM11}.
Some works analyzed how budgets affect markets and non-truthful mechanisms~\cite{BK01,CG98}.

The earlier work~\cite{DLN08} on multi-unit auctions with budget constraints concerns the
design of incentive compatible mechanisms that always produce Pareto-optimal allocation. The results
in this line of work are mostly negative with a notable exception of mechanisms based on Ausubel's
adaptive clinching auction framework~\cite{A04}.



Some recent results concern the design of incentive compatible mechanisms with respect to the Liquid Welfare objective,
introduced by Dobzinski and Paes Leme~\cite{DL14}. They gave a constant approximation for the auction that sells a single divisible good to
additive buyers with budgets. In a follow-up work, Lu and Xiao~\cite{LX15} gave an $O(1)$-approximation for bidders with general valuations
in the single-item setting.




A large body of literature is concerned with simultaneous item bidding auctions.
These simple auctions have been studied from a computational perspective, including the papers of Cai and Papadimitriou~\cite{CP14}
and Dobzinski et al.~\cite{DFK15}.  There is also extensive work addressing the Price of Anarchy of such simple
auctions (see~\cite{RT07} for more general Price of Anarchy results).  The work of Christodoulou et al.~\cite{CKS08}
initiated the study of simultaneous item auctions within the Price of Anarchy framework.
The authors showed that, for second price auctions, the social welfare of every Bayesian Nash equilibrium
is a $2$-approximation to the optimal social welfare, even for players with fractionally-subadditive
valuation functions.  A large amount of follow-up work~\cite{HKMN11,BR11,BR12,FFGL13,ST13,DHS13,CKST13}
made significant progress in our understanding of simultaneous item auctions, but none
of these works measures inefficiency with respect to the Liquid Welfare objective.

The two most closely related works to ours are those of Syrgkanis and Tardos~\cite{ST13}, along with
Caragiannis and Voudouris~\cite{CV14}, which both take the Liquid Welfare objective into account when measuring the inefficiency
of equilibria. The work of~\cite{ST13} gave a variety of Price of Anarchy results,
focusing on the development of a smoothness framework for broad solution concepts such as correlated equilibria and
Bayesian Nash equilibria, and exploring composition properties of various mechanisms. They extended their results to the
setting where players are budget-constrained, achieving similar approximation guarantees when comparing the {\em social
welfare achieved at equilibrium} to the {\em optimal Liquid Welfare}. In particular, their results imply an
$\frac{e}{e-1}$-approximation for simultaneous first price auctions, and a $2$-approximation for
all-pay auctions and simultaneous second price auctions under the no-overbidding assumption.
While~\cite{ST13} show that the social welfare at equilibrium cannot be much worse than the optimal Liquid Welfare,
one should note that the social welfare at equilibrium can be arbitrarily better
than the optimal Liquid Welfare (e.g., if all budgets are small, the optimal Liquid Welfare is small).
It is useful to note that, in general, the ratio between the Liquid Welfare at equilibrium and the social
welfare at equilibrium can be arbitrarily bad (if all budgets are small, then the Liquid Welfare of any
allocation is small, while players' values for received goods can be arbitrarily large).

The paper of Caragiannis and Voudouris~\cite{CV14} also considered the scenario where players have budgets
and studied the same ratio we consider in this paper, namely the {\em Liquid Welfare at equilibrium} to the {\em optimal Liquid Welfare}.
They studied the proportional allocation mechanism, which concerns auctioning off one divisible
item proportionally according to the bids that players submit. They showed that, assuming players have
concave non-decreasing valuation functions,  the Liquid Welfare at coarse-correlated equilibria and Bayesian Nash equilibria
achieve at least a constant fraction of the optimal Liquid Welfare.
It should be noted that, for random allocations, they measure the benchmark at equilibrium ex-ante over the randomness of the
allocation, i.e., $\sum_{i=1}^n \min\{\Ex[\valsmi,\budgetsmi]{\vali(\alloci)},\budgeti\}$,
where $\vali$ is player $i$'s valuation, $\budgeti$ is player $i$'s budget, and $x_i$ denotes the allocation player $i$
receives. In contrast, for random allocations, we use the stronger ex-post measure of the expected Liquid Welfare at equilibrium given by
$\sum_{i=1}^n\Ex{\min\{\vali(\alloci),\budgeti\}}$.

\section{Model and Preliminaries}
\label{sec:prelim}

We consider {\em simultaneous item auctions}, in which $m$ heterogeneous items are sold to $n$ bidders (or players)
in $m$ independent auctions.  A bidder's \emph{strategy} is a bid vector $\bidi \in \R_{\ge 0}^m$, where
$\bid[i]{j}$ represents player $i$'s bid for item $j$.  We use $\bids$ to denote the bid profile
$\bids = (\bidi[1],\ldots,\bidi[n])$, and we will often use the notation $\bids = (\bidi, \bidsmi)$ to denote the
strategy profile where player $i$ bids $\bidi$ and the remaining players bid according to
$\bidsmi = (\bidi[1],\ldots,\bidi[i-1],\bidi[i+1],\ldots,\bidi[n])$.

The outcome of an auction consists of an allocation rule $\allocs$ and payment rule $\prices$.
The allocation rule $\allocs$ maps bid profiles to an allocation vector for each individual bidder $i$,
where $\alloci(\bids)=(\alloc_{i1},\dots,\alloc_{im})$ denotes the set of items won by player $i$.
In a \emph{simultaneous first price auction}, each item $j$ is allocated to the highest bidder
(breaking ties according to some rule) and the winner pays their bid. The total payment of bidder $i$ is
$p_i(\bids) = \sum_{j \in \alloci(\bids)}\bid[i]{j}.$

%

Each player $i$ has a \emph{valuation function} $\vali$, which maps sets of items to $\mathbb{R}_{\geq 0}$ ($\vali$
captures how much player $i$ values item bundles), and a budget $\budgeti$. We assume that all valuations are
normalized and monotone, i.e., $\vali(\emptyset) = 0$ and $\vali(S) \le \vali(T)$ for any $i\in[n]$ and $S \subseteq T \subseteq [m]$.
We mostly consider bidders with additive valuations, i.e., $\vali(S)=\sum_{j\in S}\vij{}$ (where $\vij{}$ denotes agent $i$'s value for item $j$).
The \emph{utility} $\utili(\alloci(\bids))$ of each player $i$ is $\vali(\alloci(\bids))- p_i(\bids)=\sum_{j}\vij{}\cdot\alloc_{ij}- p_i(\bids)$
if $p_i(\bids) \le \budgeti$; and $\utili(\alloci(\bids))=-\infty$ if $p_i(\bids)>\budgeti.$
Buyers select their bids strategically in order to maximize utility.

\paragraph{Share model:}
For our results beyond pure Nash equilibria with deterministic tie-breaking rules, we focus on bidders with
additive valuations and consider a {\em share model}, in which
item $j$ is divided into $\ncopy$ identical shares
and player $i$ values each share at $\frac{\vij{}}{\ncopy}$.
%


\begin{defn}[Pure Nash Equilibrium]
A bid profile $\bids$ is a Pure Nash Equilibrium (PNE) if, for any player $i$ and any deviating bid $\pbidi$:
$\utili(\bidi,\bidsmi) \ge \utili(\pbidi,\bidsmi).$
\end{defn}
\noindent A mixed Nash equilibrium is defined similarly, except that bidding strategies can be randomized $\bidi\sim\strati$
and utility is measured in expectation over the joint bid distribution $\strats=\strati[1]\times\dots\times\strati[n]$.
\begin{defn}[Mixed Nash Equilibrium]
A bid profile $\strats$ is a mixed Nash equilibrium if, for any player $i$ and any deviating bid $\pbidi$:
$\Ex[\bids\sim\strats]{\utili(\bidi,\bidsmi)} \ge \Ex[\bidsmi\sim\stratsmi]{\utili(\pbidi,\bidsmi)}.$
\end{defn}
Note that, in general, we assume that the bidding space is discretized (i.e., each player
can only bid in multiples of a sufficiently small value $\eps$).  This is done to ensure that there
always exists a mixed Nash equilibrium, as otherwise we do not have a finite game.

We now give a definition about the welfare function we seek to optimize.
\begin{defn}[Liquid Welfare]
The Liquid Welfare, denoted by $\LW$, of an allocation $\allocs$ is given by $\LW(\allocs) = \sum_{i \in [n]} \min\{v_i(\alloci),\budgeti\}$.
For random allocations, we use the measure given by $\LW(\allocs) = \sum_{i \in [n]} \Ex{\min\{v_i(\alloci),\budgeti\}}$.
\end{defn}
For a given vector of valuations $\mathbf{v} = (v_1,\ldots,v_n)$, we use $\opt(\mathbf{v})$ to denote the value of the optimal outcome given by
$\opt(\mathbf{v}) = \max_{S_1,\ldots,S_n}\sum_i \min\{v_i(S_i),\budgeti\}$, where the sets $S_i$ form a partition of $[m]$ (i.e., $\cup_i S_i = [m]$
and $\forall i \neq j: S_i \cap S_j = \emptyset$).  We often use $\opt$ instead of $\opt(\mathbf{v})$ when the context is clear.
\begin{defn}[Liquid Price of Anarchy] Given a fixed valuation profile $\vals$,
the \emph{Liquid Price of Anarchy} (LPoA) is the worst-case ratio between the optimal Liquid Welfare and the expected Liquid Welfare at a Nash equilibrium (pure or mixed) and is given by
$$ \lpoa(\vals) = \sup_{\strats}\left\{\frac{\opt(\vals)}{\LW(\strats(\vals))} ~ \middle| ~\strats \in \text{Nash Equilibria} \right\}.$$
\end{defn}

\section{Main Result (Liquid Price of Anarchy)}
\label{sec:main}

We consider a share model in which each item $j$ is divided into $\ncopy$ identical shares, where player $i$ values each
share at $\frac{\vij{}}{\ncopy}$ (here $\vij{}$ denotes player $i$'s value for item $j$).
For the sake of analysis,
we treat each share as a separate item, so that buyers can submit different bids on every single share.
A more realistic market clearing mechanism for one item would be one where
\vspace*{-2pt}
\begin{enumerate}
\setlength{\itemsep}{1pt}
\setlength{\parskip}{0pt}
\setlength{\parsep}{0pt}
\item Each buyer specifies how many shares they want to buy and which price they are willing to
      pay per share.
\item In decreasing order of bids and until the stock of shares lasts, each buyer receives
      their demanded number of shares while paying their bid per purchase.
\end{enumerate}
\vspace*{-2pt}
We note that our analysis carries over to this ``clearing house'' item auction with small adjustments, which we mention in Section~\ref{sec:extensions}
and discuss at the end of this section. 

\begin{thm}
\label{thm:mixed_first_price}
The Liquid Price of Anarchy of simultaneous first price auctions is constant (at most $51.5$), when every item has at least 
$n$ equal shares (copies). 
If less shares $\ncopy$ are available then the LPoA is $O\left(\frac{n}{\ncopy}\right)$.
\end{thm}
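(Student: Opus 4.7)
The plan is to realize the sandwich inequality sketched in the Our Techniques section,
\[
c_1 \cdot \opt \leq \sum_{i \in S} \Ex[\bids \sim \strats]{\utili(\bids)} \leq c_2 \cdot \LW(\strats),
\]
for a carefully chosen subset $S$ of bidders and absolute constants $c_1 < 1 < c_2$; the announced bound $51.5$ will emerge as $c_2/c_1$ after optimizing over auxiliary parameters. Fix a mixed Nash equilibrium $\strats$ and an OPT allocation $\oallocs^*$, and let $w_i^* = \min(\vali(\oalloci^*), \budgeti)$ denote bidder $i$'s contribution to $\opt$. I would partition the bidders into $L = \{i : \vali(\oalloci^*) \leq \budgeti\}$ (budget non-binding under OPT) and $H = \{i : \vali(\oalloci^*) > \budgeti\}$; the two regimes require structurally different deviations, and each item is viewed as $\ncopy$ separate sub-items throughout.

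For the lower bound on $\sum_{i \in S} \utili(\strats)$, I would use the classical deviations-towards-OPT technique, adapted to the share model. For each $i \in L$, bidder $i$ considers the deviation of bidding $\vij{}/2$ on a single share of each item $j \in \oalloci^*$; since $\ncopy \geq \nbidder$ shares exist per item and at most $\nbidder - 1$ opponents can block any one of them, a free share per item is always available, so the deviation is well-defined. Either $i$ wins that share at cost at most $\vij{}/2$, or some other bidder's bid on it exceeds $\vij{}/2$ and contributes to $\LW(\strats)$ through the winner's payment. For $i \in H$, the deviation must be scaled down by a factor proportional to $\budgeti / \vali(\oalloci^*)$ in order to respect the budget. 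Summing the resulting inequalities yields an intermediate bound $c_1 \opt \leq \sum_{i \in S}\utili(\strats) + \alpha \LW(\strats)$, with the $\LW$ term absorbed into the final constant.

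The upper bound is the novel ingredient. The naive estimate $\utili(\bids) \leq \vali(\alloci(\bids))$ is useless here, because $\vali$ can be arbitrarily larger than the LW contribution $\min(\vali,\budgeti)$. Instead, I would employ a \emph{boosting deviation}: bidder $i$ replaces each $\bid[i]{j}$ on a share $i$ wins at equilibrium by $(1+\gamma)\bid[i]{j}$, and uses the remaining slack in their budget to probe a few additional shares they are not currently winning. The deviation is feasible so long as $(1+\gamma)\pricei(\bids) \leq \budgeti$; at equilibrium it cannot be strictly profitable, and after a careful choice of $\gamma$ (depending on the ratio $\vali/\budgeti$ on shares $i$ actually wins) the resulting inequality rearranges into $\sum_{i \in S}\utili(\bids) \leq c_2 \cdot \LW(\strats)$. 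The set $S$ is crafted so that bidders excluded from $S$ already have LW contribution at equilibrium on the order of their OPT contribution and can be accounted for directly.

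The main obstacle is the boosting step. Extracting a useful inequality requires simultaneously keeping the deviation budget-feasible, ensuring that additional shares can be probed at a high enough rate (which forces $\ncopy \geq \nbidder$, since a deviating bidder must be able to find a free share of each targeted item against up to $\nbidder - 1$ opponents), and translating a multiplicative perturbation of bids into an additive bound on utility. When $\ncopy < \nbidder$, the number of shares a boosting deviator can effectively reach scales like $\ncopy$ instead of $\nbidder$, and this loss propagates directly into the announced $O(\nbidder/\ncopy)$ degradation of the LPoA.
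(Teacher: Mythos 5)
Your high-level framing matches the paper's announced strategy (a sandwich $c_1\opt \le \sum_{i\in S}\cdot \le c_2 \LW$, a traditional deviation for the lower bound, a new ``boosting'' deviation for the upper bound, and a carefully chosen set $S$), but several of your concrete technical ingredients diverge from what the paper actually does, and at least one is a genuine gap. The central problem is your claim that ``since $\ncopy\ge\nbidder$ shares exist per item and at most $\nbidder-1$ opponents can block any one of them, a free share per item is always available.'' This is false: a single opponent can bid arbitrarily high on \emph{all} $\ncopy$ shares of an item, and at a mixed equilibrium the opponents' bids are random anyway, so there is no deterministic free share to target. The paper replaces this combinatorial wish with a probabilistic argument (Claim~\ref{cl:copies}): defining $\rpricej=\alpha\sum_\ell\Ex{\pricejl[j]{\ell}}$ and bidding a fixed price $\rpricej/\ncopy$ per share on a random $\delta$-fraction of shares, Markov's inequality shows you win a $\delta(1-1/\alpha)$ fraction in expectation. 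All of the paper's deviations --- both the LLP deviation and the $\gamma$-boosting deviation --- are built on this claim, not on bids like $\vij{}/2$ or on counting opponents against shares.

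Your boosting deviation also differs in a way that matters. You propose multiplying one's \emph{own equilibrium} bids on won shares by $(1+\gamma)$ plus probing extra shares, but in a first-price auction raising a winning bid strictly \emph{reduces} utility without changing the allocation, so it is unclear how this produces an upper bound on $\sum_{i\in S}\utili$ rather than a trivial tautology. The paper's $\gamma$-boosting deviation is structurally different: bidder $i$ abandons her equilibrium bid and instead bids $\rpricej/\ncopy$ per share on a random $\gamma q_{ij}$-fraction of item $j$ (restricted to items $j\in G_i$), where $q_{ij}$ is the \emph{expected fraction won at equilibrium}; feasibility is guaranteed by the membership condition defining $\good_1$, and comparing the resulting Nash inequality to the equilibrium value yields Lemma~\ref{lem:second_deviation}. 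Relatedly, the set $S$ in the paper is $\good=\good_1\cap\good_2\cap\good_3$, determined by equilibrium-price budget feasibility and the probability $Q_i$ of exceeding one's budget at equilibrium --- not by whether the budget is binding under the OPT allocation as in your $L/H$ split. Finally, the sandwiched quantity is $\sum_{i\in\good}\sum_j\vij{}q_{ij}$ (expected equilibrium value of good bidders), bridged to utilities via $\Ex{\vali}\ge\Ex{\utili}$, and the $\nbidder/\ncopy$ degradation comes from the budget term $\sum_j\rpricej/\ncopy\le\budgeti$ in $\good_2$ and Claim~\ref{cl:notgood_budgets}, not from a share-counting argument. Without the Markov-type Claim~\ref{cl:copies}, the fractional/integral LLP rounding, and the $q_{ij}$-based boosting, the proposal does not close.
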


In what follows we build up notation and intuition toward the proof. Recall that
agents have additive valuations and submit bids on shares, and if they receive
an $\alloc_{ij}$ fraction of shares of item $j$, then their value is given by $v_i(\alloc_{ij}) = \vij{}\cdot \alloc_{ij}$. 
We further assume that the buyers bid according to a mixed Nash equilibrium $\bids\sim\strats$.
When buyers bid in simultaneous auctions, this essentially induces a distribution of prices over all shares of items
$\prices\sim\dist$ from a distribution $\dist$ (e.g., winning bids in first price auctions,
namely $\pricejl[j]{\ell} = \max_i \bid[i]{j}^\ell$ where $\bid[i]{j}^\ell$ is player $i$'s bid for share $\ell$ of item $j$).
In particular, for all items we
can define an ``expected price per item'' at equilibrium or just a ``price per item'' as
$\rprices=(\rpricei[1],\dots,\rpricei[m])$, where $\rpricej=\alpha\sum_{\ell=1}^{\ncopy} \Ex{\pricejl[j]{\ell}}$, for some $\alpha>1$
($\alpha=2$ will be sufficient for us). This induces a natural ``expected price per share,'' namely $\frac{\rpricej}{\ncopy}$.
One simple observation about $\rprices$ is the following:
\begin{observation}
Revenue is related to prices: $\revenue(\strats)=\frac{1}{\alpha}\sum_{j=1}^{m}\rpricej$, where $\revenue(\strats)$ denotes the expected
revenue at the equilibrium profile $\strats$.
\end{observation}
We next show that if players bid on some fraction of shares of item $j$ uniformly at random according to $\rpricej$,
then they win a large number of shares in expectation.  
\begin{claim}\label{cl:copies}
For any item $j$, if a player bids on a $\delta$-fraction of shares chosen uniformly at random of item $j$ at a given price
$\frac{\rpricej}{\ncopy}$ per share, then the player receives in expectation at least $\ncopy\cdot\delta\cdot\left(1-\frac{1}{\alpha}\right)$ shares of the item
(i.e., at least a $\delta\cdot\left(1-\frac{1}{\alpha}\right)$-fraction of item $j$).
\end{claim}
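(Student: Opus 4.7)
The plan is a direct first-moment argument based on Markov's inequality applied share by share. Fix item $j$ and denote by $\pricejl[j]{\ell}$ the (random) winning price of share $\ell$ at the equilibrium $\strats$; this is determined entirely by the other players' randomized bids, hence independent of the deviating player's random choice of which $\delta \ncopy$ shares to bid on. Let $T \subseteq \{1,\ldots,\ncopy\}$ denote the (random) set of shares the deviating player selects; by construction $\Pr[\ell \in T] = \delta$ for each $\ell$. To sidestep tie-breaking I would actually have the deviator bid $\tfrac{\rpricej}{\ncopy} + \varepsilon$ on each selected share for arbitrarily small $\varepsilon > 0$, so that winning share $\ell$ is guaranteed whenever $\ell \in T$ and $\pricejl[j]{\ell} < \tfrac{\rpricej}{\ncopy}$.

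Next I would apply Markov's inequality to each share separately:
\[
\Prx{\pricejl[j]{\ell} \ge \tfrac{\rpricej}{\ncopy}} \;\le\; \frac{\Ex{\pricejl[j]{\ell}}}{\rpricej / \ncopy} \;=\; \frac{\ncopy \cdot \Ex{\pricejl[j]{\ell}}}{\rpricej}.
\]
Therefore the probability of strictly undercutting share $\ell$'s winning price is at least $1 - \tfrac{\ncopy \Ex{\pricejl[j]{\ell}}}{\rpricej}$.

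Then, summing the winning indicators over all shares and using independence between the deviator's choice of $T$ and the other players' prices, the expected number of shares won is at least
\[
\sum_{\ell=1}^{\ncopy} \delta \cdot \left(1 - \frac{\ncopy \cdot \Ex{\pricejl[j]{\ell}}}{\rpricej}\right) \;=\; \delta \ncopy \;-\; \frac{\delta \ncopy}{\rpricej} \sum_{\ell=1}^{\ncopy}\Ex{\pricejl[j]{\ell}}.
\]
Finally I would substitute the definition $\rpricej = \alpha \sum_{\ell=1}^{\ncopy} \Ex{\pricejl[j]{\ell}}$, which makes the second term collapse to $\tfrac{\delta \ncopy}{\alpha}$, yielding the desired lower bound $\delta \ncopy (1 - 1/\alpha)$.

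There is no real obstacle here: the content of the claim is exactly Markov's inequality applied share-wise, with the choice $\alpha > 1$ ensuring the resulting bound is nontrivial. The only subtle point is the $\varepsilon$-tie-breaking trick noted above (or, equivalently, the convention that the deviating player wins ties), which lets the strict-inequality event from Markov translate into an actual win; once that is set up, the rest is arithmetic.
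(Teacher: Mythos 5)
Your proposal is correct and takes essentially the same approach as the paper: a share-wise application of Markov's inequality combined with linearity of expectation, using the definition of $\rpricej$ to collapse the error term. The paper merely phrases the identical computation as a proof by contradiction, while you state it directly; and your $\eps$-tie-breaking remark, while a sensible precaution, is not strictly needed, since the argument only ever uses the strict-inequality event $\pricejl[j]{\ell} < \rpricej/\ncopy$, under which the deviator wins regardless of the tie-breaking rule.
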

\begin{proof}
Suppose towards a contradiction that the expected number of shares won by bidder $i$ is less than
$\delta \cdot \ncopy \cdot \left(1 - \frac{1}{\alpha}\right)$. In particular, it means that
\[\sum_{\ell=1}^{\ncopy}\Prx{i\text{ bids on share }\ell}\cdot \Prx{\pricejl[j]{\ell} < \frac{\rpricej}{\ncopy}}=
\sum_{\ell=1}^{\ncopy}\delta \cdot \Prx{\pricejl[j]{\ell} < \frac{\rpricej}{\ncopy}}
<\delta \cdot h \cdot \left(1 - \frac{1}{\alpha}\right).\]

We further use the definition of $\rpricej$ and Markov's inequality to obtain a contradiction as follows:
\begin{align*}
\frac{\rpricej}{\alpha} = \sum_{\ell=1}^\ncopy\Ex{\pricejl[j]{\ell}} &\ge
\sum_{\ell=1}^{\ncopy}\frac{\rpricej}{\ncopy} \cdot \Prx{\pricejl[j]{\ell} \ge \frac{\rpricej}{\ncopy}}
=\sum_{\ell=1}^{\ncopy}\frac{\rpricej}{\ncopy}\left(1 - \Prx{\pricejl[j]{\ell} < \frac{\rpricej}{\ncopy}}\right) >
\rpricej - \frac{\rpricej}{\ncopy} \cdot \ncopy \cdot\left(1 - \frac{1}{\alpha}\right).
\end{align*}
\end{proof}

When relating prices to Liquid Welfare we notice that
\begin{observation}
Revenue is bounded by the Liquid Welfare: $\revenue(\strats)\le \LW(\strats)$, where $\LW(\strats)$ denotes the
expected Liquid Welfare at the equilibrium profile $\strats$.
\end{observation}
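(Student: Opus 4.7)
The plan is to argue pointwise (for each realized bid profile $\bids$ in the support of $\strats$) that every bidder's payment is bounded by $\min\{\vali(\alloci(\bids)),\budgeti\}$, and then sum and take expectations. The two ingredients are (i) budget feasibility at equilibrium, and (ii) individual rationality at equilibrium. Both follow from the fact that the alternative of bidding $\mathbf{0}$ is always available to bidder $i$ and yields utility $0$.

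First I would fix a realized profile $\bids$ in the support of the equilibrium $\strats$. For bidder $i$, if $\pricei(\bids)>\budgeti$ then by definition $\utili(\bids)=-\infty$, which contradicts the equilibrium property against the deviation $\pbidi=\mathbf{0}$ (which yields utility $0$). Hence $\pricei(\bids)\le\budgeti$. Next, since $\utili(\bids)=\vali(\alloci(\bids))-\pricei(\bids)\ge 0$ by the same comparison with the $\mathbf{0}$-deviation, we also obtain $\pricei(\bids)\le \vali(\alloci(\bids))$. Combining, $\pricei(\bids)\le \min\{\vali(\alloci(\bids)),\budgeti\}$.

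Summing over all $i$ gives $\sum_{i}\pricei(\bids)\le \sum_i\min\{\vali(\alloci(\bids)),\budgeti\}$ for every realized $\bids$. Taking expectations over $\bids\sim\strats$ on both sides yields
\[
\revenue(\strats)=\Ex[\bids\sim\strats]{\sum_i \pricei(\bids)} \le \sum_i \Ex[\bids\sim\strats]{\min\{\vali(\alloci(\bids)),\budgeti\}}=\LW(\strats),
\]
which is exactly the desired inequality.

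There is essentially no obstacle here: the statement is a direct consequence of the definitions together with the one-shot deviation to $\mathbf{0}$, which simultaneously certifies budget feasibility and non-negative utility at any (mixed) Nash equilibrium. The only care required is to make the argument at the level of realized bid profiles rather than in expectation, since both the $\min$ and the $-\infty$ utility interact non-linearly with expectation; doing so in the order above keeps everything clean.
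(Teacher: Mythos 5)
Your overall structure --- derive a pointwise bound $p_i(\bids)\le\min\{\vali(\alloci(\bids)),\budgeti\}$ for every realized profile $\bids$ in the support, sum over $i$, then take expectations --- is the right way to organize the argument, and the first pointwise step is sound: any realization in the support with $p_i(\bids)>\budgeti$ would make $\Ex[\bidsmi\sim\stratsmi]{\utili(\bidi,\bidsmi)}=-\infty$, contradicting the availability of the all-zero deviation. However, your second step is not justified by the reasoning you give. The $\mathbf{0}$-deviation only yields individual rationality \emph{in expectation}, $\Ex[\bidsmi\sim\stratsmi]{\utili(\bidi,\bidsmi)}\ge 0$; it does not, by itself, rule out realized profiles in the support on which $\utili(\bids)<0$. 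And the expected version is genuinely insufficient here: combined with $p_i(\bids)\le \budgeti$ pointwise it only gives $\Ex{p_i(\bids)}\le\min\bigl\{\Ex{\vali(\alloci(\bids))},\budgeti\bigr\}$, which by Jensen's inequality is an \emph{upper} bound on $\Ex{\min\{\vali(\alloci(\bids)),\budgeti\}}$, so it does not bound revenue by the Liquid Welfare.

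To close the gap you need a first-price-specific argument establishing that on every realized profile $\bids$ in the support, $p_i(\bids)\le \vali(\alloci(\bids))$. With additive valuations and first-price payments, any best response $\bidi$ in the support of $\strati$ must satisfy, for every share or item $j$, either $\bid[i]{j}\le\vij{}$ or $i$ wins $j$ with probability zero against $\stratsmi$: otherwise, replacing $\bid[i]{j}$ by the largest admissible bid not exceeding $\vij{}$ is still budget feasible, weakly improves utility on every realization of $\bidsmi$, and strictly improves it on the positive-probability events where $i$ wins $j$, contradicting the best-response property. Hence whenever $i$ wins some $j$ at a realized $\bids$ in the support, $\bid[i]{j}\le\vij{}$, so $p_i(\bids)=\sum_{j\in\alloci(\bids)}\bid[i]{j}\le\sum_{j\in\alloci(\bids)}\vij{}=\vali(\alloci(\bids))$. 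With this in hand, the remainder of your proposal (combining the two pointwise bounds, summing, and taking expectations) is correct. The paper states this observation without proof, so there is no written argument to compare against, but the missing piece above is precisely what makes the pointwise claim legitimate.
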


We consider the following fractional relaxation of the allocation problem with the goal of optimizing Liquid Welfare.
\begin{alignat*}{5}
\text{Maximize}   \quad & \sum_{i=1}^n\sum_{j=1}^{m} \vij{}\cdot\varalloc_{ij}&  & \text{Liquid linear program (LLP)}\\
\text{Subject to} \quad & \sum_{j}\vij{}\cdot\varalloc_{ij} \le \budgeti \quad \forall i; \quad\quad
                           & \sum_{i}\varalloc_{ij} \le 1 \quad \forall j; \quad\quad
                           & \varalloc_{ij} \ge 0         \quad \forall  i,j.
\end{alignat*}

We denote by $\oallocs=(\oalloc_{ij})$ the optimal solution to LLP. Notice that
the optimal fractional solution for the Liquid Welfare would never benefit from
allocating a set of items to a player such that their value for the set exceeds their budget.
The solution to LLP gives an upper bound on the optimal Liquid Welfare $\opt$.
\begin{observation}
The optimal fractional solution to LLP is better than the optimal allocation:
$\sum\limits_{i=1}^n\sum\limits_{j=1}^{m} \vij{}\cdot\oalloc_{ij}\ge \opt$.
\end{observation}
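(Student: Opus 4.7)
The plan is to exhibit a feasible LLP solution whose objective value equals $\opt$; since $\oallocs$ is by definition optimal for LLP, this immediately gives the desired inequality. Concretely, let $(S_1^\star,\ldots,S_n^\star)$ be a partition of $[m]$ attaining the optimum of $\opt = \sum_i \min\{\vali(S_i^\star),\budgeti\}$, where additivity gives $\vali(S_i^\star)=\sum_{j\in S_i^\star}\vij{}$.

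I will construct $\varallocs^\star$ by handling two cases per bidder. If $\vali(S_i^\star)\le\budgeti$, set $\varalloc_{ij}^\star=1$ for $j\in S_i^\star$ and $0$ otherwise; then $\sum_j \vij{}\varalloc_{ij}^\star = \vali(S_i^\star)\le\budgeti$, so the budget constraint holds, and bidder $i$ contributes exactly $\vali(S_i^\star)=\min\{\vali(S_i^\star),\budgeti\}$ to the LLP objective. If $\vali(S_i^\star)>\budgeti$, let $\lambda_i=\budgeti/\vali(S_i^\star)\in(0,1)$ and set $\varalloc_{ij}^\star=\lambda_i$ for $j\in S_i^\star$ and $0$ otherwise; then $\sum_j \vij{}\varalloc_{ij}^\star = \lambda_i\vali(S_i^\star)=\budgeti$, meeting the budget with equality, and the contribution to the LLP objective is $\budgeti=\min\{\vali(S_i^\star),\budgeti\}$. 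Because $(S_i^\star)$ is a partition, each $j$ appears in at most one $S_i^\star$, so $\sum_i \varalloc_{ij}^\star\le 1$ for every $j$, and clearly $\varalloc_{ij}^\star\ge 0$. Hence $\varallocs^\star$ is LLP-feasible with objective value $\sum_i \min\{\vali(S_i^\star),\budgeti\}=\opt$.

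Since $\oallocs$ maximizes the LLP objective over all feasible solutions, we conclude
\[
\sum_{i=1}^n\sum_{j=1}^{m}\vij{}\cdot\oalloc_{ij}\;\ge\;\sum_{i=1}^n\sum_{j=1}^{m}\vij{}\cdot\varalloc_{ij}^\star\;=\;\opt,
\]
which is the claimed inequality. The only subtlety, and the step I would flag explicitly, is the rescaling by $\lambda_i$ in the budget-binding case: without it one would violate the constraint $\sum_j \vij{}\varalloc_{ij}\le\budgeti$, but the rescaling preserves the contribution to the objective precisely because for such bidders the Liquid Welfare caps at $\budgeti$ anyway, so no value is lost. There is no genuine obstacle beyond this bookkeeping.
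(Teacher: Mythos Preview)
Your proposal is correct. The paper states this as an observation without proof, merely remarking that ``the solution to LLP gives an upper bound on the optimal Liquid Welfare $\opt$''; your construction of the rescaled feasible point $\varallocs^\star$ is exactly the standard way to justify this, and there is nothing to add.
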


We now define some notation that will be useful in order to obtain our result.  We
let $q_{ij}$ be the expected fraction of shares that player $i$ receives from item $j$ at
an equilibrium strategy $\strats$.
In addition, for each agent $i$, we consider a set of high value items $J_i \eqdef \left\{j \mid \vij{} \ge \rpricej\right\}$.
We further define $Q_i$ to be the probability that $\vali(\alloci) \ge \budgeti$ at equilibrium (recall that $\alloci$ denotes the random set
that player $i$ receives in the mixed Nash equilibrium).
We also define three sets of bidders, the first two of which are for budget feasibility reasons and the last of which is for bidders that often
fall under their budget in equilibrium (these sets need not be disjoint).  In particular, for a fixed parameter $\gamma >1$
to be determined later, we define sets $\good_1$, $\good_2$, and $\good_3$:
\[
\good_1\eqdef\left . \Big\{i~~\middle| \right .  \gamma\sum_{j\in J_i} \rpricej\cdot q_{ij}\le\budgeti\Big\}, \quad
\good_2\eqdef\left . \Big\{ i~~\middle|\right . \sum_{j\in[m]}\frac{\rpricej}{\ncopy}\le\budgeti\Big\}, \quad\text{and}\quad
\good_3\eqdef\left . \Big\{ i~~\middle|\right . Q_i \leq \frac{1}{2 \gamma}\Big\}.
\]
Throughout our proof, we focus on bidders in the set $\good \eqdef \good_1 \cap \good_2 \cap \good_3$. We define sets $\notgood_1\eqdef[n]\setminus\good_1$,
$\notgood_2\eqdef[n]\setminus\good_2$, $\notgood_3\eqdef[n]\setminus\good_3$, and $\notgood\eqdef[n]\setminus\good$. To this end, we need
to argue that bidders outside of the set $\good$ do not contribute a lot to the Liquid Welfare at equilibrium $\strats$.  

\begin{claim}
\label{cl:notgood_budgets}
The total budget of players in $\notgood$ is small:
$\sum_{i \in \notgood}\budgeti < \alpha \cdot \left(\gamma+\frac{n}{\ncopy}\right) \cdot \revenue(\strats) + \sum_{i \in \notgood_3}\budgeti$.
\end{claim}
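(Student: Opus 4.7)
The plan is to decompose $\notgood$ as $\notgood_3 \cup (\notgood \setminus \notgood_3)$, separate off the $\sum_{i \in \notgood_3}\budgeti$ term that already appears on the right-hand side, and then bound the remaining sum using the fact that $\notgood \setminus \notgood_3 \subseteq \notgood_1 \cup \notgood_2$. The main work is therefore to prove the two bounds
\[
\sum_{i \in \notgood_1}\budgeti < \gamma\,\alpha\,\revenue(\strats),
\qquad
\sum_{i \in \notgood_2}\budgeti < \frac{n}{\ncopy}\cdot\alpha\,\revenue(\strats),
\]
and then add them to obtain the claim via the earlier observation $\sum_j \rpricej = \alpha\,\revenue(\strats)$.

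For the first bound, I would start from the defining inequality of $\notgood_1$: every $i \in \notgood_1$ violates the condition $\gamma\sum_{j\in J_i}\rpricej q_{ij} \le \budgeti$, so $\budgeti < \gamma\sum_{j\in J_i}\rpricej q_{ij}$. Summing over $i \in \notgood_1$, enlarging the inner sum from $j \in J_i$ to $j \in [m]$ (all terms are nonnegative), and swapping the order of summation yields $\sum_{i \in \notgood_1}\budgeti < \gamma \sum_j \rpricej \sum_{i \in \notgood_1} q_{ij}$. The key fact to invoke is that the expected total fraction of item $j$ allocated cannot exceed $1$, i.e.\ $\sum_{i} q_{ij} \le 1$, which collapses the inner sum and gives $\gamma\sum_j \rpricej = \gamma\alpha\,\revenue(\strats)$.

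For the second bound, every $i \in \notgood_2$ violates $\sum_j \rpricej/\ncopy \le \budgeti$, so $\budgeti < \sum_j \rpricej/\ncopy$. Since $|\notgood_2| \le n$, summing over $\notgood_2$ immediately gives $\sum_{i\in\notgood_2}\budgeti < (n/\ncopy)\sum_j \rpricej = (n/\ncopy)\alpha\,\revenue(\strats)$.

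Finally, combining these via $\sum_{i \in \notgood\setminus\notgood_3}\budgeti \le \sum_{i\in\notgood_1}\budgeti + \sum_{i\in\notgood_2}\budgeti$ and adding $\sum_{i\in\notgood_3}\budgeti$ to both sides yields the stated inequality. There is no real obstacle here; the only subtle step is remembering that $q_{ij}$ is an expected fraction and therefore satisfies $\sum_i q_{ij}\le 1$ pointwise in expectation — this is what converts the per-bidder budget violations in $\notgood_1$ into a clean multiple of the total revenue.
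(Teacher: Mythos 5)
Your proposal is correct and follows essentially the same approach as the paper: the two key bounds on $\sum_{i\in\notgood_1}\budgeti$ and $\sum_{i\in\notgood_2}\budgeti$ are derived exactly as in the paper (violated defining inequality, swap sums, use $\sum_i q_{ij}\le 1$, use $\sum_j\rpricej=\alpha\cdot\revenue(\strats)$). The only cosmetic difference is that you split $\notgood = \notgood_3\cup(\notgood\setminus\notgood_3)$ before invoking $\notgood\setminus\notgood_3\subseteq\notgood_1\cup\notgood_2$, whereas the paper just applies the union bound $\sum_{i\in\notgood}\budgeti\le\sum_{i\in\notgood_1}\budgeti+\sum_{i\in\notgood_2}\budgeti+\sum_{i\in\notgood_3}\budgeti$ directly; both give the same inequality.
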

\begin{proof}
We first consider agents in $\notgood_1$ and obtain
$$
\sum_{i \in \notgood_1}\budgeti < \gamma\sum_{i \in \notgood_1}\sum_{j\in J_i} \rpricej\cdot q_{ij} \leq \gamma\sum_i\sum_{j \in J_i}\rpricej\cdot q_{ij}
\leq \gamma \sum_j \rpricej\sum_{i:j \in J_i}q_{ij} \leq \gamma \sum_j \rpricej \leq \gamma \cdot \alpha \cdot \revenue(\strats),
$$
where the second to last inequality follows from the fact that $\sum_i q_{ij} \leq 1$ since we have a valid fractional allocation.
Next, we consider agents in $\notgood_2$ and obtain
$$
\sum_{i \in \notgood_2}\budgeti < \sum_{i \in \notgood_2}\sum_{j}\frac{\rpricej}{\ncopy} \leq \frac{n}{\ncopy}\sum_j\rpricej \leq \frac{n}{\ncopy} \cdot \alpha \cdot \revenue(\strats).
$$
Combining these bounds for agents in $\notgood_1$ and $\notgood_2$ we have
\bee
\sum_{i \in \notgood}\budgeti \leq \sum_{i \in \notgood_1}\budgeti + \sum_{i \in \notgood_2}\budgeti + \sum_{i \in \notgood_3}\budgeti
\leq \alpha \cdot \left(\gamma+\frac{n}{\ncopy}\right) \cdot \revenue(\strats) + \sum_{i \in \notgood_3}\budgeti. \qedhere
\eee
\end{proof}

To achieve our result, we essentially consider two main ideas for player deviations in set $\good$ (each idea
actually consists of two parts). The first idea is to use the solution to LLP as guidance to claim that players can extract a large amount of
value relative to the optimal solution.  However, for players to deviate, we must first round the fractional solution $\oallocs$
into an integral solution (here, by integral, we mean a multiple of $\frac{1}{\ncopy}$ since this represents the fraction
of shares a player receives out of $\ncopy$ copies).  Define the first LLP deviation (integral part) to be $\bids_1^{\infloor{i}}=(\pbidi,\bidsmi)$,
where in $\pbidi$ buyer $i$ bids on a random $\intapprox{\oalloc_{ij}}$-fraction of each item $j\in J_i$ with price $\rpricej$
(here, $\intapprox{\oalloc}=\frac{\lfloor\oalloc\cdot\ncopy\rfloor}{\ncopy}$ and the bid per share is $\frac{\rpricej}{\ncopy}$).  Define the second LLP deviation (fractional part) to be
$\bids_1^{\{i\}}=(\pbidi,\bidsmi)$, where in $\pbidi$ buyer $i$ bids on a random $\fracapprox{\oalloc_{ij}}$-fraction
of each item $j\in J_i$ with price $\rpricej$ (here, $\fracapprox{\oalloc}=\frac{1}{\ncopy}$ if $\oalloc>0$, and
$\fracapprox{\oalloc}=0$ otherwise).  We note that both LLP deviations $\bids_1^{\infloor{i}}$ and $\bids_1^{\{i\}}$ are feasible, 
since $\vij{} \ge \rpricej$ for every $j\in J_i$, and $\sum_{j}\vij{}\cdot\oalloc_{ij} \le \budgeti$ as $\oallocs$ is a solution to LLP
along with $\good \subseteq \good_2$.  Moreover, for any $\oalloc_{ij}$, we have $\intapprox{\oalloc_{ij}} + \fracapprox{\oalloc_{ij}} \geq \oalloc_{ij}$.


\begin{lem}[LLP deviations]
\label{lem:first_deviation}
Buyers in $\good$ at equilibrium $\strats$ derive large value:
\begin{align*}
\sum_{i \in \good}\sum_{j}\vij{}\cdot q_{ij} &\ge \left(\frac{1}{2}-\frac{1}{2\alpha}\right)
\left(\opt - \alpha\left(1+\gamma+\frac{n}{\ncopy}\right)\revenue(\strats) - \sum_{i \in \notgood_3}\budgeti \right).
\end{align*}
\end{lem}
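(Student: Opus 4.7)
The plan is to use the equilibrium condition by having each buyer $i \in \good$ deviate to both $\bids_1^{\infloor{i}}$ and $\bids_1^{\{i\}}$, then combine the two inequalities with a summation trick that avoids double-counting of price terms. The first step is to compute the utility of a deviation via a coupling argument: when bidder $i$ bids $\rpricej/\ncopy$ per share on a random $\delta$-fraction of shares of item $j$, both the expected value and the expected payment are proportional to the same random quantity---the number of shares won---so the per-item deviation utility equals $(\vij{} - \rpricej) \cdot (\text{expected shares won})/\ncopy$. For $j \in J_i$ the factor $\vij{} - \rpricej$ is non-negative, so Claim~\ref{cl:copies} yields a per-item utility bound of $\delta(1-1/\alpha)(\vij{} - \rpricej)$. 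Applied to the two LLP deviations this gives
\begin{align*}
u_i(\bids_1^{\infloor{i}}) &\ge (1-1/\alpha)\sum_{j \in J_i} \intapprox{\oalloc_{ij}}(\vij{} - \rpricej),\\
u_i(\bids_1^{\{i\}}) &\ge (1-1/\alpha)\sum_{j \in J_i} \fracapprox{\oalloc_{ij}}(\vij{} - \rpricej).
\end{align*}

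Next I would invoke $u_i(\strats) \ge u_i(\bids_1^{\infloor{i}})$ and $u_i(\strats) \ge u_i(\bids_1^{\{i\}})$, add the two inequalities, and use $\intapprox{\oalloc_{ij}} + \fracapprox{\oalloc_{ij}} \ge \oalloc_{ij}$ together with $\vij{} - \rpricej \ge 0$ on $J_i$ to conclude $2\,u_i(\strats) \ge (1-1/\alpha) \sum_{j \in J_i}\oalloc_{ij}(\vij{} - \rpricej)$. Summing over $i \in \good$, the critical step is to extend the inner sum from $J_i$ to all $j \in [m]$: for $j \notin J_i$ we have $\vij{} < \rpricej$, so the added terms are non-positive and the extension preserves the inequality, producing
\[
\sum_{i \in \good}\sum_{j \in J_i}\oalloc_{ij}(\vij{} - \rpricej) \;\ge\; \sum_{i \in \good}\sum_{j \in [m]}\oalloc_{ij}\vij{} \;-\; \sum_{j \in [m]}\rpricej \sum_{i \in \good}\oalloc_{ij}.
\]
By LLP optimality together with the per-bidder LLP budget constraint $\sum_j \oalloc_{ij}\vij{} \le \budgeti$, the first sum is at least $\opt - \sum_{i \notin \good}\budgeti$; and since $\sum_{i \in \good}\oalloc_{ij} \le \sum_i \oalloc_{ij} \le 1$, the second sum is at most $\sum_j \rpricej = \alpha \cdot \revenue(\strats)$.

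Finally I would apply Claim~\ref{cl:notgood_budgets} to replace $\sum_{i \notin \good}\budgeti$ by $\alpha(\gamma + n/\ncopy)\revenue(\strats) + \sum_{i \in \notgood_3}\budgeti$, and use that $\sum_j q_{ij}\vij{} \ge u_i(\strats)$ (payments are non-negative) to promote the utility sum to the desired value sum. Dividing by $2$ then yields exactly the claimed inequality with constant $\frac{1}{2}-\frac{1}{2\alpha}$ and coefficient $\alpha(1+\gamma+n/\ncopy)$ on the revenue term. The main obstacle is the ``extend-the-sum'' step: a more natural splitting that separately bounds $\sum_{i\in\good}\sum_{j \in J_i}\oalloc_{ij}\vij{}$ from below and $\sum_{i\in\good}\sum_{j \in J_i}\oalloc_{ij}\rpricej$ from above would lose two copies of $\alpha\revenue(\strats)$ (one from dropping $j\notin J_i$ in the value sum, one from the trivial price bound) and give the weaker coefficient $\alpha(2+\gamma+n/\ncopy)$. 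The coupling of value and payment through ``shares won'' is equally essential, as it saves an additional $1/\alpha$ factor on the payment side that would otherwise contaminate the final constant.
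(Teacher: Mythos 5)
Your proposal is correct and follows essentially the same route as the paper: both apply the Nash condition to the two LLP deviations, couple value and payment through the number of shares won, add the integral and fractional bounds via $\intapprox{\oalloc_{ij}}+\fracapprox{\oalloc_{ij}}\ge\oalloc_{ij}$, and arrive at the identical intermediate quantity $\sum_{i,j}\vij{}\oalloc_{ij} - \sum_{i\in\good}\sum_j\rpricej\oalloc_{ij} - \sum_{i\in\notgood}\budgeti$ before invoking Claim~\ref{cl:notgood_budgets}. The only cosmetic difference is that you extend the inner sum from $J_i$ to $[m]$ in one step (using $\oalloc_{ij}(\vij{}-\rpricej)\le 0$ off $J_i$), whereas the paper first expands $\sum_{j\in J_i}\vij{}\oalloc_{ij}$ and then recombines the price terms — the same algebra, arranged slightly differently.
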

\begin{proof} For the integral part of the LLP deviation, since $\Ex[\bids\sim\strats]{\vali(\bids)}\ge\Ex[\bids\sim\strats]{\utili(\bids)}$ and
$\strats$ is a mixed Nash equilibrium, we have:
\begin{align}
\label{eq:nash_first_deviation_integral}
\sum_{i\in\good}\sum_{j}\vij{}\cdot q_{ij} &\ge \sum_{i\in\good}\Ex[\bids\sim\strats]{\utili(\bids)}
\ge \sum_{i \in\good} \Ex[\bidsmi\sim\stratsmi]{\utili\left(\bids_1^{\infloor{i}}\right)}
\nonumber\\
&\ge \sum_{i \in\good}\sum_{j \in J_i}\left(1-\frac{1}{\alpha}\right)\cdot\intapprox{\oalloc_{ij}} \left(\vij{} - \rpricej\right),
\end{align}
where to derive the last inequality we use Claim~\ref{cl:copies}. Similarly, for the fractional part of the LLP deviation we have:
\begin{align}
\label{eq:nash_first_deviation_fractional}
\sum_{i\in\good}\sum_{j}\vij{}\cdot q_{ij} &\ge \sum_{i\in\good}\Ex[\bids\sim\strats]{\utili(\bids)}
\ge \sum_{i \in\good} \Ex[\bidsmi\sim\stratsmi]{\utili\left(\bids_1^{\{i\}}\right)}
\nonumber\\
&\ge \sum_{i \in\good}\sum_{j \in J_i}\left(1-\frac{1}{\alpha}\right)\cdot\fracapprox{\oalloc_{ij}} \left(\vij{} - \rpricej\right).
\end{align}

Combining Equation~\eqref{eq:nash_first_deviation_integral} and Equation~\eqref{eq:nash_first_deviation_fractional} we get
\begin{align}
\label{eq:nash_first_deviation}
2\sum_{i\in\good}&\sum_{j}\vij{}\cdot q_{ij} \ge
\sum_{i \in\good}\sum_{j \in J_i}\left(1-\frac{1}{\alpha}\right)\cdot(\left\intapprox{\oalloc_{ij}}+\fracapprox{\oalloc_{ij}}\right) \left(\vij{} - \rpricej\right)
\nonumber\\
&\ge
\sum_{i \in\good}\sum_{j \in J_i}\left(1-\frac{1}{\alpha}\right)\cdot \oalloc_{ij}\left(\vij{} - \rpricej\right)
=
\left(1-\frac{1}{\alpha}\right)\left(\sum_{i \in\good}\sum_{j \in J_i} \vij{}\cdot\oalloc_{ij} -
\sum_{i \in\good}\sum_{j \in J_i}\rpricej\cdot\oalloc_{ij}\right).
\end{align}
We further estimate
\begin{align*}
\sum_{i \in \good}\sum_{j\in J_i}\vij{}\cdot\oalloc_{ij} &=
\sum_{i,j}\vij{}\cdot\oalloc_{ij} - \sum_{i\in\notgood}\sum_{j}\vij{}\cdot\oalloc_{ij} - \sum_{i\in\good}\sum_{j \not\in J_i}\vij{}\cdot\oalloc_{ij}\\
&\ge \sum_{i,j}\vij{}\cdot\oalloc_{ij} - \sum_{i\in\notgood}\budgeti - \sum_{i \in \good}\sum_{j \not\in J_i}\rpricej\cdot \oalloc_{ij},
\end{align*}
where in the last inequality we used the condition from the LLP that $\sum_{j}\vij{}\cdot\oalloc_{ij}\le\budgeti$ and that $\vij{}\le\rpricej$ for each $j\not\in J_i$.
We substitute the last estimate into Equation~\eqref{eq:nash_first_deviation} and get
\begin{align*}
2\sum_{i\in\good}\sum_{j}\vij{}\cdot q_{ij} &\ge
\left(1-\frac{1}{\alpha}\right)\left(
\sum_{i,j}\vij{}\cdot\oalloc_{ij} - \sum_{i\in\notgood}\budgeti - \sum_{i \in \good}\sum_{j \not\in J_i}\rpricej\cdot \oalloc_{ij} -
\sum_{i \in\good}\sum_{j \in J_i}\rpricej\cdot\oalloc_{ij}\right)\\
&=
\left(1-\frac{1}{\alpha}\right)\left(
\sum_{i,j}\vij{}\cdot\oalloc_{ij} - \sum_{i \in \good}\sum_{j}\rpricej\cdot \oalloc_{ij} - \sum_{i\in\notgood}\budgeti\right)\\
&\ge
\left(1-\frac{1}{\alpha}\right)\left(\opt - \alpha\left(1+\gamma+\frac{n}{\ncopy}\right)\revenue(\strats) - \sum_{i \in \notgood_3}\budgeti\right),
\end{align*}
where the last inequality follows from the LLP constraint that $\sum_i \oalloc_{ij} \leq 1$ for each $j$, the observation
$\sum_{j}\rpricej= \alpha\cdot\revenue(\strats)$, and Claim~\ref{cl:notgood_budgets}.
\end{proof}

We now turn to our second type of deviation, but we need to further restrict the set of items that players bid on.
In particular, we let $\Gamma_{i} = \left\{j ~~\middle|~~ q_{ij} \leq \frac{1}{\gamma}\right\}$, and define $G_i = J_i \cap \Gamma_i$.
We now define the $\gamma$-boosting deviation (integral part) as $\bids_2^{\infloor{i}}=(\pbidi,\bidsmi)$, where
in $\pbidi$ buyer $i$ bids on a random $\intapprox{\gamma\cdot q_{ij}}$-fraction of each item $j\in G_i$ with
price $\rpricej$, where $\gamma > 1$ is a constant to be determined later. Note that each $\bids_2^{\infloor{i}}$ deviation for every $i\in\good$ is feasible since
$\good \subseteq \good_1$. Similarly, we define the fractional part of the $\gamma$-boosting deviation as $\bids_2^{\{i\}}$,
which is also a feasible deviation since $\good \subseteq \good_2$.  Also, since players bid on items in $G_i \subseteq \Gamma_i$, we have
$\intapprox{\gamma \cdot q_{ij}} \leq 1$ (we also have $\fracapprox{\gamma \cdot q_{ij}} \leq 1$, which holds for all items by definition).
\begin{lem}[$\gamma$-boosting deviation]
The value derived by buyers in $\good$ is comparable to the Liquid Welfare obtained at equilibrium:
\label{lem:second_deviation}
\bee
\left(1-\frac{2\alpha}{\gamma(\alpha-1)}\right)\sum_{i\in\good} \sum_j \vij{}\cdot q_{ij}
\le
\alpha\cdot \revenue(\strats) + 2\cdot \LW(\strats) - \frac{1}{\gamma}\sum_{i \in \notgood_3}\budgeti.
\eee
\end{lem}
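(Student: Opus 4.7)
The plan is to apply Nash equilibrium to the $\gamma$-boosting deviations $\bids_2^{\infloor{i}}$ and $\bids_2^{\{i\}}$ for each $i \in \good$, producing a per-bidder lower bound on the equilibrium utility, and then combine this with an upper bound on utility in terms of $\LW(\strats)$. Since $\good \subseteq \good_1 \cap \good_2$, both deviations are budget-feasible; by Claim~\ref{cl:copies}, the deviation utility of $\bids_2^{\infloor{i}}$ is at least $(1-1/\alpha)\sum_{j \in G_i}\intapprox{\gamma q_{ij}}(v_{ij}-\rpricej)$, and similarly for $\bids_2^{\{i\}}$. Adding the two Nash inequalities and using $\intapprox{\gamma q_{ij}}+\fracapprox{\gamma q_{ij}} \geq \gamma q_{ij}$ yields
$$2\,\Ex{u_i(\bids)} \;\geq\; (1-1/\alpha)\,\gamma \sum_{j \in G_i} q_{ij}\bigl(v_{ij}-\rpricej\bigr).$$
Summing over $i \in \good$ and using $\sum_{i \in \good}\sum_{j \in G_i} q_{ij}\rpricej \leq \sum_j \rpricej = \alpha\,\revenue(\strats)$ gives
$(1-1/\alpha)\gamma \sum_{i \in \good}\sum_{j \in G_i} q_{ij} v_{ij} \leq 2\sum_{i \in \good}\Ex{u_i(\bids)} + (\alpha-1)\gamma\,\revenue(\strats).$

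Next I would pass from $\sum_{j \in G_i}$ to $\sum_j$. The items outside $G_i$ split into $[m]\setminus J_i$, where $v_{ij} < \rpricej$ so their joint contribution is at most $\alpha\,\revenue(\strats)$, and $J_i \setminus \Gamma_i$, where $q_{ij} > 1/\gamma$ means the item is won with substantial probability; the contribution of such items can be charged to $\LW_i$, using that $i \in \good_3$ ensures the event $v_i(x_i) > B_i$ occurs with probability at most $1/(2\gamma)$, so that the expected value and its capped-by-budget counterpart essentially agree.

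In parallel, I would upper-bound $\sum_{i \in \good}\Ex{u_i(\bids)}$ in terms of $\LW(\strats)$. The starting observation is that in equilibrium $p_i \leq \min(v_i(x_i),\budgeti)$ (as the bidder is both rational and budget-constrained), hence $\Ex{p_i} \leq \LW_i$. Combined with $\Ex{v_i(x_i)} = \LW_i + \Ex{(v_i(x_i)-\budgeti)^+}$, this gives $\Ex{u_i(\bids)} \leq \LW_i + \Ex{(v_i(x_i)-\budgeti)^+}$; for $i \in \good_3$ the correction is controlled because the tail event is rare. For bidders outside $\good_3$, the identity $\sum_{i \in \notgood_3}\budgeti \leq 2\gamma\,\LW(\strats)$ (which follows from $\LW_i \geq \budgeti\,Q_i > \budgeti/(2\gamma)$ for every $i \in \notgood_3$) is exactly what calibrates the subtractive term $-\frac{1}{\gamma}\sum_{i \in \notgood_3}\budgeti$ in the lemma.

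Finally, substituting the bound on $\sum_{i \in \good}\Ex{u_i(\bids)}$ back into the summed Nash inequality and dividing by $(1-1/\alpha)\gamma = \gamma(\alpha-1)/\alpha$ produces the desired coefficient $\frac{2\alpha}{\gamma(\alpha-1)}$ on the left and the $\alpha\,\revenue(\strats) + 2\,\LW(\strats) - \frac{1}{\gamma}\sum_{i \in \notgood_3}\budgeti$ on the right. The main obstacle is the simultaneous handling of the tail event $v_i(x_i) > \budgeti$ (where $u_i$ can exceed $\LW_i$) and the items in $J_i \setminus \Gamma_i$ (where the boosting deviation does not apply directly); aligning these so that the additive constants $\alpha$ on revenue, $2$ on $\LW$, and the $\notgood_3$-correction emerge exactly demands careful bookkeeping across the $\good_1$, $\good_2$, and $\good_3$ conditions.
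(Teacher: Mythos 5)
Your setup — applying the Nash condition to both $\bids_2^{\infloor{i}}$ and $\bids_2^{\{i\}}$, adding the two inequalities, and using $\intapprox{\gamma q_{ij}}+\fracapprox{\gamma q_{ij}}\geq\gamma q_{ij}$ to obtain the factor $(1-1/\alpha)\gamma$ — matches the paper. So does your idea for passing from $\sum_{j\in G_i}$ to $\sum_j$: discarding $j\notin J_i$ costs at most $\alpha\,\revenue(\strats)$, and the $j\in J_i\setminus\Gamma_i$ contribution is charged to $\LW(\strats)$ using the $\good_3$ condition. That second step is the crux, and you describe its mechanism too loosely: it is not that ``the expected value and its capped-by-budget counterpart essentially agree.'' The paper's Eq.~\eqref{eq:welfare_budgets} lower-bounds $\LW(\strats)$ by $\sum_i Q_i\budgeti + \sum_{i,j}\max\{0,q_{ij}-Q_i\}v_{ij}$ and then observes that for $i\in\good\subseteq\good_3$ and $j\notin\Gamma_i$ one has $Q_i\leq\frac{1}{2\gamma}\leq\frac{q_{ij}}{2}$, so $\max\{0,q_{ij}-Q_i\}\geq q_{ij}/2$; the items in $\notgood_3$ supply the $-\frac{1}{\gamma}\sum_{i\in\notgood_3}\budgeti$ term from $Q_i\budgeti>\frac{\budgeti}{2\gamma}$. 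This pointwise comparison is what works, not an ``expected value $\approx$ capped value'' statement.

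The ``in parallel'' step is where you go off the rails. You propose to upper bound $\sum_{i\in\good}\Ex{u_i(\bids)}$ in terms of $\LW(\strats)$ via $\Ex{u_i}\leq\LW_i+\Ex{(v_i(x_i)-\budgeti)^+}$, claiming the correction is controlled because $i\in\good_3$ makes the tail event rare. This does not work: $Q_i\leq\frac{1}{2\gamma}$ controls only the probability of the event $\{v_i(x_i)>\budgeti\}$, not the magnitude of the excess, so $\Ex{(v_i(x_i)-\budgeti)^+}$ can dwarf $\LW_i$ (a single very valuable item won with small probability makes the tail arbitrarily large while $Q_i$ stays tiny). Moreover, this bound is not even what your final assembly needs: to make the coefficient $\frac{2\alpha}{\gamma(\alpha-1)}$ land in front of $\sum_{i\in\good}\sum_j v_{ij}q_{ij}$ after dividing by $(1-1/\alpha)\gamma$, the term that gets divided must itself be $2\sum_{i\in\good}\sum_j v_{ij}q_{ij}$, i.e., you must bound $\Ex{u_i}\leq\sum_j v_{ij}q_{ij}$. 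This is exactly what the paper does: the chain $\sum_{i\in\good}\sum_j v_{ij}q_{ij}\geq\sum_{i\in\good}\Ex{u_i(\bids)}\geq\cdots$ makes the quantity $\sum_{i\in\good}\sum_j v_{ij}q_{ij}$ appear on \emph{both} sides of the Nash inequality, so the inequality closes on itself and you subtract the smaller coefficient. No upper bound of $\Ex{u_i}$ by $\LW$ is ever needed. Separately, note that your bookkeeping double-charges $\alpha\,\revenue(\strats)$: the paper bounds $\sum_{i\in\good}\sum_{j\notin J_i}\rpricej q_{ij}+\sum_{i\in\good}\sum_{j\in G_i}\rpricej q_{ij}$ jointly by $\sum_j\rpricej\sum_i q_{ij}\leq\alpha\,\revenue(\strats)$ (since $([m]\setminus J_i)\cap G_i=\emptyset$), whereas you charge $\alpha\,\revenue(\strats)$ once for the prices and once again for $\sum_{j\notin J_i}v_{ij}q_{ij}$, which would leave $2\alpha\,\revenue(\strats)$ rather than $\alpha\,\revenue(\strats)$ on the right.
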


\begin{proof}
For the integral part of the $\gamma$-boosting deviation, we can obtain bounds via the Nash equilibrium condition and Claim~\ref{cl:copies}:
\begin{align*}
\sum_{i\in\good}\sum_{j}\vij{}\cdot q_{ij} &\ge \sum_{i\in\good}\Ex[\bids\sim\strats]{\utili(\bids)}
\ge \sum_{i \in\good} \Ex[\bidsmi\sim\stratsmi]{\utili\left(\bids_2^{\infloor{i}}\right)}
\nonumber\\
&\ge \sum_{i \in\good}\sum_{j \in G_i}\left(1-\frac{1}{\alpha}\right) \intapprox{\gamma \cdot q_{ij}}\left(\vij{} - \rpricej\right).
\end{align*}
Similarly, for the fractional part of the $\gamma$-boosting deviation we get:
\bee
\sum_{i\in\good}\sum_{j}\vij{}\cdot q_{ij} \ge \sum_{i \in\good} \Ex[\bidsmi\sim\stratsmi]{\utili\left(\bids_2^{\{i\}}\right)}
\ge \sum_{i \in\good}\sum_{j \in G_i}\left(1-\frac{1}{\alpha}\right)\fracapprox{\gamma \cdot q_{ij}}\left(\vij{} - \rpricej\right).
\eee
Together these two deviations give us
\begin{align}
\label{eq:nash_second_deviation}
2\sum_{i\in\good}\sum_{j}\vij{}\cdot q_{ij} &\ge
\sum_{i \in\good}\sum_{j \in G_i}\left(1-\frac{1}{\alpha}\right) \gamma\cdot q_{ij}\left(\vij{} - \rpricej\right).
\end{align}
We further estimate the term $\sum_{i \in\good} \sum_{j \in G_i}\left(\vij{}-\rpricej\right)\cdot q_{ij}$ on the RHS of Equation~\eqref{eq:nash_second_deviation}:
\begin{align}
\sum_{i\in\good} \sum_{j \in G_i}\left(\vij{}-\rpricej\right)\cdot q_{ij}
&=\sum_{i\in\good} \sum_j \vij{}\cdot q_{ij} - \sum_{i \in \good} \sum_{j \not\in G_i} \vij{}\cdot q_{ij}-
\sum_{i\in\good} \sum_{j \in G_i}\rpricej\cdot q_{ij}\nonumber\\
&\ge\sum_{i\in\good} \sum_j \vij{}\cdot q_{ij} - \sum_{i \in \good} \sum_{j \not\in J_i} \rpricej\cdot q_{ij}-
\sum_{i \in \good} \sum_{j \not\in \Gamma_i} \vij{}\cdot q_{ij} - \sum_{i\in\good} \sum_{j \in G_i}\rpricej\cdot q_{ij}\nonumber\\
&\geq \sum_{i\in\good} \sum_j \vij{}\cdot q_{ij} - \sum_{i \in \good} \sum_{j \not\in \Gamma_i} \vij{}\cdot q_{ij} - \sum_{j}\rpricej \sum_{i \in\good }q_{ij}\nonumber \\
&\geq \sum_{i\in\good} \sum_j \vij{}\cdot q_{ij} - \sum_{i \in \good} \sum_{j \not\in \Gamma_i} \vij{}\cdot q_{ij} - \alpha \cdot \revenue(\strats),
\label{eq:first_term_boosting}
\end{align}
where the first inequality holds as $\sum_{j \not\in G_i}\vij{}q_{ij} \leq \sum_{j \not\in J_i}\vij{}q_{ij} + \sum_{j \not\in \Gamma_i}\vij{}q_{ij}$
and $\vij{} < \rpricej$ for every $j\notin J_i$.  Our next goal will be to bound the term
$\sum_{i\in\good}\sum_{j \not\in \Gamma_i}\vij{} \cdot q_{ij}$ on the RHS of Equation~\eqref{eq:first_term_boosting}. Before that we need to do
some preparations.  To ease the notations we denote by $\sharej$ the $\ell^{th}$ share of item $j$.
We observe that the expected Liquid Welfare at equilibrium is at least
\begin{align}
\label{eq:welfare_budgets}
\LW(\strats) &= \sum_{i} \prob[\bids\sim\strats]{\vali(\alloci)>\budgeti}\cdot\budgeti
+\sum_{i}\sum_{j}\sum_{\ell=1}^{\ncopy}\prob[\bids\sim\strats]{\{\vali(\alloci)\le\budgeti\}\land\{i \text{ wins }\sharej\}}\cdot\frac{\vij{}}{\ncopy}
\nonumber\\
&=\sum_{i} Q_i\cdot\budgeti+
\sum_{i,j}\vij{}\left(
\frac{1}{\ncopy}\sum_{\ell=1}^{\ncopy}\prob{i \text{ wins }\sharej}-\frac{1}{\ncopy}\sum_{\ell=1}^{\ncopy}\prob{\{\vali(\alloci)>\budgeti\}\land\{i \text{ wins }\sharej\}}
\right)
\nonumber\\
&=\sum_{i} Q_i\cdot\budgeti+ \sum_{i,j}\vij{}\cdot
\max\left\{0~,~q_{ij}-\frac{1}{\ncopy}\sum_{\ell=1}^{\ncopy}\prob{\{\vali(\alloci)>\budgeti\}\land\{i \text{ wins }\sharej\}}\right\}\nonumber\\
&\ge \sum_{i} Q_i\cdot\budgeti+ \sum_{i,j}\vij{}\cdot\max\left\{0,q_{ij}-\prob{\vali(\alloci)>\budgeti}\right\}\nonumber\\
&= \sum_{i} Q_i\cdot\budgeti +\sum_{i,j}\max\left\{0,q_{ij}-Q_i\right\}\cdot\vij{}
\geq \sum_{i \in \notgood_3}\frac{1}{2\gamma} \cdot \budgeti + \sum_{i \in \good}\sum_{j \not\in \Gamma_i}\vij{} \cdot \frac{q_{ij}}{2},
\end{align}
where the third equality holds true as the expression inside the $\max$ cannot be negative and $q_{ij}=\frac{1}{\ncopy}\sum_{\ell=1}^{\ncopy}\prob{i \text{ wins }\sharej}$ by definition of $q_{ij}$,
the first inequality holds since $\prob{\vali(\alloci)>\budgeti}\ge\prob{\{\vali(\alloci)>\budgeti\}\land\{i \text{ wins }\sharej\}}$, the last equality holds true by definition of $Q_i$,
and the last inequality holds since players $i \in \notgood_3$ have $Q_i > \frac{1}{2\gamma}$, while for players $i \in \good \subseteq \good_3$ and items $j \not\in \Gamma_i$
we have $Q_i \leq \frac{1}{2} \cdot \frac{1}{\gamma} \leq \frac{q_{ij}}{2}$.
Now we rearrange terms from Equation~\eqref{eq:welfare_budgets} to get
$\sum_{i \in \good}\sum_{j \not\in \Gamma_i}\vij{}q_{ij} \leq 2\cdot\LW(\strats) - \frac{1}{\gamma}\sum_{i \in \notgood_3}\budgeti$.
Combining Equation~\eqref{eq:nash_second_deviation} and Equation~\eqref{eq:first_term_boosting}, we can substitute this
upper bound to get:
\bee
2\sum_{i\in\good}\sum_{j}\vij{}\cdot q_{ij} \ge
\left(1-\frac{1}{\alpha}\right)\gamma
\left(\sum_{i\in\good} \sum_j \vij{}\cdot q_{ij} - \alpha \cdot \revenue(\strats) -
2 \cdot \LW(\strats) + \frac{1}{\gamma}\sum_{i \in \notgood_3}\budgeti\right).
\eee
Dividing both sides by $\left(1-\frac{1}{\alpha}\right)\gamma$ and rearranging terms gives the lemma:
\bee
\left(1-\frac{2\alpha}{\gamma(\alpha-1)}\right)\sum_{i\in\good} \sum_j \vij{}\cdot q_{ij}
\le
\alpha\cdot \revenue(\strats) + 2\cdot \LW(\strats) - \frac{1}{\gamma}\sum_{i \in \notgood_3}\budgeti.
\qedhere
\eee
\end{proof}

Now we have all necessary components to conclude the proof of Theorem~\ref{thm:mixed_first_price}
and show that the Liquid Price of Anarchy of any mixed Nash equilibrium is bounded.  

\begin{proof}[Proof of Theorem~\ref{thm:mixed_first_price}]
We combine the bounds from Lemma~\ref{lem:first_deviation} and Lemma~\ref{lem:second_deviation} and obtain
\begin{align*}
\alpha \cdot \revenue(\strats) &+ 2 \cdot \LW(\strats) - \frac{1}{\gamma} \sum_{i \in \notgood_3}\budgeti
\ge \\
&\left(1-\frac{2\alpha}{\gamma(\alpha-1)}\right)\left(\frac{1}{2}-\frac{1}{2\alpha}\right)
\left(\opt - \alpha\left(1+\gamma+\frac{n}{\ncopy}\right)\revenue(\strats) - \sum_{i \in \notgood_3}\budgeti \right).
\end{align*}
Since $\LW(\strats)\ge\revenue(\strats)$ we further derive that
\begin{align*}
\left(\alpha + 2 + \frac{1}{2}\left(1-\frac{1}{\alpha}-\frac{2}{\gamma}\right)\alpha\left(1+\gamma+\frac{n}{\ncopy}\right)\right) \LW(\strats) \ge \hspace{1in}\\
\frac{1}{2}\left(1-\frac{1}{\alpha}-\frac{2}{\gamma}\right)\opt + \left(\frac{1}{\gamma} - \frac{1}{2}\left(1-\frac{1}{\alpha}-\frac{2}{\gamma}\right)\right)\sum_{i \in \notgood_3}\budgeti.
\end{align*}

As long as the factor in front of $\sum_{i \in \notgood_3}\budgeti$ is nonnegative,
we have $\opt\le O\left(\frac{n}{\ncopy}\right)\cdot\LW(\strats)$ for any $1 \leq \ncopy \leq n$ for a particular choice of
parameters (e.g., $\alpha = 2.26,\gamma=7.16$). In particular, when $\ncopy\ge n$, we have that the $\lpoa$ is at most $51.5$.
\end{proof}

\noindent \textit{Remark~\ref{rem:clear_house}.}
The bound on the Liquid Price of Anarchy derived in Theorem~\ref{thm:mixed_first_price} holds for
simultaneous first price auctions with the house clearing item bidding mechanism.
\begin{proof}
We observe that the bidding strategy defined in Claim~\ref{cl:copies} extends naturally to the new house clearing mechanism.
Moreover, for a new equilibrium $\strats$ with appropriately redefined item prices $\rprices(\strats)$, the argument from
the proof of Claim~\ref{cl:copies} gives us exactly the same guarantee on the expected number of shares won by a bidder.
Indeed, in the new mechanism when a bidder faces competitors' bids or equivalently the set of share prices $\{\pricejl[j]{\ell}\}_{\ell=1}^{\ncopy}$,
they avoid the uncertainty of bidding on all shares $\ell$ with high prices $\frac{\rpricej}{\ncopy}<\pricejl[j]{\ell}$ and thus they
receive at least as many copies as they would get in the independent share bidding auction.

We further note that all the remaining parts of the proof of Theorem~\ref{thm:mixed_first_price} do not depend on the market clearing format of first price item bidding auctions.
\end{proof}

\section{Pure Nash Equilibria}
\label{sec:lpoa}
We also study {\em pure} Nash equilibria of simultaneous first price auctions with deterministic tie-breaking rules.
The proofs of the next two theorems are given in Appendices~\ref{sec:appendix_pure} and~\ref{sec:appendix_lower},
respectively.

%

\begin{thm}\label{thm:1paupper}
Consider a simultaneous first price auction where budgeted bidders have fractionally-subadditive
valuations\footnote{Valuation $\val$ is fractionally-subadditive or equivalently XOS if there is a
set of additive valuations $A = \{a_1,\ldots,a_\ell\}$ such that $\vali(S) = \max_{a\in A} a(S)$
for every $S \subseteq [m]$. XOS is a super class of submodular and additive valuations.}.
If $\bids$ is a pure Nash equilibrium, then $\LW(\bids) \ge \frac{\opt}{2}$.
\end{thm}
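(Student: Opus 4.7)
The plan is to prove, for every bidder $i$, the per-bidder inequality
\[
\opt_i \;\le\; \LW_i + \sum_{j \in O_i^* \setminus W_i} p_j,
\]
where $(O_1^*,\dots,O_n^*)$ is a fixed optimal allocation realizing $\opt = \sum_i \opt_i$ with $\opt_i = \min\{v_i(O_i^*),\budgeti\}$, $W_i$ is bidder $i$'s winning set at the pure Nash equilibrium $\bids$, $\LW_i = \min\{v_i(W_i),\budgeti\}$, and $p_j = \max_k b_{kj}$ is the price of item $j$. Summing over $i$ and using (a) that the sets $O_i^*\setminus W_i$ are pairwise disjoint, so $\sum_{i}\sum_{j\in O_i^*\setminus W_i}p_j \le \revenue(\bids)$; and (b) that $\revenue(\bids) = \sum_i p_i(\bids) \le \sum_i\LW_i = \LW(\bids)$ (since at any PNE $u_i(\bids)\ge 0$ and $p_i(\bids)\le\budgeti$ together force $p_i(\bids)\le\min\{v_i(W_i),\budgeti\}=\LW_i$), then yields $\opt \le 2\,\LW(\bids)$ and hence the theorem.

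The deviation to use for bidder $i$ is this: keep the equilibrium bid on every $j\in W_i$ and raise the bid on each $j\in O_i^*\setminus W_i$ to just above $p_j$, bidding $0$ on all other items. Because bidder $i$'s bids on $W_i$ are unchanged and no other bidder's bid changes, the deterministic tie-breaking on $W_i$ resolves exactly as before, so bidder $i$ still wins $W_i$; on $O_i^*\setminus W_i$, the strict increase overwhelms any tie-breaking. The new winning set is therefore $W_i\cup O_i^*$, and the deviation is budget-feasible precisely when $p_i(\bids) + \sum_{j\in O_i^*\setminus W_i}p_j \le \budgeti$, which splits the argument into two cases.

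\emph{Case A (feasible):} plugging this deviation into the Nash condition, canceling the shared payment on $W_i$, and letting the strict-increase increments tend to $0$ gives $v_i(W_i\cup O_i^*) - v_i(W_i) \le \sum_{j\in O_i^*\setminus W_i}p_j$. Monotonicity of $v_i$ (automatic for fractionally-subadditive valuations) then yields $v_i(O_i^*)\le v_i(W_i\cup O_i^*)$, so $v_i(O_i^*) \le v_i(W_i) + \sum_{j\in O_i^*\setminus W_i}p_j$, and a short split on whether $v_i(W_i)\le\budgeti$ (in which case $\LW_i=v_i(W_i)$ and we are done) or $v_i(W_i)>\budgeti$ (in which case $\LW_i=\budgeti\ge\opt_i$ already) closes the case. \emph{Case B (infeasible):} no Nash deviation is invoked; infeasibility by itself gives $p_i(\bids) + \sum_{j\in O_i^*\setminus W_i}p_j > \budgeti \ge \opt_i$, and combining with $p_i(\bids)\le\LW_i$ gives $\LW_i + \sum_{j\in O_i^*\setminus W_i}p_j > \opt_i$.

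The main obstacle is Case~B: the optimum-chasing deviation is not available, so the per-bidder inequality has to be recovered without using the Nash condition at all. The key observation is that the infeasibility of the deviation is itself strong evidence---the items in $O_i^*$ are already collecting so much revenue at equilibrium that, together with bidder $i$'s own payment $p_i(\bids)$, they exceed $\budgeti\ge\opt_i$. The bookkeeping step $p_i(\bids)\le\LW_i$ is what lets this revenue be charged against $\LW_i$ without double-counting; without it, a naive combination of Cases~A and~B would only yield a bound of roughly $3\,\LW(\bids)$ rather than the tight $2\,\LW(\bids)$.
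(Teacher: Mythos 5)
Your proof is correct, and it takes a genuinely different route from the paper's. The paper's proof (Lemma B.2 and Theorem B.2/\ref{thm:1paupper_app}) uses the XOS structure explicitly: for the target set $S_i^*$ it picks the maximizing additive valuation $a_r$ and has bidder $i$ deviate to bids proportional to $a_r(\{j\})$, scaled by the factor $\frac{\min\{v_i(S_i^*),\budgeti\}}{v_i(S_i^*)}$. Because $\sum_{j\in S_i^*}a_r(\{j\})=v_i(S_i^*)$, this scaling guarantees that the total bid (and hence any possible payment) is at most $\min\{v_i(S_i^*),\budgeti\}\le \budgeti$, so the deviation is \emph{automatically} budget-feasible and no case split is needed. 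The XOS property is what makes this scaling trick work. Your proof instead has bidder $i$ keep their equilibrium bids on $W_i$ and add bids just above the standing prices on $O_i^*\setminus W_i$; this is not automatically within budget, which is why you need the Case A/Case B split, and Case B is where you extract the per-bidder inequality from infeasibility alone. A notable payoff of your approach is that it never invokes the XOS decomposition --- only monotonicity --- so for pure Nash equilibria of first-price auctions it actually establishes the bound for \emph{all} monotone valuations, not just fractionally-subadditive ones. (The paper's XOS-scaling trick, on the other hand, is what lets the same style of argument go through in the harder mixed/Bayesian regimes and, with no-overbidding and no-over-budgeting, in the second-price regime, where your ``keep $W_i$ and outbid'' deviation no longer controls the payment cleanly.) Both proofs share the same small technicality that the outbidding must be strict, so strictly speaking both deliver $\LW(\bids)\ge\frac{\opt}{2}-\epsilon$ for every $\epsilon>0$ rather than the clean inequality in the theorem statement; the paper acknowledges this with its $\delta|S_i^*|$ slack, and you acknowledge it with your limit $\epsilon'\to 0$.
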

\noindent A complementary tightness result for Theorem~\ref{thm:1paupper} is given in Appendix~\ref{sec:pureExamples}.

Unfortunately, this result is not quite satisfying compared to mixed Nash equilibria. 
The first reason is that pure Nash equilibria might not even exist.
Consider the following auction with $n=2$ players and $m=10$ identical items,
where player $1$ values each item at $1$ and has a budget of $1$, while the second player values each item at $1.1$,
and has a budget of $1.1$. It is easy to see that there can be no pure Nash equilibrium, for the following
reason.  If player $2$ knows what player $1$ is bidding, player $2$ has the budget to simply outbid player $1$ and
win all items.  On the other hand, if player $2$ is winning all items, one of their bids must be at most $0.11$ (due
to their budget constraint), and therefore player $1$ can outbid player $2$ on this item.  Hence, there is no
pure Nash equilibrium.  
Moreover, it holds true for both simultaneous first price and second price auctions.
Second, the LPoA guarantees strongly depend on the tie-breaking rules used in the auction. In particular,
if we allow randomized tie-breaking rules, the LPoA is no longer a constant.

\begin{thm}\label{thm:lb_rand_ties}
With a randomized tie-breaking rule, there are simultaneous first and second price auction games
which have an $\Omega(n)$ Liquid Price of Anarchy, even when agents play pure strategies.
\end{thm}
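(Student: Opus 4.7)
The plan is to exhibit two explicit pure Nash equilibria, one per auction format, whose Liquid Welfare is a factor $\Omega(n)$ below $\opt$. The two constructions are distinct because the second price case admits a classical ``overbidding'' equilibrium that does not essentially use ties, whereas the first price case crucially exploits randomized tie-breaking (Theorem~\ref{thm:1paupper} otherwise bounds LPoA by $2$).

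For second price, take a single item with $n$ players: player $1$ with value $1$ and budget $1$, and each of players $2,\ldots,n$ with value $\frac{1}{n}$ and budget $1$. Use the pure profile where player $2$ bids $1$ and all others bid $0$. Player $2$ strictly wins and pays $0$ (the second highest bid) for utility $\frac{1}{n}$. Player $1$ is indifferent: any bid tying or exceeding $1$ forces her to pay $1$ for utility $0$, and any bid strictly below $1$ loses. Every player $i\geq 3$ strictly loses by trying to outbid or tie player $2$ because her value $\frac{1}{n}$ is below the would-be payment of $1$. Hence this is a pure Nash equilibrium. Liquid Welfare is $\min(\frac{1}{n},1)=\frac{1}{n}$, while $\opt$ assigns the item to player $1$ for $\min(1,1)=1$, so $\lpoa\geq n$.

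For first price, let $\eps=\frac{1}{2n}$ and take $n$ items and $n$ players, all with budget $\eps$. Player $1$ values item $1$ at $1$ and all other items at $0$; each player $i\geq 2$ values item $1$ at $1$, item $i$ at $\eps$, and the remaining items at $0$. The pure profile has every player bid exactly $\eps$ on item $1$ and $0$ elsewhere. Each player's budget is fully spent on item $1$, so no upward deviation there is feasible; any positive bid on an item with zero value yields negative utility; the only delicate deviation is for player $i\geq 2$ to shift budget from item $1$ to item $i$ and win it alone, which yields utility at most $\eps$. The equilibrium utility is $(1-\eps)/n$, and since our choice of $\eps$ satisfies $\eps<\frac{1}{n+1}$ we have $(1-\eps)/n>\eps$, ruling out this lateral deviation. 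The main obstacle is this delicate balance: $\eps$ must be small enough that no budget allows breaking the tie on item $1$, yet the ratio of the tied item's value to $\eps$ must be large enough that participating in the tie dominates a guaranteed win on a low-value ``own'' item.

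To compute the ratio: randomized tie-breaking on item $1$ delivers it to a uniformly random one of the $n$ tied players, whose contribution is $\min(1,\eps)=\eps$; items $2,\ldots,n$ receive no positive bids and contribute $O(\eps)$ in total under any sensible convention for all-zero-bid items. Therefore $\LW(\bids)=O(\eps)$. The optimum assigns item $1$ to player $1$ and item $i$ to player $i$ for each $i\geq 2$, each contributing $\min(\text{value},\eps)=\eps$, for a total $\opt\geq n\eps$. The LPoA is thus $\Omega(n)$.
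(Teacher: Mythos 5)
Your first-price construction is correct and genuinely different from the paper's. The paper uses a single, cleaner example for both formats: $n$ items, $n$ bidders, all with identical valuations (item $1$ worth $n^4$, each remaining item worth $1$) and budget $1$; all bidders exhaust their budget on item~$1$, which is broken uniformly at random. Because item $1$'s value dwarfs everything else, no careful balancing of parameters is needed — the best outside option is $n-1$ versus an expected utility of $\tfrac{1}{n}(n^4-1)$. Your version requires the more delicate tuning of $\eps$ so that $(1-\eps)/n > \eps$, but the argument goes through. (Both constructions implicitly assume, as the paper does, that items receiving no positive bid contribute essentially nothing to the equilibrium Liquid Welfare.)

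The second-price construction, however, has a real gap. Your single-item example with player $2$ bidding $1$ on an item she values at $\tfrac{1}{n}$ is the classical pathological \emph{over-bidding} equilibrium. It is explicitly excluded by the no over-bidding assumption $\sum_{j\in S} b_{ij}\le v_i(S)$, which the paper imposes on second-price auctions (Appendix~\ref{sec:appendix_sp}) and which is the hypothesis of the matching $\LW(\bids)\ge\opt/2$ upper bound for second price (Theorem~\ref{thm:2paupper}). Moreover, your equilibrium is an equilibrium under \emph{any} tie-breaking rule — it does not use randomized tie-breaking at all, as you yourself note. So it fails to isolate the phenomenon the theorem is about, namely that \emph{randomization in tie-breaking alone} (not over-bidding) degrades the LPoA from $2$ to $\Omega(n)$. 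The paper's single construction works for second price precisely because each bidder's bid of $1$ on item $1$ satisfies both no over-bidding ($1\le n^4$) and no over-budgeting ($1\le B_i$), yet the bad outcome still arises from the uniform tie-break. To repair your proof you should drop the separate second-price example and check that your first-price construction (or the paper's) also satisfies no over-bidding and no over-budgeting and forms an equilibrium under second-price payments.
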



%
%



\paragraph{Lower Bound: Divisible Items.}
In order to reconcile this big gap between the Liquid Welfare of the optimal allocation and 
pure equilibria with randomized tie-breaking rules
(and mixed Nash equilibria in general), we considered in Section~\ref{sec:main}
a share model in which each item $j$ is divided into $\ncopy$ identical shares and obtained an upper bound of 
$O\left(1+\frac{n}{\ncopy}\right)$ on the LPoA. We now show a complementary lower bound that the Liquid Price of Anarchy
is super constant for such equilibria when the number of shares is $\ncopy = o(n)$. The proof of Theorem~\ref{thm:lb_mixed_shares}
is deferred to Appendix~\ref{sec:appendix_lower}.

\begin{thm}
\label{thm:lb_mixed_shares}
There are some simultaneous first price and second price auction games
for which the Liquid Price of Anarchy is $\Omega\left(\frac{n}{\ncopy}\right)$ when the number of shares is $\ncopy$.
\end{thm}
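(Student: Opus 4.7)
The plan is to exhibit, for every $n \ge h \ge 1$, a family of simultaneous first-price (and second-price) auction instances whose Liquid Price of Anarchy is $\Omega(n/h)$. The construction generalizes the $h=1$ instance underlying Theorem~\ref{thm:lb_rand_ties}, which already gives $\Omega(n)$; the task is to engineer the share model so that randomization/tie-breaking spreads each bidder's allocation across $h$ lots instead of $1$, shrinking the gap by exactly a factor of $h$.

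First I would fix the instance: take a single item divided into $h$ identical shares (treated as $h$ micro-items on which bidders bid independently) and $n$ bidders. Designate bidder $1$ as the ``efficient'' bidder with per-share value $v_1 = 1$ and budget $B_1 = 1$, so that $\min(v_1 \cdot h, B_1) = 1$ and allocating the full item to her yields an OPT contribution of $1$. The remaining $n-1$ ``small'' bidders have per-share value $v_i = 1$ with budget $B_i = h/n$, just enough to pay for their ``fair share'' of $h/n$ shares at the equilibrium per-share price, but small enough that their OPT contribution is $\Theta(h/n)$ per bidder.

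Next I would identify the bad equilibrium: the symmetric profile in which every bidder bids their common value $1$ on every share. Tie-breaking (random, as in Theorem~\ref{thm:lb_rand_ties}, or induced by small mixing) gives each bidder each share with probability $1/n$, i.e.\ $h/n$ shares in expectation. Each bidder's utility is zero, and no deviation is profitable: bidding above value gives negative utility, and bidding below value loses against the tied opponents. Under this profile the efficient bidder contributes $\Ex{\min(X_1,B_1)} = \Theta(h/n)$ (where $X_1$ counts her random wins), while each small bidder contributes only $\Ex{\min(X_i, h/n)} = O(h/n^2)$ because her budget binds on the rare event of winning, for a total $\LW(\bids) = \Theta(h/n)$. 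The planner, by contrast, allocates one share to bidder $1$ (contributing $1$) and the remaining $h-1$ shares to distinct small bidders (each contributing $\min(1, h/n) = h/n$), for $\opt = 1 + (h-1)h/n = \Theta(1)$. Dividing gives $\opt/\LW(\bids) = \Omega(n/h)$.

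The hard part will be ruling out the efficient bidder's ``swap'' deviations — raising her bid on some shares while lowering it on others — that at face value can win extra shares surely at the cost of a negligible expected loss elsewhere. I would handle this by matching $B_1$ precisely to the total equilibrium payment $h \cdot (1) = h$ under a rescaling of the instance (so bidder $1$'s budget is tight against outbidding on all shares simultaneously), and by verifying that the single-share swap is neutralized by the fact that the displaced bid still wins with probability $1/n$ on the vacated share. The second-price case goes through with the same profile, since the payment in the symmetric equilibrium equals the tied bid and the same indifference arguments apply verbatim.
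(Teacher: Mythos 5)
Your construction does not produce a valid equilibrium, and the underlying single-item template cannot give the gap even if the numbers are repaired.

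The immediate problem is budget feasibility. In your proposed profile every bidder bids $1$ per share. A small bidder $i>1$ has budget $B_i = h/n < 1$; if she wins even one share she pays $1 > B_i$, so her utility is $-\infty$ under the paper's model (which sets $u_i = -\infty$ whenever $p_i(\bids) > B_i$). Since the random tie-breaking gives her a share with probability about $h/n > 0$, her \emph{expected} utility in your profile is $-\infty$, so bidding $0$ strictly dominates and the profile is not an equilibrium. Bidder $1$ has the same problem at one remove: bidding $1$ on all $h$ shares with $B_1 = 1$, she exceeds budget whenever she wins two or more shares, which happens with positive probability for $h \ge 2$. Your proposed fix, rescaling $B_1$ to match the total payment $h$, destroys the welfare accounting rather than saving it: with $B_1 = h$, bidder $1$'s equilibrium contribution $\Ex{\min(X_1, B_1)} = \Ex{X_1} = h/n$ is no longer the dominant term, the total equilibrium Liquid Welfare becomes $\Theta(h)$, and $\opt = \Theta(h)$ as well, killing the ratio.

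There is also a structural obstruction to the single-item plan. With one item of $h$ shares, at most $h$ bidders receive anything, so $\opt \le \sum_{i \in W} B_i$ over some set $W$ of size $h$. To get $\opt / \LW(\bids) = \Omega(n/h)$ you must make the budgets of the $h$ OPT-winners much larger in aggregate than what the equilibrium captures, which forces exactly the heterogeneous tiny-budget/large-budget setup that creates the feasibility trap above. The paper's constructions (Theorems~\ref{thm:lb_rand_ties_shares} and~\ref{thm:lb_mixed_shares} in Appendix~\ref{sec:appendix_lower}) instead use $n$ items, give \emph{every} bidder the same budget $1$, and make item $1$ a ``magnet'' worth $n^4$ so that in equilibrium all $n$ bidders demand a single share of it at the budget-feasible bid $1$ (well below the per-share value $n^4/h$, so over-bidding and over-budgeting never arise). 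Only $h$ bidders win anything, giving $\LW = h$, while $\opt = n$ is achieved by spreading the $n$ whole items one per bidder. The welfare gap comes from the $n - h$ cheap items left entirely unbid, not from budget heterogeneity, which is what sidesteps the problems in your construction.
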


%

\section{Extensions}
\label{sec:extensions}
\begin{rem}\label{rem:clear_house}
The bound on the Liquid Price of Anarchy derived in Theorem~\ref{thm:mixed_first_price} holds for
simultaneous first price auctions with the house clearing item bidding mechanism.
\end{rem}
%
%
The result of Theorem~\ref{thm:mixed_first_price} also extends to Bayesian first price auctions and very similar results hold true for
the second price auction format, under standard necessary restrictions on the bidding strategy of the buyers (we describe and discuss
these restrictions in Appendix~\ref{sec:appendix_sp}).  The formal proof of Theorem~\ref{thm:mixed_first_price_bayesian} is given in
Appendix~\ref{sec:appendix_bayesian} and closely follows the proof of Theorem~\ref{thm:mixed_first_price}.  

\begin{thm}
\label{thm:mixed_first_price_bayesian}
In simultaneous first price auctions with $n$ additive bidders and budgets, where every item
has $\ncopy$ equal shares (copies), the Liquid Price of Anarchy of Bayesian Nash equilibria is
$O\left(1+\frac{n}{\ncopy}\right)$ (at most $51.5$, when $\ncopy\ge n$).
\end{thm}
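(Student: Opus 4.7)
My plan is to closely mirror the proof of Theorem~\ref{thm:mixed_first_price}, replacing each mixed-strategy deviation by its Bayesian counterpart using the standard sampling trick: a bidder with realized private type $t_i=(\vali,\budgeti)$ draws a fresh hypothetical profile $t_{-i}'\sim\distsmi$ of opponents' types (independently of everything else) and plays a deviation that would be valid for the mixed-Nash argument on the realized profile $(t_i,t_{-i}')$. Since $t_{-i}$ and $t_{-i}'$ are identically distributed, the Bayesian equilibrium condition $\Ex[t_{-i}]{\utili(\strats)}\ge\Ex[t_{-i}]{\utili(\text{dev})}$ yields the analogous utility lower bounds after an outer expectation over $t_i$.

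The quantities from Section~\ref{sec:main} are adapted as follows. The price $\rpricej=\alpha\sum_{\ell=1}^{\ncopy}\Ex[t,\bids\sim\strats(t)]{\pricejl[j]{\ell}}$ is now an ex-ante average over both types and bid randomness, so that the observations $\revenue(\strats)=\frac{1}{\alpha}\sum_j\rpricej$ and $\revenue(\strats)\le\LW(\strats)$ still hold. In contrast, $q_{ij}(t_i)=\Ex[t_{-i},\bids\sim\strats(t)]{x_{ij}(\bids)\mid t_i}$ and $Q_i(t_i)=\Prx[t_{-i},\bids\sim\strats(t)]{\vali(\alloci)\ge\budgeti\mid t_i}$ are interim quantities, and $\good_1,\good_2,\good_3$ (hence $\good$ and $\notgood$) become type-dependent events determined by the same inequalities, evaluated at the interim $q_{ij}(t_i)$ and $Q_i(t_i)$. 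Claim~\ref{cl:copies} is unchanged since it uses only the conditional distribution of opponents' bids, and Claim~\ref{cl:notgood_budgets} is recovered after averaging over $t$, using $\sum_i\Ex[t_i]{q_{ij}(t_i)}\le 1$ which follows from pointwise feasibility $\sum_i x_{ij}(\bids)\le 1$.

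The two hypothetical deviations then become the following. For the LLP deviation, a bidder of type $t_i$ samples $t_{-i}'$, solves the LLP for $(t_i,t_{-i}')$ to obtain $\oallocs(t_i,t_{-i}')$, and bids on a random $\intapprox{\oalloc_{ij}}$-fraction (respectively $\fracapprox{\oalloc_{ij}}$-fraction) of each $j\in J_i(t_i)$ at price $\rpricej/\ncopy$ per share. For the $\gamma$-boosting deviation, the bidder bids on a random $\intapprox{\gamma q_{ij}(t_i)}$-fraction (respectively $\fracapprox{\gamma q_{ij}(t_i)}$-fraction) of each $j\in G_i(t_i)=J_i(t_i)\cap\Gamma_i(t_i)$ at the same per-share price. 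Both deviations depend only on $t_i$ (and private randomness), so they are admissible in the Bayesian setting. The symmetry identity
\[\Ex[t_i,t_{-i}']{\sum_{i,j}\vij{}\cdot\oalloc_{ij}(t_i,t_{-i}')}=\Ex[v]{\sum_{i,j}\vij{}\cdot\oalloc_{ij}(v)}\ge\opt\]
allows the LLP step to compare against the benchmark $\opt=\Ex[v]{\opt(v)}$, and the remaining algebra of Lemmas~\ref{lem:first_deviation} and~\ref{lem:second_deviation} goes through after inserting outer expectations over $t$. Tuning $\alpha$ and $\gamma$ exactly as in Theorem~\ref{thm:mixed_first_price} then yields the $O(1+n/\ncopy)$ bound, and the constant $51.5$ when $\ncopy\ge n$.

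The main obstacle will be the careful bookkeeping of ex-ante versus interim expectations, especially in the Liquid Welfare decomposition of Equation~\eqref{eq:welfare_budgets}: one must first carry out the decomposition at the interim level (with $Q_i(t_i)$ in place of $Q_i$ and the inner expectation taken over $(t_{-i},\bids)$), and then average over $t_i$, verifying that the $\max\{0,q_{ij}(t_i)-Q_i(t_i)\}$ lower bound on the ex-ante ``win-without-exceeding-budget'' probability survives this averaging. All other steps are mechanical adaptations of the mixed-Nash proof.
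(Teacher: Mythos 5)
Your proposal is correct and follows essentially the same route as the paper's proof in Appendix~\ref{sec:appendix_bayesian}: the same ex-ante price $\rpricej=\alpha\sum_\ell\Ex{\pricejl[j]{\ell}}$, the same type-dependent sets $\good_1,\good_2,\good_3$ defined via the interim quantities $q_{ij}(\vali)$ and $Q_i(\vali)$, the same pair of deviations (LLP and $\gamma$-boosting), and the same lemma structure leading to the identical parameter choice $\alpha=2.26,\gamma=7.16$. The one place where you diverge technically is in realizing the LLP deviation: you have the bidder of type $t_i$ sample a fresh profile $t_{-i}'\sim\distsmi$ and bid according to $\intapprox{\oalloc_{ij}(t_i,t_{-i}')}$, whereas the paper instead pre-averages to the interim LP allocation $\oalloc_{ij}(\vali)\eqdef\Ex[\valsmi\sim\distsmi]{\oalloc_{ij}(\vali,\valsmi)}$ (noting this remains budget-feasible as a convex combination of feasible points) and has the bidder deviate deterministically via $\intapprox{\oalloc_{ij}(\vali)}$. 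Both realizations yield the same lower bound, since in your version
\[
\Ex[t_{-i}']{\intapprox{\oalloc_{ij}(t_i,t_{-i}')}+\fracapprox{\oalloc_{ij}(t_i,t_{-i}')}}\ge\Ex[t_{-i}']{\oalloc_{ij}(t_i,t_{-i}')}=\oalloc_{ij}(\vali),
\]
which recovers exactly the quantity the paper's rounding gives; the paper's choice just avoids the extra layer of randomization. Your observation that the $\gamma$-boosting deviation needs no sampling (since $q_{ij}(\vali)$ already depends only on $t_i$) is exactly how the paper handles it, and your flag that the welfare decomposition of Equation~\eqref{eq:welfare_budgets} must be carried out at the interim level before averaging matches the derivation of Equation~\eqref{eq:welfare_budgets_bayesian}.
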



\begin{thm}
\label{thm:mixed_second_price_bayesian}
In simultaneous second price auctions with $n$ additive bidders and budgets, where every item
has $\ncopy$ equal shares (copies), the Liquid Price of Anarchy of Bayesian Nash equilibria is
$O\left(1+\frac{n}{\ncopy}\right)$ (at most $51.5$, when $\ncopy\ge n$), under the
no over-bidding and no over-budgeting assumptions.
\end{thm}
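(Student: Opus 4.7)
The plan is to lift the proof of Theorem~\ref{thm:mixed_first_price_bayesian} to the second price format, preserving the entire architecture (reference prices $\rpricej$, the good set $\good = \good_1 \cap \good_2 \cap \good_3$, the LLP deviation sandwich from Lemma~\ref{lem:first_deviation}, and the $\gamma$-boosting deviation sandwich from Lemma~\ref{lem:second_deviation}), and isolating the places where the second price allocation/payment rule and the no-overbidding / no-overbudgeting hypotheses must replace the first price arguments.

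First I would fix the price reference. Given a Bayesian equilibrium $\strats$, for every share $\ell$ of item $j$ let $\pricejl[j]{\ell}$ be the \emph{highest} competing bid on that share (rather than the actual second price paid), and define $\rpricej = \alpha \sum_{\ell=1}^{\ncopy} \Ex{\pricejl[j]{\ell}}$ just as in Section~\ref{sec:main}. With this choice Claim~\ref{cl:copies} transfers verbatim, since the Markov bound only depends on the distribution of the highest competing bid on each share; in the second price auction, bidding $\rpricej/\ncopy$ on a $\delta$-fraction of random shares wins at least a $\delta(1-1/\alpha)$ expected fraction of the item. Revenue in the second price auction is bounded above by $\sum_j \sum_\ell \Ex{\pricejl[j]{\ell}} = \sum_j \rpricej / \alpha$, so the key inequality $\revenue(\strats) \le \LW(\strats)$ follows directly from no-overbudgeting and no-overbidding, preserving Observation~2.2 in its usable form.

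Next I would verify both deviations. For each bidder $i \in \good$ and each type realization, the deviating profile $\bids_1^{\infloor{i}}$, $\bids_1^{\{i\}}$, $\bids_2^{\infloor{i}}$, and $\bids_2^{\{i\}}$ submits, per share, the same bid $\rpricej/\ncopy$ as in the first price case (plus an arbitrarily small $\eps$ to break ties in favor of the deviator, which vanishes in the limit on the discretized bid space). The essential observation is that in a second price auction the payment per won share is at most the deviator's own bid, so the utility gained from each won share is still at least $\vij{}/\ncopy - \rpricej/\ncopy$, exactly matching the lower bound used in Lemma~\ref{lem:first_deviation} and Lemma~\ref{lem:second_deviation}. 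The Bayesian conditioning is handled as in Theorem~\ref{thm:mixed_first_price_bayesian}: the deviator fixes their deviation as a function of their own type only, and the inequalities are taken in expectation over the remaining bidders' types.

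The main obstacle will be justifying budget feasibility of the hypothetical deviations under the no-overbudgeting hypothesis in the Bayesian setting, since a type-contingent deviation must remain ex post within the budget with probability one (or at least not fail catastrophically). This is exactly where membership in $\good_1$ (bounding $\gamma \sum_{j \in J_i} \rpricej q_{ij} \le \budgeti$) and $\good_2$ (bounding $\sum_j \rpricej/\ncopy \le \budgeti$) pay off: coupled with no-overbudgeting, they ensure that the integral and fractional deviations incur expected payments at most $\budgeti$, while any excess payment on the event of losing multiple high-price shares is controlled by reducing the deviation bid by the infinitesimal $\eps$. Once this is in place, Claim~\ref{cl:notgood_budgets}, Lemma~\ref{lem:first_deviation}, and Lemma~\ref{lem:second_deviation} carry over with the same constants, and the final combination in the proof of Theorem~\ref{thm:mixed_first_price} yields the $O(1 + n/\ncopy)$ bound and the $51.5$ constant for $\ncopy \ge n$.
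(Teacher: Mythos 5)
Your plan matches the paper's own treatment of Theorem~\ref{thm:mixed_second_price_bayesian}: carry over the Bayesian first-price proof, keeping $\pricejl[j]{\ell}$ as the winning bid $\max_i \bid[i]{j}^\ell$ so that Claim~\ref{cl:copies} transfers verbatim, noting that a second-price deviator pays at most its own bid (so every utility lower bound in Lemmas~\ref{lem:first_deviation} and~\ref{lem:second_deviation} survives), and invoking no-overbidding and no-overbudgeting to preserve the upper bound against $\LW(\strats)$.

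One point should be stated more precisely. Throughout the lemmas the symbol $\revenue(\strats)$ really denotes $\frac{1}{\alpha}\sum_j \rpricej$, the expected sum of \emph{winning bids}, and in a second-price auction this strictly exceeds actual revenue; what must be established (and what no-overbidding plus no-overbudgeting deliver via $\sum_{\sharej\in\alloci(\bids)} \bid[i]{j}^\ell \le \min\{\vali(\alloci(\bids)),\budgeti\}$) is $\frac{1}{\alpha}\sum_j\rpricej \le \LW(\strats)$, not merely $\revenue(\strats)\le \LW(\strats)$. Your write-up blurs the two; the paper is explicit that it substitutes $\revenue(\strats)$ by $\frac{1}{\alpha}\sum_j\rpricej$ wherever it appears. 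Also, the infinitesimal $\eps$ tie-breaking and the worry about ``excess payment on the event of losing multiple high-price shares'' are unnecessary: Claim~\ref{cl:copies}'s Markov argument already counts ties against the deviator as losses, and a losing bidder pays nothing in second price, so budget feasibility of all four hypothetical deviations is secured by membership in $\good_1$ and $\good_2$ alone, exactly as in the first-price case.
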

\begin{proof}
The proof that Bayesian Nash equilibria achieve a good Liquid Price of Anarchy for simultaneous second price auctions
follows similarly to the proof of Theorem~\ref{thm:mixed_first_price_bayesian} in Appendix~\ref{sec:appendix_bayesian}.
In the following we only discuss the differences that are related to the second price format. 
In first price simultaneous auctions we have $\revenue(\strats) = \frac{1}{\alpha}\sum_{j=1}^{m}\rpricej$, while in the second price auction
each bidder pays less than $\frac{1}{\alpha}\rpricej$ per item $j$. However, the proof for simultaneous second price auctions 
goes through if we substitute $\revenue(\strats)$ with the sum of prices $\frac{1}{\alpha}\sum_{j=1}^{m}\rpricej$, 
for the reason that the property $\frac{1}{\alpha}\sum_{j=1}^{m}\rpricej \le \LW(\strats)$ still holds.
To see why, fix a pure bidding profile $\bids = (\bidi[1],\ldots,\bidi[n])$ of the players.  
Recall that $\sharej$ denotes the $\ell^{th}$ share of item $j$, and that
$\alloci(\bids)$ denotes the set of shares of items that player $i$ wins when players bid according to $\bids$.
In particular, the no over-bidding and no over-budgeting
assumptions (see Appendix~\ref{sec:appendix_sp} for the definitions of these assumptions)
imply $\sum_{\sharej \in \alloci(\bids)}\bid[i]{j}^\ell \leq \vali(\alloci(\bids))$ and
$\sum_{\sharej \in \alloci(\bids)}\bid[i]{j}^\ell \leq B_i$.  Hence, similarly to the simultaneous first price auction setting, we get:
\begin{align*}
\revenue(\strats) = \frac{1}{\alpha}\sum_{j=1}^{m}\rpricej &= \sum_{j=1}^m\sum_{\ell=1}^{\ncopy} \Ex{\pricejl[j]{\ell}}
= \sum_{j=1}^m\sum_{\ell=1}^{\ncopy} \Ex{\max_i \bid[i]{j}^\ell}\\
&= \sum_{i=1}^n\Ex{\sum_{\sharej \in \alloci(\bids)}\bid[i]{j}^\ell} \leq \sum_{i=1}^n\Ex{\min\{\vali(\alloci(\bids)),B_i\}} = \LW(\strats),
\end{align*}
where the expectation is taken over players' valuation profiles and randomized bidding strategies.

The last observation we mention is that the utility a player receives when performing a deviation in simultaneous
second price auctions is at least as high as the utility they would receive if they were forced to pay their bid (since
players pay the second highest bid).  Thus, all inequalities involving the utility each player derives when performing
a deviation still hold.  It is not hard to verify that the rest of the proof goes through as well.
\end{proof}




\bibliographystyle{plain}
\bibliography{LPOA}

\newpage
\appendix

\section{Second Price Auctions.}
\label{sec:appendix_sp}

We also study {\em simultaneous second price auctions}, where
each item $j$ again is allocated to the highest bidder, but the winner pays the second highest bid on item $j$.
The total payment of bidder $\ell$ in this case is $p_\ell(\bids) = \sum_{j \in \alloci(\bids)}\max_{i\neq \ell}\bid[i]{j}$.
As before one can study the Liquid Price of Anarchy for the second price format.
However, in general, the Price of Anarchy of second price auctions can be arbitrarily large even when bidders do not have budget constraints
and there is only one item for sale\footnote{A canonical example is two bidders who value the item at $0$ and a
large number~$L$, respectively, but the first bidder bids $L + 1$ and the second bidder bids~$0$.}. To prevent such
pathological equilibria, it is standard~\cite{CKS08, BR11, FFGL13} to assume that each bidder is guaranteed to derive
non-negative utility, no matter how the other bidders behave, i.e., $\sum_{j \in S} \bid[i]{j} \le \vali(S)$ $\forall S \subseteq [m]$.
In other words, no bidder would want to ``overbid'' on any set of items.
In the budgeted setting, in addition to this {\em no over-bidding} assumption, we also require that
$\sum_{j \in [m]}\bid[i]{j} \leq \budgeti$ (i.e., no \emph{over-budgeting}). The latter is a necessary assumption even in the
single-item case to exclude pathological equilibria. The no over-budgeting assumption can also be motivated by risk-averse attitudes
of buyers, who try to eliminate any chance of exceeding their budget and deriving infinite disutility.

\section{Pure Nash Equilibria: Beyond Additive Valuations}
\label{sec:appendix_pure}
In this section we consider buyers with complex combinatorial valuations and study the efficiency of pure Nash equilibria of
simultaneous item bidding auctions for bidders with budgets. We first show that the Liquid Price of Anarchy for second price auctions, in which bidders have
fractionally-subadditive valuations, is $2$. The proof is inspired by~\cite{CKS08}. We recall the definition of fractionally-subadditive valuations. 

\begin{defn}
Valuation $\vali$ is fractionally-subadditive if there is a set of additive valuations
$A = \{a_1,\ldots,a_\ell\}$ (for some $\ell \geq 0$) such that $\vali(S) = \max_{a\in A} a(S)$
for every $S \subseteq [m]$.
\end{defn}
\noindent Valuation $a\in A$ is called a \emph{maximizing additive valuation} for
a particular set $S$ if $\vali(S) = a(S)$.

\begin{thmappsec}
\label{thm:2paupper}
Consider a simultaneous second price auction in which bidders have fractionally-subadditive valuations.  If $\bids$ is a pure Nash
equilibrium where players' bids satisfy no over-bidding and no over-budgeting, then we have $\LW(\bids) \geq \frac{\opt}{2}$.
\end{thmappsec}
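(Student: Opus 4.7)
The plan is to adapt the classical XOS welfare argument of Christodoulou et al.~\cite{CKS08} to the liquid setting, with two essential modifications: (a) the canonical ``bid the additive witness on the optimal bundle'' deviation must be \emph{scaled down} so as not to violate no-over-budgeting when $v_i(O_i^*) > B_i$; and (b) the final summation must account for the cap $\min\{v_i(x_i),B_i\}$ rather than $v_i(x_i)$. Fix an optimal allocation $(O_1^*,\ldots,O_n^*)$ achieving $\opt$, and for each $i$ let $a_i$ be a maximizing additive valuation for $O_i^*$ under the XOS representation, so $a_i(O_i^*)=v_i(O_i^*)$ and $a_i(S)\le v_i(S)$ for all $S$. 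Set $c_i:=\min\{1,B_i/v_i(O_i^*)\}$ (ignoring the trivial case $v_i(O_i^*)=0$) and consider the deviation $b_i'$ that bids $c_i\cdot a_i(j)$ on each $j\in O_i^*$ and $0$ elsewhere. Then $b_i'$ satisfies no-overbidding (because $c_i a_i$ is pointwise bounded above by $v_i$) and no-over-budgeting (the total is $c_i v_i(O_i^*)=\min\{v_i(O_i^*),B_i\}\le B_i$).

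Writing $p_j^{-i}:=\max_{k\ne i}b_{kj}$, in a second price auction the deviation makes $i$ win $W_i:=\{j\in O_i^*:c_i a_i(j)>p_j^{-i}\}$ and pay $\sum_{j\in W_i}p_j^{-i}$. Using $v_i(W_i)\ge a_i(W_i)$ (XOS), $a_i(j)\ge c_i a_i(j)$ pointwise, and $c_i a_i(j)-p_j^{-i}\le 0$ on $O_i^*\setminus W_i$,
\begin{equation*}
u_i(b_i',b_{-i}) \;\ge\; \sum_{j\in W_i}(a_i(j)-p_j^{-i}) \;\ge\; \sum_{j\in O_i^*}(c_i a_i(j)-p_j^{-i}) \;=\; \min\{v_i(O_i^*),B_i\} - \sum_{j\in O_i^*} p_j^{-i}.
\end{equation*}
The Nash condition then yields $v_i(x_i)+\sum_{j\in O_i^*}p_j^{-i}\ge\min\{v_i(O_i^*),B_i\}+p_i\ge\min\{v_i(O_i^*),B_i\}$. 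A short case split converts $v_i(x_i)$ on the left into the capped quantity: if $v_i(x_i)\ge B_i$ then $\min\{v_i(x_i),B_i\}=B_i\ge\min\{v_i(O_i^*),B_i\}$ trivially, while if $v_i(x_i)<B_i$ the Nash inequality already gives $\min\{v_i(x_i),B_i\}=v_i(x_i)\ge\min\{v_i(O_i^*),B_i\}-\sum_{j\in O_i^*}p_j^{-i}$. Either way $\min\{v_i(x_i),B_i\}\ge\min\{v_i(O_i^*),B_i\}-\sum_{j\in O_i^*}p_j^{-i}$.

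Summing over $i$ and using that $(O_i^*)_i$ partition $[m]$ yields $\LW(\bids)\ge\opt-\sum_j p_j^{-i(j)}$, where $i(j)$ is the unique index with $j\in O_{i(j)}^*$. Since $p_j^{-i(j)}\le\max_k b_{kj}$ and $\sum_j\max_k b_{kj}=\sum_i\sum_{j\in x_i}b_{ij}$, while no-overbidding and no-over-budgeting together give $\sum_{j\in x_i}b_{ij}\le\min\{v_i(x_i),B_i\}$, the cross term is at most $\LW(\bids)$. Combining these two inequalities gives $2\LW(\bids)\ge\opt$, which is the claim. The main obstacle is that the naive CKS-style deviation violates the budget precisely when $v_i(O_i^*)>B_i$; the scaling $c_i$ resolves this cleanly, because $c_i v_i(O_i^*)$ collapses to the capped value appearing in $\opt$ while the XOS chain $v_i(W_i)\ge a_i(W_i)\ge c_i a_i(W_i)$ keeps the utility estimate tight. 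A secondary subtlety—that at equilibrium $v_i(x_i)$ can exceed $B_i$, so the Nash inequality does not directly bound the capped quantity on the left—is absorbed by the two-case argument above.
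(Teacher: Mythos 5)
Your proof is correct and is essentially the paper's own argument: the scaled deviation $c_i\cdot a_i$ with $c_i=\min\{1,B_i/v_i(O_i^*)\}$ is identical to the paper's bid $a_r(\{j\})\cdot\min\{v_i(S),B_i\}/v_i(S)$, the per-player inequality and three-way case split (on $v_i(O_i^*)=0$ and $v_i(x_i)\lessgtr B_i$) match Lemma~\ref{lem:devlb2pa} and the surrounding argument, and the final bound $\sum_j\max_k b_{kj}\le\LW(\bids)$ via no-overbidding plus no-over-budgeting is the same closing step.
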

\begin{proof}
We begin with the following useful lemma.
\begin{lem}\label{lem:devlb2pa}
Fix an arbitrary $S \subseteq [m]$ and a player $i$ such that $v_i(S) > 0$, and let $a_r$ be a maximizing additive valuation
for $S$.  Consider the alternative bidding strategy $\abidi$ for $i$, where $\abid[i]{j} = a_r(\{j\}) \frac{\min\{v_i(S),\budgeti \}}{v_i(S)}$ for $j \in S$
and $\abid[i]{j} = 0$ for $j \not\in S$.  Then for any pure profile $\bidsmi$ we have:

$$ u_i(b_i^*,\bidsmi) \geq \min\{v_i(S),\budgeti\} - \sum_{j \in S} \max_{k \neq i} b_{kj}.$$
\end{lem}
\begin{proof}
Let $T$ be the set of items that player $i$ wins in the allocation $\alloci(b_i^*,\bidsmi)$.  Note that
$\max_{k \neq i}b_{kj} = 0$ for any $j \in T \setminus S$ and
$\abid[i]{j} - \max_{k \neq i}b_{kj} = a_r(\{j\})\frac{\min\{v_i(S),\budgeti\}}{v_i(S)} - \max_{k \neq i}b_{kj} \leq 0$
for any $j \in S \setminus T$.  Then, we have:
\begin{eqnarray*}
u_i(b_i^*,\bidsmi) = v_i(T) - \sum_{j \in T}\max_{k \neq i}b_{kj} &\geq& \sum_{j \in T \cap S}a_r(\{j\}) - \sum_{j \in T \cap S}\max_{k \neq i}b_{kj} \\
&\geq& \sum_{j \in T \cap S}a_r(\{j\})\frac{\min\{v_i(S),\budgeti\}}{v_i(S)} - \sum_{j \in T \cap S}\max_{k \neq i}b_{kj}\\
&\geq& \sum_{j \in S}a_r(\{j\})\frac{\min\{v_i(S),\budgeti\}}{v_i(S)} - \sum_{j \in S}\max_{k \neq i}b_{kj} \\
&=& \min\{v_i(S),\budgeti\} - \sum_{j \in S}\max_{k \neq i}b_{kj}.
\end{eqnarray*}
\end{proof}


Let $v_1,\ldots,v_n$ denote the valuations of the players, and fix a particular player $i$.  Let $\opt$ denote the value of the optimal
solution, and let $S_i^*$ denote the set of items that player $i$ receives in an optimal allocation and $S_i = \alloci(\bids)$
be $i$'s allocation in the pure Nash equilibrium.  Suppose that $v_i(S_i^*) > 0$ (we will
handle the case that $v_i(S_i^*) = 0$ separately).  Let $a_r$ be the maximizing additive valuation for the set $S_i^*$, and consider
the following deviating strategy for player $i$: $\abid[i]{j} = a_r(\{j\}) \frac{\min\{v_i(S_i^*),\budgeti \}}{v_i(S_i^*)}$ for $j \in S_i^*$
and $\abid[i]{j} = 0$ for $j \not\in S_i^*$.  Notice that the alternative strategy $b_i^*$ satisfies no over-bidding and no over-budgeting.

Then, by Lemma~\ref{lem:devlb2pa}, we have
$u_i(b_i^*,\bidsmi) \geq \min\{v_i(S_i^*),\budgeti\} - \sum_{j \in S_i^*} \max_{k \neq i} b_{kj}$.
Since $\bids$ is a pure Nash equilibrium, the following must hold:
$$
v_i(S_i) \geq u_i(b_i,\bidsmi) \geq u_i(b_i^*,\bidsmi) \geq \min\{v_i(S_i^*),\budgeti\} - \sum_{j \in S_i^*} \max_{k \neq i} b_{kj}.
$$
Hence, if $v_i(S_i^*) > 0$ and $v_i(S_i) \leq \budgeti$, then we have shown that
$\min\{v_i(S_i),\budgeti\} = v_i(S_i) \geq \min\{v_i(S_i^*),\budgeti\} - \sum_{j \in S_i^*} \max_{k \neq i} b_{kj}$.
Now, consider a player $i$ such that $v_i(S_i^*) = 0$ and $v_i(S_i) \leq \budgeti$.  Then again we have
$\min\{v_i(S_i),\budgeti\} \geq \min\{v_i(S_i^*),\budgeti\} \geq \min\{v_i(S_i^*),\budgeti\} - \sum_{j \in S_i^*} \max_{k \neq i} b_{kj}$.
Finally, consider a player $i$ such that $v_i(S_i) > \budgeti$.  In such a case, we have
$\min\{v_i(S_i),\budgeti\} = \budgeti \geq \min\{v_i(S_i^*),\budgeti\} \geq \min\{v_i(S_i^*),\budgeti\} - \sum_{j \in S_i^*} \max_{k \neq i} b_{kj}$.
Hence, in each case, we have the same lower bound on $\min\{v_i(S_i),\budgeti\}$.  Putting these together and
summing over all bidders, we get:
\begin{eqnarray*}
\LW(\bids) = \sum_{i=1}^n \min\{v_i(S_i),\budgeti\} &\geq& \sum_{i=1}^n\min\{v_i(S_i^*),\budgeti\} - \sum_{i=1}^n\sum_{j \in S_i^*} \max_{k \neq i} b_{kj} \\
&\geq& \opt - \sum_{j=1}^m \max_{k} b_{kj} \\
&=& \opt - \sum_{i=1}^n\sum_{j \in S_i}\bid[i]{j} \\
&\geq& \opt - \sum_{i=1}^n\min\{v_i(S_i),\budgeti\} = \opt - \LW(\bids),
\end{eqnarray*}
where the last inequality follows since $\bids$ satisfies no over-bidding (i.e., $\sum_{j \in S_i} \bid[i]{j} \leq v_i(S_i)$) and
no over-budgeting (i.e., $\sum_{j \in S_i}\bid[i]{j} \leq \budgeti$).
\end{proof}

Similarly, the Liquid Price of Anarchy for pure Nash equilibria of simultaneous first-price auctions, in which bidders have
fractionally-subadditive valuations, is arbitrarily close to $2$.

\begin{thmappsec}[Theorem~\ref{thm:1paupper}]
\label{thm:1paupper_app}
Consider a simultaneous first price auction where bidders have fractionally-subadditive valuations.  If $\bids$ is a pure Nash
equilibrium, then for any $\epsilon > 0$ we have $\LW(\bids) \geq \frac{\opt}{2} - \epsilon$.
\end{thmappsec}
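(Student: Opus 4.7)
The plan is to adapt the proof of Theorem~\ref{thm:2paupper} to the first-price setting, replacing the XOS-based ``dampened bid'' deviation with a simpler ``bid just above the current winning price'' deviation, at the cost of an arbitrarily small additive loss. Fix a pure Nash equilibrium $\bids$, an optimal allocation $(S_1^*,\dots,S_n^*)$ with $\sum_i \min\{v_i(S_i^*), \budgeti\} = \opt$, the equilibrium allocation $S_i = \alloci(\bids)$, and the item prices $p_j = \max_k \bid[k]{j}$ (what the winner pays in first price).

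First I would record the revenue bound $\sum_j p_j \le \LW(\bids)$. At a pure Nash of a first-price auction, if $\sum_{j \in S_i} \bid[i]{j}$ exceeded $v_i(S_i)$ or $\budgeti$ then bidder $i$ would strictly prefer to bid zero (obtaining utility $0$ instead of negative or $-\infty$), so $\sum_{j \in S_i} \bid[i]{j} \le \min\{v_i(S_i), \budgeti\}$. Summing over $i$ and using $\sum_j p_j = \sum_i \sum_{j \in S_i} \bid[i]{j}$ gives the claimed bound.

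The heart of the argument is the per-bidder inequality
\begin{equation*}
\min\{v_i(S_i), \budgeti\} \ge \min\{v_i(S_i^*), \budgeti\} - \sum_{j \in S_i^*} p_j - \epsilon/n,
\end{equation*}
which I would prove by a two-case split. In the easy case $\sum_{j \in S_i^*} p_j \ge \budgeti - \epsilon/n$, the right-hand side is already non-positive (since $\min\{v_i(S_i^*), \budgeti\} \le \budgeti$), so the bound is trivial. In the other case, $i$ has enough budget slack to play the deviation $\abid[i]{j} = p_j + \epsilon/(nm)$ for $j \in S_i^*$ and $\abid[i]{j} = 0$ otherwise; this strictly outbids everyone on $S_i^*$ and is budget-feasible by the case assumption, so by monotonicity of $v_i$ (no XOS structure is actually required for this deviation) its deviating utility is at least $v_i(S_i^*) - \sum_{j \in S_i^*} p_j - \epsilon/n$. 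The Nash inequality applied to this deviation yields $v_i(S_i) \ge v_i(S_i^*) - \sum_{j \in S_i^*} p_j - \epsilon/n$, from which a brief check on whether $v_i(S_i^*)$ and $v_i(S_i)$ exceed $\budgeti$ converts the inequality to the $\min\{\cdot,\budgeti\}$ form above.

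Summing the per-bidder inequality over $i$ and using that $(S_i^*)$ partitions $[m]$ gives $\LW(\bids) \ge \opt - \sum_j p_j - \epsilon$; combining with $\sum_j p_j \le \LW(\bids)$ from the first step yields $2\LW(\bids) \ge \opt - \epsilon$, so after relabeling $\epsilon$ we obtain the stated bound $\LW(\bids) \ge \opt/2 - \epsilon$. The step I expect to be the main obstacle is exactly the strict-inequality requirement of first-price auctions: to guarantee that the deviation actually wins every item of $S_i^*$ (rather than merely tying), the bidder must pay a tiny positive premium above $p_j$, and this unavoidable slack is precisely what propagates into the additive $\epsilon$ in the theorem and forces the case split on whether the prices on $S_i^*$ already saturate $\budgeti$.
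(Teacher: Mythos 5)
Your proof is correct, and it takes a genuinely different (and somewhat more elementary) route than the paper's. The paper's Lemma~\ref{lem:devlb1pa} constructs a deviation tied to the XOS representation: player $i$ bids $\min\bigl\{a_r(\{j\})\tfrac{\min\{v_i(S_i^*),\budgeti\}}{v_i(S_i^*)},\ \max_{k\ne i}b_{kj}+\delta\bigr\}$ on $j\in S_i^*$. The scaling by the additive maximizer $a_r$ is precisely what makes that deviation automatically budget-feasible (its bids sum to at most $\min\{v_i(S_i^*),\budgeti\}$), but it also means $i$ may fail to win some items of $S_i^*$, so the XOS property is then needed to lower-bound $v_i(T)$ by $\sum_{j\in T\cap S_i^*}a_r(\{j\})$. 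Your deviation instead overbids the standing price $p_j$ by a tiny additive slack, forcing $i$ to win all of $S_i^*$, so monotonicity alone gives $v_i(T)\ge v_i(S_i^*)$ and no XOS structure is invoked. The price is that your deviation is not automatically budget-feasible, which you absorb with the case split on whether $\sum_{j\in S_i^*}p_j$ nearly saturates $\budgeti$ --- and in that saturated case the per-bidder bound is vacuously true, so nothing is lost. A pleasant byproduct of your version is that it proves the pure-Nash bound for \emph{arbitrary monotone} valuations (conditional on a pure Nash existing), strictly more general than the XOS statement the paper gives. Both proofs close identically: the no-overbidding/no-overbudgeting consequence of pure Nash yields $\sum_j p_j \le \LW(\bids)$, and summing the per-bidder bound over the partition $(S_1^*,\dots,S_n^*)$ gives $\LW(\bids)\ge\opt-\sum_j p_j-\epsilon$, hence $\LW(\bids)\ge\opt/2-\epsilon$ after relabeling.
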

\begin{proof}
We begin with the following statement, which is similar to Lemma~\ref{lem:devlb2pa}.
\begin{lem}\label{lem:devlb1pa}
Fix an arbitrary $S \subseteq [m]$ and a player $i$ such that $v_i(S) > 0$, and let $a_r$ be a maximizing additive valuation
for $S$.  For any $\delta > 0$, consider the alternative bidding strategy $b_i^*$ for $i$, where
$\abid[i]{j} = \min\{a_r(\{j\}) \frac{\min\{v_i(S),\budgeti \}}{v_i(S)},\max_{k \neq i}b_{kj} + \delta\}$ for $j \in S$
and $\abid[i]{j} = 0$ for $j \not\in S$.  Then for any pure profile $\bidsmi$ we have:

$$ u_i(b_i^*,\bidsmi) \geq \min\{v_i(S),\budgeti\} - \sum_{j \in S} \max_{k \neq i} b_{kj} - \delta |S|.$$
\end{lem}
\begin{proof}
Let $T$ be the set of items that player $i$ wins in the allocation $\alloci(b_i^*,\bidsmi)$.  Note that
for any $j \in T \setminus S$, we have $\abid[i]{j} = 0$ and for any $j \in T \cap S$, we have $\abid[i]{j} \leq \max_{k \neq i}b_{kj} + \delta$.
Moreover, $\abid[i]{j} - \max_{k \neq i}b_{kj} \leq 0$ and $\abid[i]{j} = a_r(\{j\})\frac{\min\{v_i(S),\budgeti\}}{v_i(S)}$
for any $j \in S \setminus T$.  Then, we have:
\begin{eqnarray*}
u_i(b_i^*,\bidsmi) = v_i(T) - \sum_{j \in T}\abid[i]{j} &\geq& \sum_{j \in T \cap S}a_r(\{j\}) - \sum_{j \in T \cap S}\max_{k \neq i}b_{kj} - \delta |T \cap S| \\
&\geq& \sum_{j \in T \cap S}a_r(\{j\})\frac{\min\{v_i(S),\budgeti\}}{v_i(S)} - \sum_{j \in T \cap S}\max_{k \neq i}b_{kj} - \delta |S|\\
&\geq& \sum_{j \in S}a_r(\{j\})\frac{\min\{v_i(S),\budgeti\}}{v_i(S)} - \sum_{j \in S}\max_{k \neq i}b_{kj} - \delta |S| \\
&=& \min\{v_i(S),\budgeti\} - \sum_{j \in S}\max_{k \neq i}b_{kj} - \delta |S|. 
\end{eqnarray*}
\end{proof}

Let $v_1,\ldots,v_n$ denote the valuations of the players, and fix a particular player $i$.  Let $\opt$ denote the value of the optimal
solution, and let $S_i^*$ denote the set of items that player $i$ receives in an optimal allocation and $S_i = \alloci(\bids)$
be $i$'s allocation in the pure Nash equilibrium.  Suppose that $v_i(S_i^*) > 0$ (we will
handle the case that $v_i(S_i^*) = 0$ separately).  Let $a_r$ be the maximizing additive valuation for the set $S_i^*$, and consider
the following deviating strategy for player $i$ for some $\delta > 0$ to be set later:
$\abid[i]{j} = \min\{a_r(\{j\}) \frac{\min\{v_i(S),\budgeti \}}{v_i(S)},\max_{k \neq i}b_{kj} + \delta\}$ for $j \in S_i^*$
and $\abid[i]{j} = 0$ for $j \not\in S_i^*$.  Notice that the alternative strategy $\abidi$ satisfies no over-budgeting.

Then, by Lemma~\ref{lem:devlb1pa}, we have
$u_i(\abidi,\bidsmi) \geq \min\{v_i(S_i^*),\budgeti\} - \sum_{j \in S_i^*} \max_{k \neq i} b_{kj} - \delta |S_i^*|$.
Since $\bids$ is a pure Nash equilibrium, the following must hold:
$$
v_i(S_i) \geq u_i(b_i,\bidsmi) \geq u_i(\abidi,\bidsmi) \geq \min\{v_i(S_i^*),\budgeti\} - \sum_{j \in S_i^*} \max_{k \neq i} b_{kj} - \delta |S_i^*|.
$$
Hence, if $v_i(S_i^*) > 0$ and $v_i(S_i) \leq \budgeti$, then we have shown that
$\min\{v_i(S_i),\budgeti\} = v_i(S_i) \geq \min\{v_i(S_i^*),\budgeti\} - \sum_{j \in S_i^*} \max_{k \neq i} b_{kj} - \delta |S_i^*|$.
Now, consider a player $i$ such that $v_i(S_i^*) = 0$ and $v_i(S_i) \leq \budgeti$.  Then again we have
$\min\{v_i(S_i),\budgeti\} \geq \min\{v_i(S_i^*),\budgeti\} \geq \min\{v_i(S_i^*),\budgeti\} - \sum_{j \in S_i^*} \max_{k \neq i} b_{kj} - \delta |S_i^*|$.
Finally, consider a player $i$ such that $v_i(S_i) > \budgeti$.  In such a case, we have
$\min\{v_i(S_i),\budgeti\} = \budgeti \geq \min\{v_i(S_i^*),\budgeti\} \geq \min\{v_i(S_i^*),\budgeti\} - \sum_{j \in S_i^*} \max_{k \neq i} b_{kj} - \delta |S_i^*|$.
Hence, in each case, we have the same lower bound on $\min\{v_i(S_i),\budgeti\}$.  Putting these together and
summing over all bidders, we get:
\begin{eqnarray*}
\LW(\bids) = \sum_{i=1}^n \min\{v_i(S_i),\budgeti\} &\geq& \sum_{i=1}^n\min\{v_i(S_i^*),\budgeti\} - \sum_{i=1}^n\sum_{j \in S_i^*} \max_{k \neq i} b_{kj} - \delta \sum_{i=1}^n|S_i^*| \\
&\geq& \opt - \sum_{j=1}^m \max_{k} b_{kj} - \delta m \\
&=& \opt - \sum_{i=1}^n\sum_{j \in S_i}\bid[i]{j} - \delta m \\
&\geq& \opt - \sum_{i=1}^n\min\{v_i(S_i),\budgeti\} - \delta m = \opt - \LW(\bids) - \delta m,
\end{eqnarray*}
where the last inequality follows since in any Nash equilibrium, the bids $\bids$ must satisfy $\sum_{j \in S_i} \bid[i]{j} \leq v_i(S_i)$
and $\sum_{j \in S_i}\bid[i]{j} \leq \budgeti$.  Noting that we can set $\delta = \frac{2\epsilon}{m}$ gives the theorem.
\end{proof}


\section{Lower Bounds}
\label{sec:appendix_lower}
We give an example with additive valuations which shows that a randomized tie-breaking rule can lead to a Liquid Price of Anarchy which is $\Omega(n)$
in the following theorem.

\begin{thmappsec}[Theorem~\ref{thm:lb_rand_ties}]
With a randomized tie-breaking rule, there are simultaneous first and second price auction games
which have an $\Omega(n)$ Liquid Price of Anarchy, even when agents play pure strategies.
\end{thmappsec}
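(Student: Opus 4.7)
\textbf{Plan for the proof of Theorem~\ref{thm:lb_rand_ties}.}

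The plan is to exhibit, for each of the two auction formats, an explicit instance together with a pure bid profile that is a Nash equilibrium under uniform random tie-breaking, and to argue that on that instance $\opt/\LW(\bids) = \Omega(n)$. The central idea is to set up a game in which the bid-space, valuations, and budgets all conspire so that every ``meaningful'' unilateral bid to win an item outright is exactly as attractive (or less attractive) as the option of bidding $0$ and relying on the tie-break lottery, while the winning lottery awards item $i$ to the bidder that values it with probability only $1/n$.

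First I would write down the following first-price instance. Take $n$ items and $n$ additive bidders; bidder $i$ has value $v$ for item $i$ and value $0$ for every other item, with budget $B_i = v$. Assume the bidding space is discretized so that the only permissible bids are $0$ and $v$ (this is consistent with the ``multiples of $\varepsilon$'' convention of Section~\ref{sec:prelim} by taking $\varepsilon = v$), and note that the budget $v$ together with this grid forces each bidder to bid $v$ on at most one item. I then propose the all-zero profile $\bids \equiv 0$ as the candidate pure Nash equilibrium. The verification is direct: under uniform random tie-breaking, bidder $i$ wins item $i$ (the only item giving them positive value) with probability $1/n$, so their equilibrium expected utility is exactly $v/n$; the only deviation allowed by the budget and the discretized grid is to place the bid $v$ on some single item, and the best such deviation places it on item $i$, yielding a sure win and utility $v - v = 0 < v/n$. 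Then I would compute the welfares: at equilibrium $\LW(\bids) = \sum_i \min(v,v)\cdot (1/n) = v$, while the allocation matching item $i$ to bidder $i$ certifies $\opt \ge nv$, so $\lpoa \ge n$.

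For the second-price version I would use an analogous construction and argue that the same all-zero profile (or a minor variant of it) remains a pure Nash equilibrium, appealing to the standard restrictions on admissible bids so that the deviations that would be profitable under the second-price payment rule (``bid a tiny amount and pay $0$'') become infeasible. Concretely, the plan is to restrict deviations using the same discretized bid grid together with the no-overbidding/no-overbudgeting conventions from Appendix~\ref{sec:appendix_sp}: under no-overbidding bidder $i$ cannot place a positive bid on any item $j \neq i$ (since $v_{ij}=0$), and the grid $\{0,v\}$ together with the budget $v$ forces any positive bid on item $i$ to be exactly $v$, which in a tied bidding environment causes the second-price payment to also be $v$ and hence utility $0$ rather than the equilibrium utility $v/n$. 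This yields the same calculation $\lpoa \ge n$ for the second-price format.

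The main obstacle is exactly the second-price analysis: in the most naive discretization the deviation ``bid $v$ on item $i$, others still bid $0$, so the second-highest bid is $0$'' would earn utility $v \gg v/n$ and destroy the equilibrium. The technical heart of the proof therefore lies in tuning the instance so that any profitable unilateral deviation either violates the budget constraint, violates no-overbidding, or forces the second-highest bid to rise enough that the deviator's net utility does not exceed $v/n$. The cleanest way I would try is to introduce a second ``dummy'' bidder on each item whose role is solely to make the second-highest bid equal to $v$ whenever any bidder tries to push above $0$, and to check that this dummy bidder can itself be made part of a consistent pure Nash profile without affecting the welfare computations beyond constant factors.
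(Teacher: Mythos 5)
Your first-price instance and the calculation $\opt/\LW(\bids) = n$ are internally consistent, but the proof hinges on a genuinely fragile point: the all-zero profile is a Nash equilibrium only because you set the bid grid to be exactly $\{0, v\}$. The moment the grid contains any positive bid strictly below $v$ (which is the intended regime -- the paper introduces the discretization with a ``sufficiently \emph{small} $\varepsilon$'' precisely so that one can take $\varepsilon \to 0$), bidder $i$ can bid that small amount on item $i$, win it with probability $1$, and earn utility close to $v \gg v/n$, destroying the equilibrium. You are therefore not exhibiting an inefficient equilibrium of a simultaneous item auction in the paper's sense; you are exhibiting one of a coarsely-quantized variant that the grid was never meant to create. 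The paper's construction avoids this entirely: it gives every bidder the same valuation ($n^4$ for item $1$, value $1$ for the rest) and budget $1$, and the equilibrium has everyone bidding their \emph{full budget} on item~$1$. Deviations are then blocked by the budget constraint, not by the grid, so the equilibrium is robust to arbitrarily fine discretization. The LW at equilibrium is $1$ (only one budget-capped bidder gets anything), while $\opt = n$.

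The second-price half of your plan is the bigger gap, and you correctly diagnose it yourself: in the all-zero profile a unilateral positive bid on item $i$ wins at second price $0$, which is a massive profitable deviation; no-overbidding and no-overbudgeting do not help, because $v_{ii} = B_i = v > 0$ permits a positive bid on item $i$. Your proposed fix of ``dummy bidders'' that hold the second price up to $v$ is not worked out and does not obviously close the gap: a dummy who bids $v$ on item $i$ must value it at $\ge v$ (for no-overbidding) and have budget $\ge v$ (for no-overbudgeting), and then the dummy itself either wins item $i$ and contributes $v$ to the equilibrium Liquid Welfare (making $\LW$ comparable to $\opt$), or you must engineer the tie-break so the dummy loses -- at which point you are back to arguing about the original bidder's deviations, which remain profitable once the dummy is out of the way. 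By contrast, the paper's single construction handles both formats: in second price, no-overbudgeting forces a bidder who bids $1$ on item $1$ to bid $0$ everywhere else, and dropping below $1$ on item $1$ means losing it for sure, so the only alternative is to collect items $2,\dots,n$ for utility at most $n-1 \ll (n^4-1)/n$. I would recommend adopting that structure (one overwhelmingly valuable item, identical bidders with unit budget) rather than trying to patch the all-zero construction.
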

\begin{proof}
Consider a simultaneous first price auction in which there are $n$ items and $n$ bidders (the proof goes
through for simultaneous second price auctions as well).  Each bidder has the same
valuation profile for the items, namely they all value the first item at $n^4$ and the remaining $n-1$ items at $1$.  Moreover,
each bidder has a budget of $1$.  Observe that the optimal solution is $n$, which is attained by giving each agent
one item, for a total Liquid Welfare of $n$ (since each player is budget-capped at $1$).

On the other hand, consider the following randomized tie-breaking rule, where all ties among the highest bids that
occur for item $1$ are broken uniformly at random (the tie-breaking rule for the remaining items can be arbitrary).
Since item $1$ is valued so highly, the pure strategy profile in which each bidder bids $1$ (i.e., their maximum possible bid)
for item $1$ is a pure Nash equilibrium.  In particular, each player wins item 1 with probability $\frac{1}{n}$, and since
all $n$ agents are tied for the maximum bid, their utility is $\frac{1}{n} \cdot (n^4 - 1)$.  If they switch,
their utility will be at most $n-1$ (the same holds for second price auctions).  This leads to a Liquid Welfare of $1$, which implies the Liquid
Price of Anarchy is $\Omega(n)$.
\end{proof}

In fact, the same problem also exists when the tie-breaking rule is deterministic and players can randomize their strategies.

\begin{thmappsec}
\label{thm:lb_mixed}
With a deterministic tie-breaking rule, there are some simultaneous first price and second price auction games
for which the Liquid Price of Anarchy is $\Omega(n)$ when agents mix their strategies.
\end{thmappsec}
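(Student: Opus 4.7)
The plan is to exhibit an explicit auction game together with a mixed Nash equilibrium whose expected Liquid Welfare is $O(1)$ while the optimum is $n$, giving $\lpoa = \Omega(n)$. The construction mirrors that of Theorem~\ref{thm:lb_rand_ties}, with the randomness formerly provided by the tie-breaking rule now supplied by the bidders' own mixed strategies.

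Concretely, I would reuse the valuation profile of Theorem~\ref{thm:lb_rand_ties}: $n$ bidders and $n$ items, where every bidder has value $V=n^4$ for item~$1$ and value $1$ for each of items $2,\ldots,n$, with common budget $B=1$. I would fix an adversarial deterministic tie-breaking rule, for concreteness ``lowest-index bidder wins,'' so the optimal Liquid Welfare is still $n$ (allocate a distinct item to each bidder). I would then produce a mixed Nash equilibrium in which each bidder randomizes their bid on item~$1$ according to a CDF $F$ determined by the indifference condition $(V-b)\,F(b)^{n-1}=V-1$; for $V\gg 1$ this forces $F$ to place an atom of mass $\bigl((V-1)/V\bigr)^{1/(n-1)}\approx 1$ at $b=0$ and spread only an $O(1/V)$-fraction of probability on the competitive tail $(0,1]$. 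Every bidder bids $0$ on items $2,\ldots,n$. With overwhelming probability every bid on item~$1$ is zero, so by tie-breaking item~$1$ is allocated to bidder~$1$, contributing $\min\{V,B\}=1$ to the Liquid Welfare; items $2,\ldots,n$ receive only zero bids and, again by tie-breaking, are assigned to bidder~$1$, who is already budget-capped and thus contributes nothing additional. Hence $\LW(\strats)=O(1)$, so the ratio to the optimum is $\Omega(n)$.

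The main obstacle is verifying that this profile is indeed a mixed Nash equilibrium. The most natural potentially profitable deviation is to shift some probability off item~$1$'s upper tail and commit a small $\epsilon$-bid on some item $j\ge 2$ to pocket an extra $1-\epsilon$ of utility. The key is to exploit the budget constraint: any strategy that places positive probability on a bid $b$ on item~$1$ combined with a bid $\epsilon>0$ on item~$j$ becomes infeasible on the joint-win event whenever $b+\epsilon>1$. I expect this will be handled by truncating the support of $F$ (and possibly introducing a small asymmetry among the bidders, e.g., designating bidder~$1$ to commit to a pure bid of $1$ on item~$1$ while the others mix) so that every such deviation is either infeasible or non-improving; carefully discharging this case analysis is the hardest step. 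Finally, the extension to simultaneous second-price auctions follows from the same bid profile: under the no-over-bidding and no-over-budgeting assumptions, the utility of any deviation in the second-price format is at least its first-price counterpart, so the same profile remains a mixed Nash equilibrium and the LW computation is unchanged.
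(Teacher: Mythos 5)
Your approach differs substantially from the paper's, and unfortunately it has a gap that I believe cannot be patched with the valuation profile you chose.

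The fundamental issue is \emph{not} the one you flag (budget infeasibility of a joint bid on item~$1$ and some item~$j\ge 2$). That deviation is indeed handled by the budget constraint when the tail of $F$ is near the budget cap. The problematic deviation is the simpler one: a bidder $i>1$ drops to $b=0$ on item~$1$ (already in the support, so this carries no opportunity cost by the indifference construction) and additionally bids $\epsilon>0$ on some item~$j\ge 2$, paying at most $\epsilon<1$, which is trivially budget feasible. In your profile every bidder values items $2,\ldots,n$ at $1$, and in your candidate equilibrium all of those items receive only zero bids, so this deviation grabs an extra $1-\epsilon$ of utility with no downside. Moreover this is not just one loose end: with $n$ bidders and $n-1$ items valued at $1$ each, at most $n-1$ bidders can be ``covered'' by a favorable tie-break, so at least one bidder always has this profitable escape. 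Even worse, under the deterministic ``lowest index wins'' rule, the symmetric atom at $b=0$ already breaks the indifference condition $(V-b)F(b)^{n-1}=V-1$ for every bidder $i>1$: at $b=0$ bidder~$1$ wins all ties, so bidder~$i$'s payoff at the atom is $0$, not $V-1$. Your fallback of making bidder~$1$ bid $1$ purely locks bidders $2,\ldots,n$ out of item~$1$ entirely, whereupon all of them chase items $2,\ldots,n$ and the Liquid Welfare becomes $\Theta(n)$, destroying the lower bound.

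The paper's construction is designed precisely to avoid this escape hatch. It has two high-value items (rather than one), uses a lexicographic tie-break on both, and — crucially — sets the value of the low-value items to $0$ for the two bidders ($n-1$ and $n$) who are permanently locked out by the tie-break; those bidders literally have no profitable alternative. The remaining bidders have expected utility roughly $2^n$ from the contested high-value items, dwarfing the $O(n)$ they could glean from the cheap items, so they also do not deviate. Your valuation profile (all bidders value all cheap items at $1$) gives the locked-out bidders a cheap outside option and hence cannot support the desired bad equilibrium. Your reasoning for the second-price extension also runs backwards: observing that deviations are \emph{at least} as profitable in second price shows they might \emph{break} the equilibrium, not preserve it; the paper instead verifies directly that the same profile is an equilibrium in the second-price format.
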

\begin{proof}
Consider a simultaneous first price auction in which there are $n$ items and $n$ bidders (the proof goes
through for simultaneous second price auctions as well), each of which has a budget of $1$.
Items $1$ and $2$ have a tie-breaking rule which favors players lexicographically (i.e., player $1$ is the most preferred while
player $n$ is the least preferred), while the rest of the items have an arbitrary tie-breaking rule.  All players value
the first two items at $2^{2n}$.  The first $n-2$ players value the remaining items $3,\ldots,n$ at $1$, while players $n-1$
and $n$ value items $3,\ldots,n$ at $0$.  Observe that the optimal solution is $n$, which is attained by allocating item $1$
to player $n-1$, item $2$ to player $n$, and giving each of the remaining $n-2$ items to each of the remaining $n-2$ players.

Consider the following bad mixed Nash equilibrium.  Players $n$ and $n-2$ have pure strategies where both of them bid
their full budget on item $1$ (recall all players have a budget of $1$).  Players $n-1$ and $n-3$ have pure strategies where
both of them bid their full budget on item $2$.  The first $n-4$ players randomize their strategies by bidding
their full budget on item $1$ with probability $\frac{1}{2}$ and bidding their full budget on item $2$ with probability
$\frac{1}{2}$.  This leads to a Liquid Welfare of $2$, since only two players win an item.  Hence, the Liquid Price
of Anarchy is $\Omega(n)$.

Let us see why this mixed strategy profile forms an equilibrium.  Players $n-1$ and $n$ never win an item since the
tie-breaking rule prefers player $n-3$ to player $n-1$ and prefers player $n-2$ to player $n$, but they cannot deviate
and improve their utility.  In particular, both players value items $n-3,\ldots,n$ at $0$, so they cannot bid on such
items.  Moreover, even if player $n$ deviates and bids their full budget on item $2$, they still lose since the
tie-breaking rule prefers player $n-3$ (similarly, player $n-1$ cannot improve their utility by deviating to item $1$
since the tie-breaking rule prefers player $n-2$).  Player $n-2$ currently wins exactly when players $1,\ldots,n-4$
all randomly choose item $2$, which happens with probability $\frac{1}{2^{n-4}}$, and hence has an expected utility
of at least $\frac{1}{2^{n-4}}(2^{2n} - 1) \geq 2^n$ (similarly, player $n-3$ wins when players $1,\ldots,n-4$ all
choose item $1$, and hence player $n-3$ has the same expected utility as player $n-2$).  Consider any other possible
pure strategy for player $n-2$.  They cannot win item $1$ since they bid strictly less than $1$, and they cannot win
item $2$ since player $n-3$ is preferred by the tie-breaking rule.  Their expected utility in any such deviation is
at most $n-2 < 2^n$, which is attained by winning each of the items valued at $1$.  A similar argument holds for player $n-3$, the
only difference being that if they deviate and bid their whole budget on item $1$, they can win the item (but this results
in the same expected utility).  Finally, observe that each player $i > n-3$ wins an item of value $2^{2n}$
when all players $1,\ldots,i-1$ bid on the other item of value $2^{2n}$, and hence player $i$'s expected utility is
at least $2^n$.  Any deviation will result in the same expected utility, or an expected utility of at most $n-2$.
Hence, we have a mixed Nash equilibrium.
\end{proof}

In the following theorems, we consider the affect of shares on the Liquid Price of Anarchy.  We write the proofs assuming
the market clearing mechanism for shares discussed in Section~\ref{sec:main}, but the proofs can easily be extended
to handle the share model where shares are treated as separate items and buyers can submit different bids for every
single share.

\begin{thmappsec}\label{thm:lb_rand_ties_shares}
With a randomized tie-breaking rule, there are some simultaneous first price and second price auction games
for which the Liquid Price of Anarchy is $\Omega\left(\frac{n}{\ncopy}\right)$ when the number of shares is $\ncopy$, even
when agents play pure strategies.
\end{thmappsec}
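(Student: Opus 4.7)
The plan is to generalize the construction used in the proof of Theorem~\ref{thm:lb_rand_ties} by scaling the number of effective winners up from one to $\ncopy$. Concretely, I would take $n$ bidders and $n$ items, each item split into $\ncopy$ identical shares, with a common valuation profile $\vij[i1]{} = n^4$ and $\vij{} = 1$ for $j \ge 2$, and budget $\budgeti = 1$ for every bidder. Tie-breaking on shares of item $1$ is uniform at random (i.e., when more top bids come in than remaining shares, the winning subset is chosen uniformly at random among the tied top bidders); tie-breaking for the other items can be arbitrary. The optimal Liquid Welfare is then $\opt = n$, achieved by handing each bidder a distinct item (each contributes $\min\{v,B\}=1$).

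Next I would exhibit the bad pure strategy profile: every bidder requests exactly one share of item $1$ at price $1$ per share (the maximum amount consistent with their budget). In the market-clearing mechanism of Section~\ref{sec:main}, all $n$ bids for item $1$ are tied at the top, so the $\ncopy$ shares are distributed uniformly at random to $\ncopy$ of the $n$ bidders; each loser gets nothing. A winning bidder obtains a share worth $\vij[i1]{}/\ncopy = n^4/\ncopy$ and pays $1$, yielding a nonnegative utility of $n^4/\ncopy - 1$; losers get utility $0$. Each bidder's expected utility is therefore $\tfrac{\ncopy}{n}(n^4/\ncopy - 1) = \Theta(n^3)$.

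Then I would verify the Nash condition: no deviation can improve on $\Theta(n^3)$. Any bid at price strictly less than the uniform price on item $1$ is beaten by everyone else and yields $0$ on that item. Any deviation that diverts budget towards items $2,\ldots,n$ can collect at most $n-1$ units of value (and is further capped by the budget of $1$), which is far less than $\Theta(n^3)$. Using a higher bid per share on item $1$ is infeasible because of the budget $\budgeti=1$, so the strategy is best-response. Consequently, the profile is a pure Nash equilibrium (with randomized tie-breaking). The Liquid Welfare of this equilibrium is computed by noting that exactly $\ncopy$ bidders receive any share at all, and each such bidder's value is at least their budget, so contributes $\min\{n^4/\ncopy,1\}=1$; hence $\LW = \ncopy$, giving $\opt/\LW = n/\ncopy$.

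Finally, I would indicate that the same example transfers to simultaneous second price auctions and to the ``shares as separate items'' model without change: because all surviving bids on item $1$ are equal to $1$, the second-highest bid per share also equals $1$, so payments and incentives coincide with the first-price case; and in the separate-shares model, bidding $1$ on one share per bidder produces the same random tie-breaking outcome. The main obstacle in turning this sketch into a formal proof is making the market-clearing allocation rule completely precise in the presence of massive ties (in particular specifying which winning subset of bidders is chosen and computing the exact expected allocation), and then patiently ruling out every class of deviation (fewer shares, different items, mixing across items) in both the first-price and second-price formats; once this bookkeeping is done, the ratio $\opt/\LW = n/\ncopy$ falls out immediately.
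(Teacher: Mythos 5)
Your proposal is correct and follows essentially the same construction and argument as the paper's proof: $n$ bidders, $n$ items each split into $\ncopy$ shares, all bidders valuing item $1$ at $n^4$ and the rest at $1$, budget $1$ each, with random tie-breaking on item $1$'s shares, yielding the pure equilibrium where everyone bids their full budget on one share of item $1$ and hence $\LW=\ncopy$ while $\opt=n$. The deviation analysis (underbidding on item $1$ is useless, the budget caps bids at $1$ per share, and items $2,\ldots,n$ are worth only $O(n)$ in total) likewise matches the paper.
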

\begin{proof}
We proceed in a manner similar to the proof of Theorem~\ref{thm:lb_rand_ties}. In particular, we again consider a simultaneous
first price auction in which there are $n$ items and $n$ bidders, where each of the items has $\ncopy$ shares.  Each bidder has the same
valuation profile for the items, namely they all value the first item at $n^4$, and hence extract a value of $\frac{n^4}{\ncopy}$
per share, while the remaining $n-1$ items are valued at $1$, each share of which has value $\frac{1}{\ncopy}$.  Moreover,
each bidder has a budget of $1$ as before.  Observe that the optimal solution is $n$, which is attained by giving each agent
every share of a particular item, for a total Liquid Welfare of $n$ (since each player is budget-capped at $1$).

On the other hand, consider the following randomized tie-breaking rule, where all ties among the highest bids that
occur for each share of item $1$ are broken uniformly at random (the tie-breaking rule for the remaining shares can be arbitrary).
Since item $1$ is valued so highly, the pure strategy profile in which each bidder wants $1$ share of the first item and bids $1$
per share (i.e., their maximum possible bid) is a pure Nash equilibrium.  In particular, each player wins one share of item 1
with probability $\frac{\ncopy}{n}$, and since all $n$ agents are tied for the maximum bid, their utility is
$\frac{\ncopy}{n} \cdot (\frac{n^4}{\ncopy} - 1)$.  If they switch, their utility will be at most $n-1$ (the same holds for second price auctions).
Notice that no player would ever bid on multiple shares of the first item, since this would mean they are bidding strictly less than
one per share and never win a share. This leads to a Liquid Welfare of at most $\ncopy$, which implies the Liquid Price of Anarchy
is at least $\frac{n}{\ncopy}$.
\end{proof}

In fact, we can obtain a similar negative result when players can mix their strategies, even when the tie-breaking rule
is deterministic.  As mentioned earlier, the proof is written in the context of the market clearing mechanism
for shares, but can easily be extended to the case where shares are treated as separate items and players
can submit different bids for each share.

\begin{thmappsec}[Theorem~\ref{thm:lb_mixed_shares}]
With a deterministic tie-breaking rule, there are some simultaneous first price and second price auction games
for which the Liquid Price of Anarchy is $\Omega\left(\frac{n}{\ncopy}\right)$ when the number of shares is $\ncopy$ and agents
mix their strategies.
\end{thmappsec}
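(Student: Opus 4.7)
The plan is to adapt the construction from the proof of Theorem~\ref{thm:lb_mixed} to the share model so that the worst-case ratio scales as $\Omega(n/h)$ instead of $\Omega(n)$. Intuitively, the original bad mixed equilibrium attained $\LW=O(1)$ because exactly two extremely high-valued items were in play and each was indivisible; once every item is split into $h$ shares, the same bottleneck accommodates $2h$ winners, so $\LW$ inflates to $O(h)$ while the optimum remains $\Omega(n)$.

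Concretely, I would consider $n$ players with unit budgets and $n$ items each divided into $h$ shares. Items $1,2$ are ``big'' with $v_{ij}=2^{2n}$ (value per share $2^{2n}/h$) for every player~$i$, and items $3,\dots,n$ are ``small'' with $v_{ij}=1$ for $i\le n-2$ and $v_{ij}=0$ otherwise. Tie-breaking is lexicographic on every share. An optimal allocation gives small item $i+2$ to player~$i$ for $i\le n-2$ and the two big items to players $n-1,n$, so $\opt=n$. The bad mixed profile is: each player~$i$ independently picks one of $\{1,2\}$ uniformly at random and submits the bid $(p,q)=(1,1)$ on that big item (one share at price~$1$); additionally, player~$1$ bids $(0,h)$ on every small item. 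For big item~$j$ let $k_j$ denote the random number of players who picked it; since all bids are tied at price~$1$, the lex-top~$h$ among the $k_j$ bidders each win one share. When $n\ge 2h$, $\min(k_1,k_2)\ge h$ with high probability, so exactly $2h$ big shares are allocated; every winner contributes $\min\{2^{2n}/h,1\}=1$ to $\LW$ (player~$1$'s extra small-item winnings are absorbed into the $B_1=1$ cap). Thus $\LW\le 2h$ and $\lpoa\ge n/(2h)=\Omega(n/h)$.

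To verify the Nash condition, I would analyze a generic unilateral deviation. Player~$i$'s expected equilibrium utility equals $P_i\cdot(2^{2n}/h-1)$, where $P_i=\Pr[\mathrm{Bin}(i-1,\tfrac12)<h]$ is the chance that fewer than~$h$ lex-preferred players picked the same big item as~$i$. Even the crude bound $P_i\ge 2^{-(i-1)}$ gives equilibrium utility $\ge 2^{n+1}/h$ for every $i\le n$, which dwarfs the at-most $n-2$ utility obtainable from any deviation restricted to small items. The remaining candidate deviations are blocked by budget and by market-clearing structure: two price-$1$ bids on distinct big shares risk a $2>B_i$ payment (utility $-\infty$); a bid strictly below~$1$ on a big share loses to the price-$1$ bidders whenever $k_j\ge h$; and any positive small bid combined with a full-budget big bid triggers positive-probability bankruptcy. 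Player~$1$'s own strategy is optimal because she already extracts the maximum big-share utility together with all of the small items, which saturates her $B_1=1$ cap and hence no alternative bid can improve her liquid-welfare-relevant utility.

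The main obstacle will be carefully ruling out the various combined big/small deviations: the construction relies on player~$1$'s $(0,h)$ small-item bids to block every other player from scooping up the $n-2$ small-item value ``for free'' with a vanishingly small positive bid, and budget-feasibility has to be tracked under positive-probability winning events. The second-price version under the no-overbidding and no-overbudgeting assumptions uses the identical construction: when $k_j\ge h+1$ the second-highest bid on item~$j$ is still~$1$, so per-share payments are unchanged; the bound $\LW\le 2h$ still holds, and all deviation inequalities remain valid because second-price payments never exceed the corresponding first-price payments.
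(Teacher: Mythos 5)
Your proposed profile is \emph{not} a Nash equilibrium, and the failure is precisely the aspect in which your construction departs from the paper's. You have all $n$ players independently and uniformly randomize between the two big items, each submitting a single $(1,1)$ bid. With positive probability (e.g.\ probability $2^{-(n-1)}$, all other players pick item~$2$) fewer than $h$ players place a price-$1$ bid on item~$1$, so some of its $h$ shares are not contested at price~$1$. Any player $i$ can then improve by \emph{adding} a bid $(0,h)$ on the big item they did \emph{not} select: this costs nothing, never risks bankruptcy (payment stays at exactly $1$), and captures the leftover shares whenever the other item is under-bid. The expected gain is at least $\Pr[k_j<h]\cdot \frac{2^{2n}}{h}\ge 2^{-(n-1)}\cdot\frac{2^{2n}}{h}=\frac{2^{n+1}}{h}>0$, a strict improvement, so the symmetric all-mix profile cannot be an equilibrium. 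The paper avoids exactly this by \emph{not} having everyone mix: it fixes $h+1$ players to bid purely $(1,1)$ on item~$1$ and another $h+1$ to bid purely $(1,1)$ on item~$2$, and lets only the remaining $n-2h-2$ players mix. This guarantees that each big item always has at least $h+1$ price-$1$ bids, so no share is ever available below price~$1$ and the $0$-bid (or any sub-$1$) deviation never wins.

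A secondary confusion: you assert that player~$1$'s $(0,h)$ bids on the small items ``block'' other players from scooping up the small-item value. A bid of $0$ cannot block a positive bid in a first-price auction; any $\eps>0$ outbids it. Those bids are also unnecessary, since (as you compute) each player's equilibrium big-share utility already dwarfs the $n-2$ achievable from small items. The real obstacle is the big-item $0$-bid deviation above, which your analysis does not consider and which cannot be waved away by any blocking argument.
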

\begin{proof}
We proceed in a manner similar to the proof of Theorem~\ref{thm:lb_mixed}, but adapt the example for the share model.
Consider a simultaneous first price auction in which there are $n$ items, each with $\ncopy$ shares, and $n$ bidders,
each with a budget of $1$ (the proof goes through for simultaneous second price auctions as well).
Items $1$ and $2$ have a tie-breaking rule which favors players lexicographically (i.e., player $1$ is the most preferred while
player $n$ is the least preferred), while the rest of the items have an arbitrary tie-breaking rule.  All players value
the first two items at $2^{2n}$, each share of which is valued at $\frac{2^{2n}}{\ncopy}$.
The first $n-2$ players value the remaining items $3,\ldots,n$ at $1$, each share of which is valued at $\frac{1}{\ncopy}$, while players $n-1$
and $n$ value items $3,\ldots,n$ at $0$.  Observe that the optimal solution is $n$, which is attained by allocating item $1$
to player $n-1$, item $2$ to player $n$, and giving each of the remaining $n-2$ items to each of the remaining $n-2$ players
(here, giving an item to a player means giving all shares of the item to the player).

Consider the following bad mixed Nash equilibrium.  Players $n,n-2,\ldots,n-2 \cdot \ncopy$ have pure strategies where all $\ncopy+1$
of them bid their full budget on one share of item $1$ (recall all players have a budget of $1$).  Players $n-1, n-3,\ldots, n-2\cdot\ncopy - 1$
have pure strategies where all $\ncopy + 1$ of them bid their full budget on one share of item $2$.  The first $n-2\cdot\ncopy - 2$ players
randomize their strategies by bidding their full budget on one share of item $1$ with probability $\frac{1}{2}$ and bidding their full budget on
one share of item $2$ with probability $\frac{1}{2}$.  This leads to a Liquid Welfare of $2 \cdot \ncopy$, since only $2 \cdot \ncopy$
players win a share.  Hence, the Liquid Price of Anarchy is $\Omega\left(\frac{n}{\ncopy}\right)$.

Let us see why this mixed strategy profile forms an equilibrium.  Observe that, no matter how the first $n-2\cdot\ncopy - 2$
players randomly choose which item to bid on, there are always at least $\ncopy+1$ players bidding on item $1$ and at least $\ncopy+1$
players bidding on item $2$.  Hence,
player $n$ never wins a share since the tie-breaking rule prefers all other $\ncopy$ agents that purely bid on item $1$,
and similarly player $n-1$ never wins a share since the tie-breaking rule prefers all other $\ncopy$ agents playing purely
on item $2$.  These two players cannot deviate and improve their utility, since they value items $3,\ldots,n$ at $0$ and hence
cannot bid on such items.  Moreover, even if player $n$ deviates and bids their full budget on one share of item $2$,
they still lose due to the tie-breaking rule (similarly, player $n-1$ cannot improve their utility by deviating to item $1$
due to the tie-breaking rule).  In addition, player $n$ is indifferent between bidding $1$ on item $1$ and bidding less than $1$,
since their utility is $0$ either way (similarly, player $n-1$ is indifferent between bidding $1$ on item $2$ and bidding less than $1$).
In fact, every player who bids less than $1$ on item $1$ or less than one on item $2$ loses the item with certainty.

Consider any other player $n - 2 \cdot i$ purely bidding on item $1$ (so $1 \leq i \leq \ncopy$).
There are $\ncopy - i$ players who also purely bid on item $1$ and are preferred by the tie-breaking rule, but if player $n-2 \cdot i$
deviates and purely bids their full budget on one share of item $2$, the probability of winning a share decreases since there are
$\ncopy - i + 1$ players who purely bid on item $2$ and are preferred by the tie-breaking rule.  If player
$n-2 \cdot i$ bids less than $1$ on item $1$ and less than $1$ on item $2$, then their utility is at most $n-2$ (attained by winning
all shares of items $3,\ldots,n$).  However, in the Nash equilibrium, the probability that player $n-2 \cdot i$ wins a share is at least
the probability that all players who mix their strategies randomly choose item $2$, which is given by $\frac{1}{2^{n - 2\ncopy - 2}}$.
Hence, their expected utility is at least $\frac{1}{2^{n - 2\ncopy - 2}}\left(\frac{2^{2n}}{\ncopy} - 1\right) \geq 2^n \geq n-2$, so they cannot improve
their expected utility by deviating.  The argument that players purely bidding on item $2$ cannot improve their expected utility
by deviating is similar.  The only difference is that, for any such player $n - 2\cdot i - 1$ for $1 \leq i \leq \ncopy$,
the number of players that purely bid on item $2$ and are preferred by the tie-breaking rule is $\ncopy-i$, which also holds if
player $n - 2\cdot i - 1$ were to purely bid on item $1$.  Hence, such players would win a share of item $1$ after deviating
with the same probability of winning a share in the mixed Nash equilibrium, so they are indifferent.
Finally, observe that each player $i \geq n-2\ncopy - 2$ is indifferent between item $1$ and item $2$, and they have an expected
utility at least that of any player who bids purely and wins a share with positive probability, which is at least $2^n$.  This is at least
as much as their expected utility in any deviation, and hence we have a mixed Nash equilibrium.
\end{proof}

\section{Bayesian Nash Equilibrium}
\label{sec:appendix_bayesian}
We assume agents have additive valuations and submit bids on shares, and if they receive
an $\alloc_{ij}$ fraction of shares of item $j$, then their value is given by $\vali(\alloc_{ij}) = \vij{}\cdot \alloc_{ij}$
($\vali$ is technically defined over sets, but for singleton sets we will often write $\vij{}$ for ease of notation).
We consider a {\em Bayesian} setting, where the bidders' valuations are drawn independently from distributions $\disti[1], \ldots, \disti[n]$
and write $\dists = \disti[1] \times \cdots \times \disti[n]$, so that $\vals$ is drawn from $\dists$.   We assume
that each valuation $\vali$ is realized together with an associated budget $\budgeti$ which we usually will omit in our notations for brevity.
In fact, we will use the notation $\budgeti(\vali)$ to emphasize that each player's budget can be correlated with their
valuation.  We think of $\dists$ as being public knowledge, whereas the realization $\vali$ is known only to agent $i$.

We further assume that the buyers bid according to a Bayesian Nash equilibrium $\bids\sim\strats(\vals)$, where $\vals\sim\dists$.
When buyers bid in simultaneous auctions, this induces a distribution of prices over all shares of items
$\prices\sim\pricedist$ from a distribution $\pricedist$ (e.g., winning bids in first price auctions,
namely $\pricejl[j]{\ell} = \max_i \bid[i]{j}^\ell$ where $\bid[i]{j}^\ell$ is player $i$'s bid for share $\ell$ of item $j$).
In particular, for all items we
can define an ``expected price per item'' at equilibrium or just a ``price per item'' as
$\rprices=(\rpricei[1],\dots,\rpricei[m])$, where $\rpricej=\alpha\sum_{\ell=1}^{\ncopy} \Ex[\pricedists]{\pricejl[j]{\ell}}$, for some $\alpha>1$
($\alpha=2$ will be sufficient for us). This induces a natural ``expected price per share,'' namely $\frac{\rpricej}{\ncopy}$.
One simple observation about $\rprices$ is the following:
\begin{observation}
Revenue is related to prices: $\revenue(\strats)=\frac{1}{\alpha}\sum_{j=1}^{m}\rpricej$, where $\revenue(\strats)$ denotes the expected
revenue at the equilibrium profile $\strats$.
\end{observation}
We next show that if players bid on some fraction of shares of item $j$ uniformly at random according to $\rpricej$,
then they win a large number of shares in expectation.
\begin{claim}\label{cl:copies_bayesian}
For any item $j$, if a player bids on a $\delta$-fraction of shares chosen uniformly at random of item $j$ at a given price
$\frac{\rpricej}{\ncopy}$ per share, then the player receives in expectation at least $\ncopy\cdot\delta\cdot\left(1-\frac{1}{\alpha}\right)$ shares of the item
(i.e., at least a $\delta\cdot\left(1-\frac{1}{\alpha}\right)$-fraction of item $j$).
\end{claim}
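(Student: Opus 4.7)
The plan is to mirror the proof of Claim~\ref{cl:copies} essentially verbatim, since the only change in the Bayesian setting is that the price random variables $\pricejl[j]{\ell}$ are now distributed according to $\pricedist$ (which averages over both the valuation draws $\vals \sim \dists$ and the mixed bids induced by the Bayesian equilibrium strategies $\strats(\vals)$), rather than over the mixed strategies alone. Crucially, the definition $\rpricej = \alpha\sum_{\ell=1}^{\ncopy} \Ex[\pricedists]{\pricejl[j]{\ell}}$ is still an unconditional expectation under $\pricedist$, so the same Markov-based argument applies.

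Concretely, I would argue by contradiction: suppose the expected number of shares of item $j$ received is strictly less than $\ncopy \cdot \delta \cdot \left(1-\frac{1}{\alpha}\right)$. Since the deviating bidder chooses each share $\ell$ independently with probability $\delta$ and wins share $\ell$ precisely when the competing price $\pricejl[j]{\ell}$ is below $\frac{\rpricej}{\ncopy}$, this assumption rewrites (using independence of the bidder's random choice from $\pricedist$) as
\[
\sum_{\ell=1}^{\ncopy}\delta\cdot\Prx[\pricedist]{\pricejl[j]{\ell} < \tfrac{\rpricej}{\ncopy}} < \delta\cdot\ncopy\cdot\left(1-\tfrac{1}{\alpha}\right).
\]

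Then I would apply Markov's inequality to each $\pricejl[j]{\ell}$ under $\pricedist$ to get $\Ex[\pricedist]{\pricejl[j]{\ell}} \ge \frac{\rpricej}{\ncopy}\Prx[\pricedist]{\pricejl[j]{\ell}\ge \frac{\rpricej}{\ncopy}}$, sum over $\ell$, and use the definition of $\rpricej$ to derive
\[
\frac{\rpricej}{\alpha} = \sum_{\ell=1}^{\ncopy}\Ex[\pricedist]{\pricejl[j]{\ell}} \ge \frac{\rpricej}{\ncopy}\sum_{\ell=1}^{\ncopy}\left(1 - \Prx[\pricedist]{\pricejl[j]{\ell}<\tfrac{\rpricej}{\ncopy}}\right) > \rpricej - \rpricej\left(1-\tfrac{1}{\alpha}\right) = \frac{\rpricej}{\alpha},
\]
a contradiction. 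This closes the argument.

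I don't foresee a real obstacle; the claim is essentially a Bayesian re-statement of Claim~\ref{cl:copies} and relies only on the unconditional expectation built into the definition of $\rpricej$ plus Markov's inequality. The one small thing to be careful about is making explicit that the bidder's randomness in selecting which $\delta$-fraction of shares to bid on is independent of $\pricedist$, so that the probability of winning share $\ell$ factors cleanly as $\delta \cdot \Prx[\pricedist]{\pricejl[j]{\ell}<\frac{\rpricej}{\ncopy}}$; everything else is identical to the non-Bayesian proof.
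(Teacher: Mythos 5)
Your proof is correct and mirrors the paper's own argument exactly: a contradiction set up from the failure of the expected-shares bound, followed by Markov's inequality applied to the unconditional price expectation defining $\rpricej$. The one extra remark you make about the bidder's random share selection being independent of $\pricedist$ is precisely what the paper implicitly uses in passing from the probability that $i$ bids on share $\ell$ to the factor $\delta$.
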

\begin{proof}
Suppose towards a contradiction that the expected number of shares won by bidder $i$ is less than
$\delta \cdot \ncopy \cdot \left(1 - \frac{1}{\alpha}\right)$. In particular, it means that
\[\delta \cdot h \cdot \left(1 - \frac{1}{\alpha}\right)>
\sum_{\ell=1}^{\ncopy}\Prx{i\text{ bids on share }\ell}\cdot \Prx[\pricedists]{\pricejl[j]{\ell} < \frac{\rpricej}{\ncopy}}\ge
\sum_{\ell=1}^{\ncopy}\delta \cdot \Prx[\pricedists]{\pricejl[j]{\ell} < \frac{\rpricej}{\ncopy}}
.\]

We further use the definition of $\rpricej$ and Markov's inequality to obtain a contradiction as follows:
\begin{align*}
\frac{\rpricej}{\alpha} = \sum_{\ell=1}^\ncopy\Ex[\pricedists]{\pricejl[j]{\ell}} &\ge
\sum_{\ell=1}^{\ncopy}\frac{\rpricej}{\ncopy} \cdot \Prx[\pricedists]{\pricejl[j]{\ell} \ge \frac{\rpricej}{\ncopy}}\\
&=\sum_{\ell=1}^{\ncopy}\frac{\rpricej}{\ncopy}\left(1 - \Prx[\pricedists]{\pricejl[j]{\ell} < \frac{\rpricej}{\ncopy}}\right) >
\rpricej - \frac{\rpricej}{\ncopy} \cdot \ncopy \cdot\left(1 - \frac{1}{\alpha}\right).
\end{align*}
%
%
\end{proof}
When relating prices to Liquid Welfare we notice that
\begin{observation}
Revenue is bounded by the Liquid Welfare:
$\revenue(\strats)\le \LW(\strats)$, where $\LW(\strats)$ denotes the expected Liquid Welfare at the equilibrium profile $\strats$.
\end{observation}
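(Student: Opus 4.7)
The plan is to establish the pointwise bound $p_i(\bids) \le \min\{\vali(\alloci(\bids)), \budgeti\}$ almost surely under the equilibrium distribution, and then sum over players and take expectations. Two complementary bounds will be combined: a budget bound and a value bound.

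For the budget bound, I would observe that in a simultaneous first price auction with the share model, player $i$'s total payment is $p_i(\bids) = \sum_{\sharej \in \alloci(\bids)} \bid[i]{j}^\ell$. For any bid realization in the support of the BNE strategy $\strati(\vali)$, budget feasibility forces $p_i(\bids) \le \budgeti$; otherwise the utility definition assigns $-\infty$, and the player would strictly prefer any feasible deviation (for instance, bidding $0$ on every share). This is a pointwise bound over all realizations.

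For the value bound, I would show that almost surely $p_i(\bids) \le \vali(\alloci(\bids))$ at any Bayesian Nash equilibrium. Suppose not: then there exist a player $i$, a valuation $\vali$ in the support of $\disti$, and a positive-probability event under $\strats$ on which the player wins some share $\sharej$ with $\bid[i]{j}^\ell > \vij{}/\ncopy$. Consider the single-share cap deviation that reduces just this bid to $\vij{}/\ncopy$. Against any fixed profile of opponents' bids, this deviation only alters the outcome when a competitor's top bid on $\sharej$ lies in $(\vij{}/\ncopy, \bid[i]{j}^\ell]$, in which case the player merely loses the share instead of winning at marginal loss $\vij{}/\ncopy - \bid[i]{j}^\ell < 0$; otherwise the allocation is unchanged. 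So the cap weakly improves utility pointwise, and on the positive-probability event where the player would overpay and win it strictly improves utility. This contradicts $\strati(\vali)$ being a best response.

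Combining the two bounds yields $p_i(\bids) \le \min\{\vali(\alloci(\bids)), \budgeti\}$ with probability one. Summing over $i$ and taking expectations over $\vals \sim \dists$ and $\bids \sim \strats(\vals)$ gives
\[
\revenue(\strats) = \sum_i \Ex{p_i(\bids)} \le \sum_i \Ex{\min\{\vali(\alloci(\bids)), \budgeti\}} = \LW(\strats).
\]
The main subtlety is the ``almost surely'' qualifier in the value bound: bidding above value on a single share is not by itself a contradiction (the player might never win with such a bid), so one must use the equilibrium best-response property together with the pointwise-vs-strict improvement of the single-share cap, rather than appealing to any blanket no-overbidding assumption.
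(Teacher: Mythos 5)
Your proof is correct and follows the approach the paper implicitly relies on for this observation: establish the pointwise bound $p_i(\bids)\le\min\{\vali(\alloci(\bids)),\budgeti\}$ and then sum over players and take expectations, exactly the computation the paper spells out for the second-price analogue inside the proof of Theorem~\ref{thm:mixed_second_price_bayesian}, where those pointwise bounds are granted outright by the no-overbidding and no-over-budgeting assumptions. For first price no such assumptions are imposed, and you correctly derive both bounds from the equilibrium condition itself --- the budget bound from the $-\infty$ utility penalty and the value bound from the single-share cap deviation --- while rightly flagging the ``almost surely'' subtlety that an above-value bid on a never-won share is unobjectionable.
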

For each fixed profile $\vals$ of additive valuations we consider the following fractional relaxation (LP) of the allocation problem with the goal of optimizing Liquid Welfare:
\begin{alignat*}{3}
 & \text{Maximize} & \sum_{i=1}^n\sum_{j=1}^{m} \vij{}\varalloc_{ij}& \\
 & \text{Subject to} \quad& \sum_{j}\vij{}\varalloc_{ij}& \leq \budgeti(\vali) & \quad \forall & i\\
                 && \sum_{i}\varalloc_{ij} &\leq 1 \quad & \quad \forall & j\\
                 && \varalloc_{ij} &\geq 0 & \quad \forall & i,j
\end{alignat*}
We denote by $\oallocs(\vals)=(\oalloc_{ij})$ the optimal solution to the above LP.
Notice that the optimal fractional solution for the Liquid Welfare would never benefit from
allocating a set of items to a player such that their value for the set exceeds their budget.

We slightly abuse notations by defining $\oalloc_{ij}(\vali)\eqdef\Ex[\valsmi\sim\distsmi]{\oalloc_{ij}(\vali,\valsmi)}$
for each bidder $i$ with fixed valuation $\vali$. We observe that the vector of allocations $\oalloci(\vali)$ is budget feasible, i.e., 
$\sum_{j}\vij{}\oalloc_{ij}(\vali) \le \budgeti(\vali)$, as it is an average of
budget feasible allocations for the fixed valuation $\vali$ and budget $\budgeti(\vali)$.
Furthermore, solutions to this LP give an upper bound on the optimal Liquid Welfare $\opt$.
\begin{observation}
The optimal fractional solution to LP is better than the optimal allocation:
$\sum\limits_{i=1}^n\Ex[\vali\sim\disti]{\sum\limits_{j=1}^{m} \vij{}\cdot\oalloc_{ij}(\vali)}\ge \opt$.
\end{observation}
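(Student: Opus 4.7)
The plan is to unfold the definition of the per-agent expected allocation, reduce to a pointwise (per-realization) claim, and then establish the pointwise claim by exhibiting a feasible LP solution built from the integral optimum.

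First I would rewrite the left-hand side. By definition $\oalloc_{ij}(\vali)\eqdef\Ex[\valsmi\sim\distsmi]{\oalloc_{ij}(\vali,\valsmi)}$, and $\disti$ are independent across agents, so
\bee
\sum_{i=1}^n\Ex[\vali\sim\disti]{\sum_{j=1}^{m} \vij{}\cdot\oalloc_{ij}(\vali)}
= \Ex[\vals\sim\dists]{\sum_{i=1}^n\sum_{j=1}^{m}\vij{}\cdot\oalloc_{ij}(\vals)}.
\eee
Since $\opt=\Ex[\vals\sim\dists]{\opt(\vals)}$ by the Bayesian interpretation of the benchmark, it suffices to prove the pointwise inequality $\sum_{i,j}\vij{}\cdot\oalloc_{ij}(\vals)\ge \opt(\vals)$ for every realization $\vals$.

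Next I would fix $\vals$ and exhibit a feasible LP solution whose objective equals $\opt(\vals)$; since $\oallocs(\vals)$ is the LP optimizer, this will suffice. Let $S_1^*,\ldots,S_n^*$ be an optimal integral partition for $\opt(\vals)$, so $\opt(\vals)=\sum_i\min\{\vali(S_i^*),\budgeti(\vali)\}$. Define a candidate $\varallocs$ as follows: for each bidder $i$, if $\vali(S_i^*)\le \budgeti(\vali)$ set $\varalloc_{ij}=1$ for $j\in S_i^*$ (and $0$ otherwise); if $\vali(S_i^*)> \budgeti(\vali)$ set $\varalloc_{ij}=\frac{\budgeti(\vali)}{\vali(S_i^*)}$ for $j\in S_i^*$ (and $0$ otherwise). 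In both cases $\sum_j \vij{}\varalloc_{ij}=\min\{\vali(S_i^*),\budgeti(\vali)\}\le \budgeti(\vali)$, so the budget constraint holds; each $\varalloc_{ij}\in[0,1]$ and the sets $S_i^*$ are disjoint, so $\sum_i\varalloc_{ij}\le 1$; nonnegativity is immediate. Hence $\varallocs$ is LP-feasible with objective value exactly $\opt(\vals)$, giving $\sum_{i,j}\vij{}\cdot\oalloc_{ij}(\vals)\ge \sum_{i,j}\vij{}\cdot\varalloc_{ij}=\opt(\vals)$.

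Taking expectation over $\vals\sim\dists$ then yields the stated inequality. There is no real obstacle: the only subtlety is the scaling-down step, which is what makes the LP benefit exactly $\min\{\vali(S_i^*),\budgeti(\vali)\}$ match the truncated objective $\opt(\vals)$ rather than the raw $\vali(S_i^*)$, aligning with the paper's preceding remark that the optimal fractional solution never allocates above a bidder's budget.
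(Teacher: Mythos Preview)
Your argument is correct. The paper states this as an Observation without proof, merely noting beforehand that ``the optimal fractional solution for the Liquid Welfare would never benefit from allocating a set of items to a player such that their value for the set exceeds their budget''; your scaling-down construction is precisely the formalization of that remark, and the reduction to a pointwise inequality via the definition of $\oalloc_{ij}(\vali)$ and independence of the $\disti$ is exactly what is needed in the Bayesian setting.
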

We now define some notation that will be useful in order to obtain our result.  We
let $q_{ij}(\vals)$ and $q_{ij}(\vali)$ be the expected fraction of shares that player $i$ receives from item $j$ at
an equilibrium strategy $\strats$ for a fixed valuation profile $\vals$ and $\valsmi\sim\distsmi$, respectively.
In addition, for each agent $i$, we consider a set of high value items $J_i(\vali) \eqdef \left\{j \mid \vij{} \ge \rpricej\right\}$.
We further define $Q_i(\vals)$ to be the probability that $\vali(\alloci) \ge \budgeti(\vali)$ at equilibrium for a fixed valuation profile
$\vals$ (recall that $\alloci$ denotes the random set that player $i$ receives in the Bayesian Nash equilibrium).
We similarly define $Q_i(\vali) = \Ex[\valsmi]{Q_i(\vals)}$ for a fixed valuation $\vali$.
We also define three sets of bidders, the first two of which are for budget feasibility reasons and the last of which is for bidders that often
fall under their budget in equilibrium (these sets need not be disjoint).  In particular, for a fixed parameter $\gamma >1$
to be determined later, we define sets $\good_1$, $\good_2$, and $\good_3$:
\begin{align*}
\good_1(\vals)\eqdef\left . \Big\{i~~\middle| \right .  \gamma\sum_{j\in J_i(\vali)} \rpricej\cdot q_{ij}(\vali)&\le\budgeti(\vali)\Big\}, \quad
\good_2(\vals)\eqdef\left . \Big\{ i~~\middle|\right . \sum_{j\in[m]}\frac{\rpricej}{\ncopy}\le\budgeti(\vali)\Big\}, \quad\text{and}\quad \\
\good_3(\vals)&\eqdef\left . \Big\{ i~~\middle|\right . Q_i(\vali) \leq \frac{1}{2 \gamma}\Big\}.
\end{align*}
Throughout our proof, we focus on bidders in the set $\good(\vals) \eqdef \good_1(\vals) \cap \good_2(\vals) \cap \good_3(\vals)$.
We define sets $\notgood_1(\vals)\eqdef[n]\setminus\good_1(\vals)$,
$\notgood_2(\vals)\eqdef[n]\setminus\good_2(\vals)$, $\notgood_3(\vals)\eqdef[n]\setminus\good_3(\vals)$, and $\notgood(\vals)\eqdef[n]\setminus\good(\vals)$.
We note that although the set $\good$ depends on the entire valuation profile $\vals$,  
whether $i\in\good(\vals)$ or $i\in\notgood(\vals)$ is determined only by $\vali$ alone.
To this end, we need
to argue that bidders outside of the set $\good$ do not contribute a lot to the Liquid Welfare at equilibrium $\strats$.
\begin{claim}
\label{cl:notgood_budgets_bayesian}
The total budget of players in $\notgood$ is small:
$\Ex[\vals]{\sum_{i \in \notgood(\vals)}\budgeti(\vali)} < \alpha \cdot \left(\gamma+\frac{n}{\ncopy}\right) \cdot \revenue(\strats) + \Ex[\vals]{\sum_{i \in \notgood_3(\vals)}\budgeti(\vali)}$.
\end{claim}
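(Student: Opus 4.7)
The plan is to mimic the proof of Claim 3.2 in the pure/mixed setting, handling the expectations over $\vals\sim\dists$ carefully. The overall structure is to bound $\Ex[\vals]{\sum_{i\in\notgood_1(\vals)}\budgeti(\vali)}$ and $\Ex[\vals]{\sum_{i\in\notgood_2(\vals)}\budgeti(\vali)}$ separately, then combine with $\Ex[\vals]{\sum_{i\in\notgood_3(\vals)}\budgeti(\vali)}$ via the union bound $\notgood(\vals)\subseteq\notgood_1(\vals)\cup\notgood_2(\vals)\cup\notgood_3(\vals)$.

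For $\notgood_1(\vals)$, I would use the defining inequality $\budgeti(\vali)<\gamma\sum_{j\in J_i(\vali)}\rpricej\cdot q_{ij}(\vali)$, enlarge the sum to all $i\in[n]$ (all terms are nonnegative), swap the order of summation, and take expectation over $\vals$:
\begin{align*}
\Ex[\vals]{\sum_{i\in\notgood_1(\vals)}\budgeti(\vali)}
&<\gamma\sum_{i=1}^n\Ex[\vali\sim\disti]{\sum_{j\in J_i(\vali)}\rpricej\cdot q_{ij}(\vali)}
\le\gamma\sum_{j=1}^m\rpricej\sum_{i=1}^n\Ex[\vali]{q_{ij}(\vali)}.
\end{align*}
The key step is recognizing that, by the tower property and the definition $q_{ij}(\vali)=\Ex[\valsmi]{q_{ij}(\vals)}$, we have $\sum_{i=1}^n\Ex[\vali]{q_{ij}(\vali)}=\Ex[\vals]{\sum_{i=1}^n q_{ij}(\vals)}\le 1$ (since the allocation per item is feasible on every realization). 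This yields the bound $\gamma\sum_j\rpricej=\gamma\alpha\cdot\revenue(\strats)$.

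For $\notgood_2(\vals)$, each agent in $\notgood_2(\vals)$ satisfies $\budgeti(\vali)<\sum_j\rpricej/\ncopy$, and there are at most $n$ such agents, so
\[
\Ex[\vals]{\sum_{i\in\notgood_2(\vals)}\budgeti(\vali)}<n\cdot\sum_{j=1}^m\frac{\rpricej}{\ncopy}=\frac{n}{\ncopy}\cdot\alpha\cdot\revenue(\strats).
\]
Summing the three bounds and using $\notgood(\vals)\subseteq\notgood_1(\vals)\cup\notgood_2(\vals)\cup\notgood_3(\vals)$ gives the claim.

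The only subtlety, and the one point I would double-check, is the exchange of summation/expectation in the $\notgood_1$ bound: that replacing the random set $\notgood_1(\vals)$ by the full index set is valid (it is, since the integrand is nonnegative), and that $q_{ij}(\vali)$ truly averages to $\Ex[\vals]{q_{ij}(\vals)}$ so that feasibility $\sum_i q_{ij}(\vals)\le1$ transfers. Everything else is a direct translation of the non-Bayesian argument; no new ideas are needed beyond being careful with where the expectation sits relative to the indicator defining membership in $\notgood_1$ or $\notgood_2$.
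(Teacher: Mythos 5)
Your proof is correct and follows essentially the same route as the paper: bound the $\notgood_1$ and $\notgood_2$ contributions separately (using the defining inequalities, nonnegativity to enlarge index sets, the tower property to pass from $q_{ij}(\vali)$ to $q_{ij}(\vals)$, and per-realization feasibility $\sum_i q_{ij}(\vals)\le 1$), then combine via $\notgood(\vals)\subseteq\notgood_1(\vals)\cup\notgood_2(\vals)\cup\notgood_3(\vals)$. The subtlety you flag about where the expectation sits relative to the membership indicator is indeed the only nontrivial point, and the paper handles it in the same way.
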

\begin{proof}
We first consider agents in $\notgood_1$ and obtain
\begin{align*}
\Ex[\vals]{
\sum_{i \in \notgood_1(\vals)}\budgeti(\vali)} &< 
\Ex[\vals]{\gamma\sum_{i \in \notgood_1(\vals)}\sum_{j\in J_i(\vali)} \rpricej\cdot q_{ij}(\vali)} 
\le 
\Ex[\vals]{\gamma\sum_i\sum_{j \in J_i(\vali)}\rpricej\cdot q_{ij}(\vali)} \\
&\le \Ex[\vals]{\gamma \sum_j \rpricej\sum_{i:j \in J_i(\vali)}q_{ij}(\vals)} 
\le \gamma \sum_j \rpricej \leq \gamma \cdot \alpha \cdot \revenue(\strats),
\end{align*}
where the second to last inequality follows from the fact that $\sum_i q_{ij}(\vals) \le 1$ since we have a valid fractional allocation.
Next, we consider agents in $\notgood_2$ and obtain
$$
\Ex[\vals]{\sum_{i \in \notgood_2(\vals)}\budgeti(\vali)} < \Ex[\vals]{\sum_{i \in \notgood_2(\vals)}\sum_{j}\frac{\rpricej}{\ncopy}} 
\le \frac{n}{\ncopy}\sum_j\rpricej \leq \frac{n}{\ncopy} \cdot \alpha \cdot \revenue(\strats).
$$
Combining these bounds for agents in $\notgood_1$ and $\notgood_2$ we have
\begin{align*}
\Ex[\vals]{\sum_{i \in \notgood(\vals)}\budgeti(\vali)} &\le
\Ex[\vals]{\sum_{i \in \notgood_1(\vals)}\budgeti(\vali) + \sum_{i \in \notgood_2(\vals)}\budgeti(\vali) + \sum_{i \in \notgood_3(\vals)}\budgeti(\vali)}\\
&\le \alpha \cdot \left(\gamma+\frac{n}{\ncopy}\right) \cdot \revenue(\strats) + \Ex[\vals]{\sum_{i \in \notgood_3(\vals)}\budgeti(\vali)}.
\end{align*}
\end{proof}
To achieve our result, we consider two main ideas for player deviations in set $\good(\vals)$ (each idea
actually consists of two parts). The first idea is to use the fractional solution to the LP as guidance to claim that players can extract a large amount of
value relative to the optimal solution.  However, for players to deviate, we must first round the fractional solution $\oallocs$
into an integral solution (here, by integral, we mean a multiple of $\frac{1}{\ncopy}$ since this represents the fraction
of shares a player receives out of $\ncopy$ copies).  Define the first LP deviation (integral part) to be $\bids_1^{\infloor{i}}=(\pbidi,\bidsmi)$,
where in $\pbidi$ buyer $i$ bids on a random $\intapprox{\oalloc_{ij}}$-fraction of each item $j\in J_i(\vali)$ with price $\rpricej$
(here, $\intapprox{\oalloc}=\frac{\lfloor\oalloc\cdot\ncopy\rfloor}{\ncopy}$).  Define the second LP deviation (fractional part) to be
$\bids_1^{\{i\}}=(\pbidi,\bidsmi)$, where in $\pbidi$ buyer $i$ bids on a random $\fracapprox{\oalloc_{ij}}$-fraction
of each item $j\in J_i(\vali)$ with price $\rpricej$ (here, $\fracapprox{\oalloc}=\frac{1}{\ncopy}$ if $\oalloc>0$, and
$\fracapprox{\oalloc}=0$ otherwise).  We note that both LP deviations $\bids_1^{\infloor{i}}$ and $\bids_1^{\{i\}}$ are feasible,
since $\vij{} \geq \rpricej$ for every $j \in J_i(\vali)$, and $\sum_{j}\vij{}\cdot\oalloc_{ij} \le \budgeti(\vali)$ as $\oallocs$ is a solution to LP
along with $\good(\vals) \subseteq \good_2(\vals)$.  Moreover, for any $\oalloc_{ij}(\vals)$,
we have $\intapprox{\oalloc_{ij}(\vals)} + \fracapprox{\oalloc_{ij}(\vals)} \ge \oalloc_{ij}(\vals)$.
\begin{lem}[LP deviations]
\label{lem:first_deviation_bayesian}
Buyers in $\good$ at equilibrium $\strats$ derive large value:
\begin{align*}
\Ex[\vals]{\sum_{i \in \good(\vals)}\sum_{j}\vij{}\cdot q_{ij}(\vali)} &\ge
\left(\frac{1}{2}-\frac{1}{2\alpha}\right)\left(\opt - \alpha\left(1+\gamma+\frac{n}{\ncopy}\right)\revenue(\strats) - \Ex[\vals]{\sum_{i \in \notgood_3(\vals)}\budgeti(\vali)}\right).
\end{align*}
\end{lem}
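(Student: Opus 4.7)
The plan is to follow the proof of Lemma~\ref{lem:first_deviation} essentially verbatim, while keeping careful track of where expectations over the valuation profile $\vals\sim\dists$ enter. The crucial observation that makes the pure-case argument port over is that both of the LP deviations $\bids_1^{\infloor{i}}$ and $\bids_1^{\{i\}}$ depend only on the private information $\vali$ (through $\oalloci(\vali)$ and $J_i(\vali)$), so they constitute valid strategies in the Bayesian game. Moreover, membership in $\good(\vals)$ is determined coordinate-wise by $\vali$ alone, so quantities summed over $\good(\vals)$ decouple nicely under expectation.

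First, for each fixed realization $\vali$ such that $i\in\good(\vals)$, I would apply the Bayesian Nash equilibrium condition twice: once with the integral deviation $\bids_1^{\infloor{i}}$ and once with the fractional deviation $\bids_1^{\{i\}}$. This yields, using $\vali(\alloci)\ge\utili$ and Claim~\ref{cl:copies_bayesian} (where the expected number of shares won depends on the price distribution $\pricedists$ which is fixed by $\strats$),
\begin{align*}
\sum_j \vij{}\cdot q_{ij}(\vali) &\ge \Ex[\valsmi\sim\distsmi]{\utili\bigl(\bids_1^{\infloor{i}}\bigr)}
\ge \sum_{j\in J_i(\vali)}\bigl(1-\tfrac{1}{\alpha}\bigr)\intapprox{\oalloc_{ij}(\vali)}\bigl(\vij{}-\rpricej\bigr),
\end{align*}
and analogously with $\fracapprox{\cdot}$ for the fractional deviation. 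Summing over $i\in\good(\vals)$, taking expectation over $\vals\sim\dists$, and combining the two inequalities using $\intapprox{\oalloc}+\fracapprox{\oalloc}\ge\oalloc$ gives
\begin{equation*}
2\Ex[\vals]{\sum_{i\in\good(\vals)}\sum_j \vij{}\cdot q_{ij}(\vali)}
\ge \bigl(1-\tfrac{1}{\alpha}\bigr)\Ex[\vals]{\sum_{i\in\good(\vals)}\sum_{j\in J_i(\vali)}\oalloc_{ij}(\vali)\bigl(\vij{}-\rpricej\bigr)}.
\end{equation*}

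Next, I would manipulate the RHS in exactly the same way as in the pure case. Splitting $\sum_{i,j}\vij{}\oalloc_{ij}$ into contributions from $\good$ vs. $\notgood$ and from $J_i(\vali)$ vs. its complement, using the LP budget constraint $\sum_j \vij{}\oalloc_{ij}(\vali)\le\budgeti(\vali)$ to bound the $\notgood$ contribution and using $\vij{}\le\rpricej$ for $j\notin J_i(\vali)$ to swap values for prices, I obtain
\begin{equation*}
\Ex[\vals]{\sum_{i\in\good(\vals)}\sum_{j\in J_i(\vali)}\vij{}\cdot\oalloc_{ij}(\vali)}
\ge \opt - \Ex[\vals]{\sum_{i\in\notgood(\vals)}\budgeti(\vali)} - \Ex[\vals]{\sum_{i\in\good(\vals)}\sum_{j\notin J_i(\vali)}\rpricej\cdot\oalloc_{ij}(\vali)}.
\end{equation*}
The two $\rpricej\oalloc_{ij}(\vali)$ terms (from $J_i$ and its complement) then merge into $\Ex{\sum_i\sum_j \rpricej \oalloc_{ij}(\vali)}\le\sum_j\rpricej=\alpha\cdot\revenue(\strats)$ via the LP capacity constraint $\sum_i\oalloc_{ij}\le 1$. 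Finally, I substitute Claim~\ref{cl:notgood_budgets_bayesian} to replace $\Ex[\vals]{\sum_{i\in\notgood(\vals)}\budgeti(\vali)}$ by $\alpha(\gamma+n/\ncopy)\revenue(\strats)+\Ex[\vals]{\sum_{i\in\notgood_3(\vals)}\budgeti(\vali)}$, and divide by $2$ to obtain the claimed bound.

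The main obstacle — and really the only nontrivial point — is verifying that the LP deviations are still budget-feasible pointwise on $\{i\in\good(\vals)\}$: one needs $\good(\vals)\subseteq\good_2(\vals)$ so that bidding $\rpricej/\ncopy$ on every share of every item remains within budget $\budgeti(\vali)$, and one needs $\sum_j\vij{}\oalloc_{ij}(\vali)\le\budgeti(\vali)$, which follows because $\oalloci(\vali)$ is the \emph{average} over $\valsmi$ of LP-feasible allocations under budget $\budgeti(\vali)$ (a constraint that does not depend on $\valsmi$). Once these two feasibility checks are in place, the entire calculation of Lemma~\ref{lem:first_deviation} carries through term-by-term under $\Ex[\vals]{\cdot}$.
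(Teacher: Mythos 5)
Your proposal is correct and follows the paper's own proof essentially verbatim: you apply the Bayesian Nash condition to the two LP deviations (both of which depend only on $\vali$, hence are valid Bayesian strategies), combine the integral and fractional bounds, and push the pure-case algebra through under $\Ex[\vals]{\cdot}$, using the tower property to pass between $\oalloc_{ij}(\vali)$ and $\oalloc_{ij}(\vals)$ and invoking Claim~\ref{cl:notgood_budgets_bayesian}. Your closing feasibility check — that $\oalloci(\vali)$ is budget-feasible because it averages budget-feasible allocations under the $\valsmi$-independent constraint $\budgeti(\vali)$, and that $\good(\vals)\subseteq\good_2(\vals)$ handles the bid-per-share constraint — is exactly the observation the paper makes before the lemma.
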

\begin{proof} For the integral part of the LP deviation, since $\Ex[\bids\sim\strats]{\vali(\bids)}\ge\Ex[\bids\sim\strats]{\utili(\bids)}$ and
$\strats$ is a Bayesian Nash equilibrium, we have:
\begin{align}
\label{eq:nash_first_deviation_integral_bayesian}
\Ex[\vals]{\sum_{i \in \good(\vals)}\sum_{j}\vij{}\cdot q_{ij}(\vali)}
&\ge 
\sum_{i}\Ex[\vali]{\ind[i\in\good]{\vali}\Ex[\bids\sim\strats(\vals)]{\utili(\bids)}}\nonumber\\
& \ge \sum_{i}\Ex[\vali]{\ind[i\in\good]{\vali}\Ex[\bidsmi\sim\stratsmi(\valsmi)]{\utili\left(\bids_1^{\infloor{i}}\right)}}
\nonumber\\
&\ge \sum_{i}\Ex[\vali]{\ind[i\in\good]{\vali}\sum_{j \in J_i(\vali)}\left(1-\frac{1}{\alpha}\right)\cdot\intapprox{\oalloc_{ij}(\vali)} \left(\vij{} - \rpricej\right)},
\end{align}
where to derive the last inequality we use Claim~\ref{cl:copies_bayesian}. Similarly, for the fractional part of the LP deviation we have:
\begin{align}
\label{eq:nash_first_deviation_fractional_bayesian}
\Ex[\vals]{\sum_{i \in \good(\vals)}\sum_{j}\vij{}\cdot q_{ij}(\vali)}
&\ge
\sum_{i}\Ex[\vali]{\ind[i\in\good]{\vali}\Ex[\bids\sim\strats(\vals)]{\utili(\bids)}}\nonumber\\
& \ge \sum_{i}\Ex[\vali]{\ind[i\in\good]{\vali}\Ex[\bidsmi\sim\stratsmi(\valsmi)]{\utili\left(\bids_1^{\infloor{i}}\right)}}\nonumber\\
&\ge \sum_{i}\Ex[\vali]{\ind[i\in\good]{\vali}\sum_{j \in J_i(\vali)}\left(1-\frac{1}{\alpha}\right)\cdot\fracapprox{\oalloc_{ij}(\vali)} \left(\vij{} - \rpricej\right)},
\end{align}
Combining Equation~\eqref{eq:nash_first_deviation_integral_bayesian} and Equation~\eqref{eq:nash_first_deviation_fractional_bayesian} we get
\begin{align}
\label{eq:nash_first_deviation_bayesian}
2\Ex[\vals]{\sum_{i \in \good(\vals)}\sum_{j}\vij{}\cdot q_{ij}(\vali)} 
&\ge
\sum_{i}\Ex[\vali]{\ind[i\in\good]{\vali}\sum_{j \in J_i(\vali)}\left(1-\frac{1}{\alpha}\right) \cdot\left(\intapprox{\oalloc_{ij}(\vali)}+\fracapprox{\oalloc_{ij}(\vali)}\right)\left(\vij{} - \rpricej\right)}\nonumber\\
&\ge \sum_{i}\Ex[\vali]{\ind[i\in\good]{\vali}\sum_{j \in J_i(\vali)}\left(1-\frac{1}{\alpha}\right)\oalloc_{ij}(\vali)\left(\vij{} - \rpricej\right)}\nonumber\\
& = \left(1-\frac{1}{\alpha}\right)\left(\Ex[\vals]{\sum_{i \in\good(\vals)}\sum_{j \in J_i(\vali)} \vij{}\cdot\oalloc_{ij}(\vals) -
\sum_{i \in\good(\vals)}\sum_{j \in J_i(\vali)}\rpricej\cdot\oalloc_{ij}(\vals)}\right).
\end{align}
We further estimate
\begin{align*}
\Ex[\vals]{\sum_{i \in \good(\vals)}\sum_{j\in J_i(\vali)}\vij{}\cdot\oalloc_{ij}(\vals)} &=
\Ex[\vals]{\sum_{i,j}\vij{}\cdot\oalloc_{ij}(\vals) - \sum_{i\not\in\good(\vals)}\sum_{j}\vij{}\cdot\oalloc_{ij}(\vals) - \sum_{i\in\good(\vals)}\sum_{j \not\in J_i(\vali)}\vij{}\cdot\oalloc_{ij}(\vals)}\\
&\ge 
\Ex[\vals]{\sum_{i,j}\vij{}\cdot\oalloc_{ij}(\vals) - \sum_{i\not\in\good(\vals)}\budgeti(\vali) - \sum_{i \in \good(\vals)}\sum_{j \not\in J_i(\vali)}\rpricej\cdot \oalloc_{ij}(\vals)},
\end{align*}
where in the last inequality we used the condition from the LP that $\sum_{j}\vij{}\cdot\oalloc_{ij}(\vals)\le\budgeti(\vali)$ and that $\vij{}\le\rpricej$ for each $j\not\in J_i(\vali)$.
We substitute the last estimate into Equation~\eqref{eq:nash_first_deviation_bayesian} and obtain a lower bound on $2\Ex[\vals]{\sum_{i \in \good(\vals)}\sum_{j}\vij{}\cdot q_{ij}(\vali)}$:
\begin{align*}
\left(1-\frac{1}{\alpha}\right) & \Ex[\vals]{
\sum_{i,j}\vij{}\cdot\oalloc_{ij}(\vals) - \sum_{i\not\in\good(\vals)}\budgeti(\vali) - \sum_{i \in \good(\vals)}\sum_{j \not\in J_i(\vali)}\rpricej\cdot \oalloc_{ij}(\vals) -
\sum_{i \in\good(\vals)}\sum_{j \in J_i(\vali)}\rpricej\cdot\oalloc_{ij}(\vals)}\\
&=
\left(1-\frac{1}{\alpha}\right)\Ex[\vals]{
\sum_{i,j}\vij{}\cdot\oalloc_{ij}(\vals) - \sum_{i \in \good(\vals)}\sum_{j}\rpricej\cdot \oalloc_{ij}(\vals) - \sum_{i\not\in\good(\vals)}\budgeti(\vali)}\\
&\ge
\left(1-\frac{1}{\alpha}\right)\left(\opt - \alpha\left(1+\gamma+\frac{n}{\ncopy}\right)\revenue(\strats) - \Ex[\vals]{\sum_{i \in \notgood_3(\vals)}\budgeti(\vali)}\right),
\end{align*}
where the last inequality follows from the LP constraint that $\sum_i \oalloc_{ij}(\vals) \leq 1$ for each $j$,
the observation $\sum_{j}\rpricej= \alpha\cdot\revenue(\strats)$, and Claim~\ref{cl:notgood_budgets_bayesian}.
\end{proof}
We now turn to our second type of deviation, but we need to further restrict the set of items that players bid on.
In particular, we let $\Gamma_{i}(\vali) = \left\{j ~~\middle|~~ q_{ij}(\vali) \leq \frac{1}{\gamma}\right\}$, and
define $G_i(\vali) = J_i(\vali) \cap \Gamma_i(\vali)$.  We now define the $\gamma$-boosting deviation (integral part)
as $\bids_2^{\infloor{i}}=(\pbidi,\bidsmi)$, where in $\pbidi$ buyer $i$ bids on a random
$\intapprox{\gamma\cdot q_{ij}(\vali)}$-fraction of each item $j\in G_i(\vali)$ with price $\rpricej$, where $\gamma > 1$
is a constant to be determined later. Note that each $\bids_2^{\infloor{i}}$ deviation for every $i\in\good(\vals)$ is feasible
since $\good(\vals) \subseteq \good_1(\vals)$. Similarly, we define the fractional part of the $\gamma$-boosting deviation
as $\bids_2^{\{i\}}$, which is also a feasible deviation since $\good(\vals) \subseteq \good_2(\vals)$.  Also, since players
bid on items in $G_i(\vals) \subseteq \Gamma_i(\vals)$, we have $\intapprox{\gamma \cdot q_{ij}(\vali)} \leq 1$ (we also have
$\fracapprox{\gamma \cdot q_{ij}(\vali)} \leq 1$, which holds for all items by definition).
\begin{lem}[$\gamma$-boosting deviation]
The value derived by buyers in $\good$ is comparable to the Liquid Welfare obtained at equilibrium:
\label{lem:second_deviation_bayesian}
\bee
\left(1-\frac{2\alpha}{\gamma(\alpha-1)}\right)\Ex[\vals]{\sum_{i\in\good(\vals)} \sum_j \vij{}\cdot q_{ij}(\vali)}
\le
\alpha \cdot \revenue(\strats) + 2 \cdot \LW(\strats) -
\frac{1}{\gamma}\Ex[\vals]{\sum_{i \in \notgood_3(\vals)}B_i(\vali)}.
\eee
\end{lem}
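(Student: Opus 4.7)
The plan is to mirror the proof of Lemma~\ref{lem:second_deviation} in the Bayesian setting, being careful with expectations over $\vals \sim \dists$ and exploiting the fact that membership $i \in \good(\vals)$ depends only on $\vali$ (so the indicator $\ind[i \in \good]{\vali}$ factors cleanly inside expectations).

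First, I would apply the Bayesian Nash equilibrium condition with each of the $\gamma$-boosting deviations $\bids_2^{\infloor{i}}$ and $\bids_2^{\{i\}}$. For every $\vali$ such that $i \in \good(\vals)$, Claim~\ref{cl:copies_bayesian} gives a lower bound on the expected utility of the deviation in terms of $\intapprox{\gamma q_{ij}(\vali)}$ and $\fracapprox{\gamma q_{ij}(\vali)}$, respectively. Summing the two deviations and using $\intapprox{x} + \fracapprox{x} \ge x$ yields
\begin{equation*}
2 \Ex[\vals]{\sum_{i \in \good(\vals)}\sum_j \vij q_{ij}(\vali)} \ge \left(1-\tfrac{1}{\alpha}\right)\gamma \Ex[\vals]{\sum_{i \in \good(\vals)}\sum_{j \in G_i(\vali)}(\vij - \rpricej)q_{ij}(\vali)}.
\end{equation*}

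Second, I would unfold the right-hand side exactly as in the non-Bayesian proof. Splitting $j \notin G_i$ into $j \notin J_i$ (where $\vij < \rpricej$) and $j \notin \Gamma_i$ (where $q_{ij}(\vali) > 1/\gamma$), and using $\sum_{i \in \good(\vals)} q_{ij}(\vals) \le 1$ together with $\sum_j \rpricej = \alpha \revenue(\strats)$, yields
\begin{equation*}
\Ex[\vals]{\sum_{i\in\good(\vals)}\sum_{j\in G_i(\vali)}(\vij-\rpricej)q_{ij}(\vali)} \ge \Ex[\vals]{\sum_{i\in\good(\vals)}\sum_j \vij q_{ij}(\vali)} - \alpha\cdot\revenue(\strats) - \Ex[\vals]{\sum_{i\in\good(\vals)}\sum_{j\notin\Gamma_i(\vali)}\vij q_{ij}(\vali)}.
\end{equation*}

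Third, the key new step is the Bayesian analogue of the Liquid Welfare lower bound~\eqref{eq:welfare_budgets}. For each fixed $\vals$ the earlier argument gives $\LW(\vals) \ge \sum_i Q_i(\vals)\budgeti(\vali) + \sum_{i,j}\max\{0, q_{ij}(\vals) - Q_i(\vals)\}\vij$; taking expectations over $\valsmi$ and applying Jensen's inequality to the convex function $\max\{0,\cdot\}$ gives $\Ex[\valsmi]{\max\{0, q_{ij}(\vals)-Q_i(\vals)\}} \ge \max\{0, q_{ij}(\vali)-Q_i(\vali)\}$. Restricting the budget term to $i \in \notgood_3(\vals)$ (where $Q_i(\vali) > 1/(2\gamma)$) and the value term to $i \in \good(\vals)$ with $j \notin \Gamma_i(\vali)$ (where $q_{ij}(\vali) > 1/\gamma \ge 2 Q_i(\vali)$) yields
\begin{equation*}
\LW(\strats) \ge \frac{1}{2\gamma}\Ex[\vals]{\sum_{i\in\notgood_3(\vals)}\budgeti(\vali)} + \frac{1}{2}\Ex[\vals]{\sum_{i\in\good(\vals)}\sum_{j\notin\Gamma_i(\vali)}\vij q_{ij}(\vali)}.
\end{equation*}

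Finally, I would substitute this bound into the previous inequality, divide by $(1-\tfrac{1}{\alpha})\gamma$, and rearrange, exactly as in the proof of Lemma~\ref{lem:second_deviation}. The main obstacle will be the third step: verifying that the Liquid Welfare inequality survives the two-layer conditioning on $\vali$ versus $\vals$, and that Jensen's inequality points in the right direction so that the budget and value terms that need lower-bounding still appear with the correct sign. Apart from this bookkeeping, the structural argument is identical to the full-information case.
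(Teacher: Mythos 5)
Your proposal is correct and follows essentially the same route as the paper. The only cosmetic difference is in the Liquid Welfare step: you invoke Jensen's inequality for the convex function $\max\{0,\cdot\}$ to pass the expectation over $\valsmi$ inside, giving $\Ex[\valsmi]{\max\{0,q_{ij}(\vals)-Q_i(\vals)\}}\ge\max\{0,q_{ij}(\vali)-Q_i(\vali)\}$, whereas the paper achieves the same pointwise bound more elementarily by dropping the (nonnegative) $j\in\Gamma_i(\vali)$ terms and using $\max\{0,x\}\ge x$ plus linearity of expectation for $j\notin\Gamma_i(\vali)$. Both yield the identical conclusion, and your worry about the direction of Jensen is unfounded since convexity gives exactly the inequality you need; the subsequent restriction to $i\in\notgood_3(\vals)$ for the budget term and to $i\in\good(\vals)$, $j\notin\Gamma_i(\vali)$ for the value term (where $Q_i(\vali)\le\tfrac{1}{2\gamma}\le\tfrac{q_{ij}(\vali)}{2}$) is exactly as in the paper. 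One small point worth tightening in a write-up: in your second step you cite $\sum_{i\in\good(\vals)}q_{ij}(\vals)\le1$, but the quantity you are actually bounding involves $q_{ij}(\vali)$; the justification, as the paper notes, is that $\Ex[\vals]{\sum_i q_{ij}(\vali)}=\Ex[\vals]{\sum_i q_{ij}(\vals)}\le1$ by linearity, since $q_{ij}(\vali)=\Ex[\valsmi]{q_{ij}(\vals)}$.
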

\begin{proof}
For the integral part of the $\gamma$-boosting deviation, we can now again obtain bounds via the Bayesian Nash equilibrium condition and Claim~\ref{cl:copies_bayesian}:
\begin{align*}
\Ex[\vals]{\sum_{i \in \good(\vals)}\sum_{j}\vij{}\cdot q_{ij}(\vali)}
&\ge
\sum_{i}\Ex[\vali]{\ind[i\in\good]{\vali}\Ex[\bids\sim\strats(\vals)]{\utili(\bids)}}
\\
& \ge \sum_{i}\Ex[\vali]{\ind[i\in\good]{\vali}\Ex[\bidsmi\sim\stratsmi(\valsmi)]{\utili\left(\bids_2^{\infloor{i}}\right)}}
\\
&\ge \sum_{i}\Ex[\vali]{\ind[i\in\good]{\vali}\sum_{j \in G_i(\vali)}\left(1-\frac{1}{\alpha}\right)
\intapprox{\gamma \cdot q_{ij}(\vali)}\left(\vij{} - \rpricej\right)}.
\end{align*}
Similarly, for the fractional part of the $\gamma$-boosting deviation we get:
\begin{align*}
\Ex[\vals]{\sum_{i \in \good(\vals)}\sum_{j}\vij{}\cdot q_{ij}(\vali)}
&\ge \sum_{i}\Ex[\vali]{\ind[i\in\good]{\vali}\Ex[\bidsmi\sim\stratsmi(\valsmi)]{\utili\left(\bids_2^{\infloor{i}}\right)}}
\\
&\ge \sum_{i}\Ex[\vali]{\ind[i\in\good]{\vali}\sum_{j \in G_i(\vali)}\left(1-\frac{1}{\alpha}\right)
\fracapprox{\gamma \cdot q_{ij}(\vali)}\left(\vij{} - \rpricej\right)}.
\end{align*}
Together these two deviations give us
\begin{align}
\label{eq:nash_second_deviation_bayesian}
&2\Ex[\vals]{\sum_{i \in \good(\vals)}\sum_{j}\vij{}\cdot q_{ij}(\vali)}
\ge \sum_{i}\Ex[\vali]{\ind[i\in\good]{\vali}\sum_{j \in G_i(\vali)}\left(1-\frac{1}{\alpha}\right)
\gamma \cdot q_{ij}(\vali)\left(\vij{} - \rpricej\right)}.
\end{align}
We further estimate the term $\sum_{i}\Ex[\vali]{\ind[i\in\good]{\vali}\sum_{j \in G_i(\vali)}q_{ij}(\vali)\left(\vij{} - \rpricej\right)}$ on the
RHS of Equation~\eqref{eq:nash_second_deviation_bayesian}, which can be rewritten as:
\begin{align}
&\Ex[\vals]{\sum_{i \in\good(\vals)} \sum_{j \in G_i(\vali)}\left(\vij{}-\rpricej\right)\cdot q_{ij}(\vali)}\nonumber \\
&=\Ex[\vals]{\sum_{i\in\good(\vals)} \sum_j \vij{}\cdot q_{ij}(\vali) - \sum_{i \in \good(\vals)} \sum_{j \not\in G_i(\vali)} \vij{}\cdot q_{ij}(\vali)-
\sum_{i\in\good(\vals)} \sum_{j \in G_i(\vali)}\rpricej\cdot q_{ij}(\vali)}\nonumber\\
&\ge
\Ex[\vals]{\sum_{i\in\good(\vals)} \sum_j \vij{}\cdot q_{ij}(\vali) - \sum_{i \in \good(\vals)} \sum_{j \not\in \Gamma_i(\vali)}\vij{}\cdot q_{ij}(\vali)-
\sum_{i\in\good(\vals)} \sum_{j}\rpricej\cdot q_{ij}(\vali)}\nonumber \\
&\geq
\Ex[\vals]{\sum_{i\in\good(\vals)} \sum_j \vij{}\cdot q_{ij}(\vali)} -
\Ex[\vals]{\sum_{i\in\good(\vals)} \sum_{j \not\in \Gamma_i(\vali)} \vij{}\cdot q_{ij}(\vali)} - \alpha \cdot \revenue(\strats),
\label{eq:first_term_boosting_bayesian}
\end{align}
where the first inequality holds as $\sum_{j \not\in G_i(\vali)}\vij{}q_{ij}(\vali) \leq
\sum_{j \not\in J_i(\vali)}\vij{}q_{ij}(\vali) + \sum_{j \not\in \Gamma_i(\vali)}\vij{}q_{ij}(\vali)$
and $\vij{} < \rpricej$ for every $j\notin J_i(\vali)$, and the last inequality holds as
$\Ex[\vals]{\sum_i q_{ij}(\vali)} = \Ex[\vals]{\sum_i q_{ij}(\vals)} \leq~1$.  Our next goal will be to bound the term
$\Ex[\vals]{\sum_{i\in\good(\vals)} \sum_{j \not\in \Gamma_i(\vali)} \vij{}\cdot q_{ij}(\vali)}$
on the RHS of Equation~\eqref{eq:first_term_boosting_bayesian}.
Before that we need to do some preparations.  To ease the notations we denote by $\sharej$ the $\ell^{th}$ share of item $j$.
We observe that the expected Liquid Welfare at equilibrium can be written as
$\LW(\strats) = \sum_{i}\Ex[\vals]{\Ex[\bids\sim\strats(\vals)]{\min\{\vali(\alloci),\budgeti(\vali)\}}}$.
We let $\LW(\strats(\vals)) = \sum_{i}\Ex[\bids\sim\strats(\vals)]{\min{\vali(\alloci),\budgeti(\vali)}}$, so that we have
$\LW(\strats) = \Ex[\vals]{\LW(\strats(\vals))}$.  For any fixed valuation profile $\vals$, $\LW(\strats(\vals))$ is given by:
\begin{align*}
& \sum_{i} \prob[\bids\sim\strats(\vals)]{\vali(\alloci)>\budgeti(\vali)}\cdot\budgeti(\vali) 
+\sum_{i,j}\sum_{\ell=1}^{\ncopy}\prob[\bids\sim\strats(\vals)]{\{\vali(\alloci)\le\budgeti(\vali)\}\land\{i \text{ wins }\sharej\}}\cdot\frac{\vij{}}{\ncopy}\\
&=\sum_{i} Q_i(\vals)\cdot\budgeti(\vali)+
\sum_{i,j}\frac{\vij{}}{\ncopy}\left(
\sum_{\ell=1}^{\ncopy}\prob{i \text{ wins }\sharej}-\sum_{\ell=1}^{\ncopy}\prob{\{\vali(\alloci)>\budgeti(\vali)\}\land\{i \text{ wins }\sharej\}}\right)\\
&=\sum_{i} Q_i(\vals)\cdot\budgeti(\vali)+ \sum_{i,j}\vij{}\cdot
\max\left\{0~,~q_{ij}(\vals)-\frac{1}{\ncopy}\sum_{\ell=1}^{\ncopy}\prob{\{\vali(\alloci)>\budgeti(\vali)\}\land\{i \text{ wins }\sharej\}}\right\} \\
&\ge \sum_{i} Q_i(\vals)\cdot\budgeti(\vali)+ \sum_{i,j}\vij{}\cdot\max\left\{0,q_{ij}(\vals)-\prob{\vali(\alloci)>\budgeti(\vali)}\right\} \\
&= \sum_{i} Q_i(\vals)\cdot\budgeti(\vali) +\sum_{i,j}\max\left\{0,q_{ij}(\vals)-Q_i(\vals)\right\}\cdot\vij{},
\end{align*}
where the second equality holds true as the expression inside the $\max$ cannot be negative and $q_{ij}(\vals)=\frac{1}{\ncopy}\sum_{\ell=1}^{\ncopy}\prob{i \text{ wins }\sharej}$
by definition of $q_{ij}(\vals)$, the first inequality holds since
$\prob{\vali(\alloci)>\budgeti(\vali)}\ge\prob{\{\vali(\alloci)>\budgeti(\vali)\}\land\{i \text{ wins }\sharej\}}$, and the last equality holds by definition of $Q_i(\vals)$.
Taking expectation over both sides, we have:
\begin{align}
\label{eq:welfare_budgets_bayesian}
\LW(\strats) &\geq \Ex[\vals]{\sum_{i \in \notgood_3(\vals)} Q_i(\vals)\cdot\budgeti(\vali)} +
\Ex[\vals]{\sum_{i\in\good(\vals)}\sum_{j}\max\left\{0,q_{ij}(\vals)-Q_i(\vals)\right\}\cdot\vij{}} \nonumber\\
&\geq \sum_{i} \Ex[\vali]{\ind[i\in\notgood_3]{\vali}Q_i(\vali) \cdot \budgeti(\vali)} +
\sum_{i} \Ex[\vali]{\ind[i \in \good]{\vali}\sum_{j\not\in\Gamma_i(\vali)}(q_{ij}(\vali) - Q_i(\vali))\cdot\vij{}} \nonumber\\
&\geq \frac{1}{2\gamma}\sum_{i} \Ex[\vali]{\ind[i\in\notgood_3]{\vali}\budgeti(\vali)} +
\frac{1}{2}\sum_{i} \Ex[\vali]{\ind[i \in \good]{\vali}\sum_{j\not\in\Gamma_i(\vali)}\vij{} \cdot q_{ij}(\vali)} \nonumber\\
&= \frac{1}{2\gamma}\Ex[\vals]{\sum_{i\in\notgood_3(\vals)}\budgeti(\vali)} +
\frac{1}{2}\Ex[\vals]{\sum_{i \in \good(\vals)}\sum_{j\not\in\Gamma_i(\vali)}\vij{} \cdot q_{ij}(\vali)},
\end{align}
where to obtain the first inequality we restrict the set of players that we sum over;
the second inequality follows from the facts that $q_{ij}(\vali) = \Ex[\valsmi]{q_{ij}(\vals)}$ and $Q_i(\vali) = \Ex[\valsmi]{Q_i(\vals)}$,
along with the facts that  $\max\{0,q_{ij}(\vals)-Q_i(\vals)\} \geq 0$ (which we apply for $j \in \Gamma_i(\vali)$) and
$\max\left\{0,q_{ij}(\vals)-Q_i(\vals)\right\} \geq q_{ij}(\vals)-Q_i(\vals)$ (which we apply for $j \not\in \Gamma_i(\vali)$);
the third inequality holds since players $i \in \notgood_3(\vals)$ have $Q_i(\vali) > \frac{1}{2\gamma}$, while for players
$i \in \good(\vals) \subseteq \good_3(\vals)$ and items $j \not\in \Gamma_i(\vali)$
we have $Q_i(\vali) \leq \frac{1}{2} \cdot \frac{1}{\gamma} \leq \frac{q_{ij}(\vali)}{2}$.
Now we rearrange terms from Equation~\eqref{eq:welfare_budgets_bayesian} to get
$\Ex[\vals]{\sum_{i \in \good(\vals)}\sum_{j \not\in \Gamma_i(\vali)}\vij{}q_{ij}(\vali)} \leq
2\cdot\LW(\strats) - \frac{1}{\gamma}\Ex[\vals]{\sum_{i \in \notgood_3(\vals)}\budgeti(\vali)}$.
Combining Equation~\eqref{eq:nash_second_deviation_bayesian} and Equation~\eqref{eq:first_term_boosting_bayesian}, we can substitute this
upper bound to get:
\begin{align*}
2 & \Ex[\vals]{\sum_{i\in\good(\vals)}\sum_{j}\vij{}\cdot q_{ij}(\vali)}\\
&\ge \left(1-\frac{1}{\alpha}\right)\gamma
\left(\Ex[\vals]{\sum_{i\in\good(\vals)} \sum_j \vij{}\cdot q_{ij}(\vali)} - \alpha \cdot \revenue(\strats) -
2\cdot\LW(\strats) + \frac{1}{\gamma}\Ex[\vals]{\sum_{i \in \notgood_3(\vals)}\budgeti(\vali)}\right).
\end{align*}
Dividing both sides by $\left(1-\frac{1}{\alpha}\right)\gamma$ and rearranging terms gives the lemma:
\bee
\left(1-\frac{2\alpha}{\gamma(\alpha-1)}\right)\Ex[\vals]{\sum_{i\in\good(\vals)} \sum_j \vij{}\cdot q_{ij}(\vali)}
\le
\alpha \cdot \revenue(\strats) + 2 \cdot \LW(\strats) -
\frac{1}{\gamma}\Ex[\vals]{\sum_{i \in \notgood_3(\vals)}B_i(\vali)}.
\eee
\end{proof}
Finally, we show that the Liquid Price of Anarchy of any Bayesian Nash equilibrium is bounded.
\begin{thmapp}[Theorem~\ref{thm:mixed_first_price_bayesian}]
In simultaneous first-price auctions with $n$ additive bidders and budgets where every item
has $\ncopy$ equal shares (copies), the Liquid Price of Anarchy of Bayesian Nash equilibria is
$O\left(1+\frac{n}{\ncopy}\right)$ (at most $51.5$, when $\ncopy\ge n$).
\end{thmapp}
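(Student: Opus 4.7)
The plan is to mirror the proof of Theorem~\ref{thm:mixed_first_price} almost verbatim, using the Bayesian versions of the two key lemmas (Lemma~\ref{lem:first_deviation_bayesian} and Lemma~\ref{lem:second_deviation_bayesian}) which have already been established in this appendix. The conceptual heavy lifting (adapting the two deviation arguments so that all expectations are taken correctly over valuation profiles, and carefully separating bidders in $\good(\vals)$ from those in $\notgood(\vals)$) is already encapsulated in those two lemmas.

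First I would chain the two lemmas together exactly as in the mixed Nash proof: the LP-deviation lemma gives a lower bound on $\Ex[\vals]{\sum_{i\in\good(\vals)}\sum_j\vij{}\cdot q_{ij}(\vali)}$ in terms of $\opt$, $\revenue(\strats)$, and $\Ex[\vals]{\sum_{i\in\notgood_3(\vals)}\budgeti(\vali)}$, while the $\gamma$-boosting lemma gives a matching upper bound on the same quantity in terms of $\revenue(\strats)$, $\LW(\strats)$ and the same budget term. Eliminating this common quantity and then using the observation $\LW(\strats)\ge\revenue(\strats)$ yields, after rearrangement, an inequality of the form
\begin{equation*}
\left(\alpha + 2 + \tfrac{1}{2}\left(1-\tfrac{1}{\alpha}-\tfrac{2}{\gamma}\right)\alpha\left(1+\gamma+\tfrac{n}{\ncopy}\right)\right)\LW(\strats) \ge \tfrac{1}{2}\left(1-\tfrac{1}{\alpha}-\tfrac{2}{\gamma}\right)\opt + C_{\alpha,\gamma}\cdot\Ex[\vals]{\sum_{i\in\notgood_3(\vals)}\budgeti(\vali)},
\end{equation*}
where $C_{\alpha,\gamma} = \frac{1}{\gamma} - \frac{1}{2}(1-\frac{1}{\alpha}-\frac{2}{\gamma})$.

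Next I would choose the same parameters $\alpha = 2.26$ and $\gamma = 7.16$ used in Theorem~\ref{thm:mixed_first_price}. A direct numerical check shows $1-\frac{1}{\alpha}-\frac{2}{\gamma} > 0$ and $C_{\alpha,\gamma}\ge 0$, so the budget term on the right-hand side may be dropped, leaving
\begin{equation*}
\opt \le \frac{2\left(\alpha + 2 + \tfrac{1}{2}(1-\tfrac{1}{\alpha}-\tfrac{2}{\gamma})\alpha(1+\gamma+\tfrac{n}{\ncopy})\right)}{1-\tfrac{1}{\alpha}-\tfrac{2}{\gamma}}\cdot\LW(\strats) = O\!\left(1+\tfrac{n}{\ncopy}\right)\cdot\LW(\strats).
\end{equation*}
Plugging in the chosen constants and $\ncopy \ge n$ yields the stated bound of $51.5$.

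The main obstacle was not in this final combination but in setting up the Bayesian lemmas themselves --- in particular, ensuring that the conditioning sets $\good(\vals)$ are determined by each $\vali$ alone (so the equilibrium deviation argument applies player-by-player when $\vali$ is conditioned upon) and that the identity $q_{ij}(\vali) = \Ex[\valsmi]{q_{ij}(\vals)}$ propagates correctly through the Liquid Welfare decomposition in Equation~\eqref{eq:welfare_budgets_bayesian}. Once the two Bayesian lemmas are in hand, the proof of Theorem~\ref{thm:mixed_first_price_bayesian} is a purely arithmetic repetition of the mixed Nash argument with the outer expectation $\Ex[\vals]{\cdot}$ inserted throughout, and the same constants carry over.
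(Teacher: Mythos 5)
Your proposal matches the paper's proof of Theorem~\ref{thm:mixed_first_price_bayesian} essentially verbatim: combine Lemma~\ref{lem:first_deviation_bayesian} and Lemma~\ref{lem:second_deviation_bayesian}, apply $\LW(\strats)\ge\revenue(\strats)$, rearrange, drop the nonnegative $\notgood_3$ budget term, and set $\alpha=2.26$, $\gamma=7.16$. This is exactly the same approach as the paper, so there is nothing further to compare.
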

\begin{proof}
We combine the bounds from Lemma~\ref{lem:first_deviation_bayesian} and Lemma~\ref{lem:second_deviation_bayesian} and obtain
\begin{align*}
\alpha \cdot \revenue(\strats) &+ 2 \cdot \LW(\strats) - \frac{1}{\gamma} \Ex[\vals]{\sum_{i \in \notgood_3(\vals)}\budgeti(\vali)}
\ge \\
&\left(1-\frac{2\alpha}{\gamma(\alpha-1)}\right)\left(\frac{1}{2}-\frac{1}{2\alpha}\right)
\left(\opt - \alpha\left(1+\gamma+\frac{n}{\ncopy}\right)\revenue(\strats) - \Ex[\vals]{\sum_{i \in \notgood_3(\vals)}\budgeti(\vali)} \right).
\end{align*}
Since $\LW(\strats)\ge\revenue(\strats)$ we further derive that
\begin{align*}
\left(\alpha + 2 + \frac{1}{2}\left(1-\frac{1}{\alpha}-\frac{2}{\gamma}\right)\alpha\left(1+\gamma+\frac{n}{\ncopy}\right)\right) \LW(\strats) &\geq \\
\frac{1}{2}\left(1-\frac{1}{\alpha}-\frac{2}{\gamma}\right)\opt &+
\left(\frac{1}{\gamma} - \frac{1}{2}\left(1-\frac{1}{\alpha}-\frac{2}{\gamma}\right)\right)\Ex[\vals]{\sum_{i \in \notgood_3(\vals)}\budgeti(\vali)}.
\end{align*}
As long as the factor in front of $\Ex[\vals]{\sum_{i \in \notgood_3(\vals)}\budgeti(\vali)}$ is nonnegative,
we have $\opt\le O\left(\frac{n}{\ncopy}\right)\cdot\LW(\strats)$ for any $1 \leq \ncopy \leq n$ for a particular choice of
parameters (e.g., $\alpha = 2.26,\gamma=7.16$).  In particular, when $\ncopy \geq n$, we have that the $\lpoa$ is at most $51.5$.
\end{proof}

\section{Tightness Results for Simple Auctions}
\label{sec:pureExamples}

In this section, we show that Theorem~\ref{thm:1paupper_app} and Theorem~\ref{thm:2paupper} are essentially tight by
giving an explicit game for which the Liquid Price of Anarchy of both first price and second price auctions is arbitrarily
close to $2$.  In fact, agents in our lower bound only have additive valuation functions.

\begin{thmapp}
There is a simultaneous second price auction game and a simultaneous first price auction game which have a Nash equilibrium $\bids$ such
that the Liquid Price of Anarchy is arbitrarily close to $2$.
\end{thmapp}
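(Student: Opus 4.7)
The plan is to exhibit a two-bidder, two-item instance with additive valuations whose Liquid Welfare at a pure Nash equilibrium is a factor $2 - O(\epsilon)$ below the optimum, thereby matching the upper bounds in Theorems~\ref{thm:1paupper_app} and~\ref{thm:2paupper}. Fix a small $\epsilon > 0$. I would take bidder~$1$ with $v_{11} = v_{12} = 1$ and budget $B_1 = 1$, and bidder~$2$ with $v_{21} = 1 - \epsilon$, $v_{22} = 0$, and budget $B_2 = 1$, together with a deterministic tie-breaking rule that, on each item, favors bidder~$1$. The optimum assigns item~$1$ to bidder~$2$ and item~$2$ to bidder~$1$, giving $\opt = (1-\epsilon) + 1 = 2 - \epsilon$.

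I would then propose the bid profile $\bids$ in which both players submit $(1-\epsilon, 0)$. Tie-breaking gives both items to bidder~$1$, who pays $1-\epsilon$ in either auction format (in the second-price case, bidder~$2$'s matching bid on item~$1$ is the second-price, and the second price on item~$2$ is $0$). Bidder~$1$'s contribution to the Liquid Welfare is $\min(v_1(\{1,2\}), B_1) = \min(2, 1) = 1$, bidder~$2$ contributes $0$, so $\LW(\bids) = 1$ and $\lpoa \ge (2-\epsilon)/1 \to 2$.

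The key step is to verify that $\bids$ is indeed a pure Nash equilibrium. For bidder~$1$, I would argue that $(1-\epsilon, 0)$ is the cheapest bid vector that still wins both items under the tie-breaking rule; lowering either coordinate forfeits an item and loses value~$1$, while raising either coordinate only inflates the payment without gaining value, so utility is pinned at $2 - (1-\epsilon) = 1 + \epsilon$. For bidder~$2$, only item~$1$ carries positive value. In the first-price format, winning item~$1$ demands $b_{21} > 1-\epsilon$, which yields utility $(1-\epsilon) - b_{21} < 0$, whereas any $b_{21} \le 1-\epsilon$ loses the tie-break. In the second-price format, the no-overbidding assumption forces $b_{21} \le v_{21} = 1-\epsilon$, so bidder~$2$ cannot strictly beat bidder~$1$ and remains locked out of item~$1$. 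The main subtlety will be this interplay between bidder~$1$'s budget (which caps their Liquid Welfare contribution at $1$ even while holding both items) and the tie-breaking rule (which lets them capture both items at the minimum winning bid), denying bidder~$2$ any share; taking $\epsilon \to 0$ yields $\lpoa \to 2$ in both auction formats.
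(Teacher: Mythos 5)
Your construction is correct and the Liquid Price of Anarchy computation goes through: with bidder~$1$ holding both items, $\LW(\bids)=\min\{2,1\}+0=1$ while $\opt=(1-\epsilon)+1=2-\epsilon$, and both equilibrium checks (bidder~$1$ cannot profitably lower either bid, bidder~$2$ cannot profitably win item~$1$) are sound for each auction format. The paper also uses a two-bidder, two-item instance in which one bidder grabs both items and has their LW contribution capped by budget, so the overall approach is the same; the difference is the mechanism that locks out the ``right'' winner. The paper makes the blocked bidder's \emph{budget} slightly too small to compete (player~$1$ has budget $10-\epsilon$ and player~$2$ bids $10-\epsilon/2 > B_1$), which works for any tie-breaking rule and is the same obstruction in both auction formats. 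You instead make the blocked bidder's \emph{value} slightly too small and lean on a tie-breaking rule that favors bidder~$1$ (plus the no-overbidding assumption in the second-price case); this is equally valid since a deterministic tie-breaking rule is part of the game, but the example only exhibits the gap for tie-breaks favoring bidder~$1$, whereas the paper's example is tie-break-independent. One small remark: for the second-price format you do not strictly need no-overbidding to close the argument — even if bidder~$2$ overbids and wins item~$1$, they pay the second price $1-\epsilon = v_{21}$ and remain indifferent, so $\bids$ is still a best response — though invoking the assumption as you do is cleaner and matches the setting of Theorem~\ref{thm:2paupper}.
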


\begin{proof}
Consider a simultaneous second price auction game with $m=2$ items and $n=2$ players, and fix any $\epsilon > 0$.  Player $1$ has a budget of
$B_1=10-\epsilon$, and a value of $10$ for item $1$ and a value of $0$ for item $2$.  Player $2$ has a budget of $B_2=10$, and a value of $10$ for both items.
The player valuations are additive, so their value for a bundle is simply the sum of the values of each item.
The optimal solution splits the two items between the two players, giving item $1$ to player $1$ and item $2$ to player $2$.
The Liquid Welfare of this solution is $\opt = \min\{v_1(\{1\}),B_1\} + \min\{v_2(\{2\}),B_2\} = 10-\epsilon + 10 = 20 - \epsilon$.

On the other hand, there exists the following Nash equilibrium $\bids$.  Suppose player $1$ bids $0$ for both items, while player $2$
bids $10-\frac{\epsilon}{2}$ for item $1$ and $\frac{\epsilon}{2}$ for item $2$.  For these bids, player $2$ wins both items, which results in
a Liquid Welfare of $\LW(\bids) = \min\{v_2(\{1,2\}),B_2\} = 10$.  To see why $\bids$ is a Nash equilibrium, observe that player
$1$ is not interested in bidding for item $2$ since their value is $0$ for the item.  Moreover, player $1$'s budget is not high enough to
outbid player $2$ for item $1$, and hence player $1$'s utility cannot be improved (even though it is $0$).  Player $2$'s utility is as high
as it can be, since player $2$ gets both items and pays $0$ for a utility of $20$.  Hence, the Liquid Price of Anarchy is given by
$\frac{\opt}{\LW(\bids)} = \frac{20-\epsilon}{10}$.

The same setup shows that the Liquid Price of Anarchy of simultaneous first price auctions is arbitrarily close to $2$.
In fact, for the same game, essentially the same Nash equilibrium $\bids$ exists for the first price auction setting.  The only difference
is that player $2$ can improve their utility by bidding less for the two items.  If we assume that players' bids must be multiples
of some fixed value, then player $2$ must bid slightly above $0$ for item $1$ and slightly above $10-\epsilon$ for item $2$,
and now player $2$ cannot improve their utility by bidding less, since doing so may result in losing one or more items.
\end{proof}

\end{document}